\newtheorem{proposition}{Proposition}
\newtheorem{lemma}[proposition]{Lemma}
\newtheorem{theorem}{Theorem}
\newtheorem{corollary}[proposition]{Corollary}
\newtheorem{remark}{Remark}
\def\squareforqed{\hbox{\rlap{$\sqcap$}$\sqcup$}}
\def\qed{\ifmmode\squareforqed\else{\unskip\nobreak\hfil
\penalty50\hskip1em\null\nobreak\hfil\squareforqed
\parfillskip=0pt\finalhyphendemerits=0\endgraf}\fi}
\def\endenv{\ifmmode\;\else{\unskip\nobreak\hfil
\penalty50\hskip1em\null\nobreak\hfil\;
\parfillskip=0pt\finalhyphendemerits=0\endgraf}\fi}
\newcounter{example}
\mathchardef\ordinarycolon\mathcode`\:
\def\vcentcolon{\mathrel{\mathop\ordinarycolon}}
\definecolor{darkblue}{RGB}{0,76,156}
\definecolor{darkkblue}{RGB}{0,0,153}
\definecolor{blue2}{RGB}{102,178,255}
\definecolor{darkred}{RGB}{195,0,0}
\newmdenv[skipabove=7pt,
skipbelow=7pt,
backgroundcolor=darkblue!15,
innerleftmargin=5pt,
innerrightmargin=5pt,
innertopmargin=5pt,
leftmargin=0cm,
rightmargin=0cm,
innerbottommargin=5pt,
linewidth=1pt]{tBox}
\newmdenv[skipabove=7pt,
skipbelow=7pt,
backgroundcolor=blue2!25,
innerleftmargin=5pt,
innerrightmargin=5pt,
innertopmargin=5pt,
leftmargin=0cm,
rightmargin=0cm,
innerbottommargin=5pt,
linewidth=1pt]{dBox}
\newmdenv[skipabove=7pt,
skipbelow=7pt,
backgroundcolor=darkred!15,
innerleftmargin=5pt,
innerrightmargin=5pt,
innertopmargin=5pt,
leftmargin=0cm,
rightmargin=0cm,
innerbottommargin=5pt,
linewidth=1pt]{rBox}
\newcommand{\nc}{\newcommand}
\nc{\ketbra}[2]{\lvert#1\rangle\!\langle#2\rvert}
\DeclarePairedDelimiter{\norm}{\lVert}{\rVert}
\DeclarePairedDelimiter{\abs}{\lvert}{\rvert}
\DeclarePairedDelimiter{\floor}{\lfloor}{\rfloor}
\DeclarePairedDelimiterX{\infdivx}[2]{(}{)}{%
  #1\;\delimsize\|\;#2%
}
\nc{\proj}[1]{| #1\rangle\!\langle #1 |}
\nc{\avg}[1]{\langle#1\rangle}
\nc{\smfrac}[2]{\mbox{$\frac{#1}{#2}$}}
\nc{\tr}{\operatorname{tr}}
\nc{\ox}{\otimes}
\nc{\dg}{\dagger}
\nc{\dn}{\downarrow}
\nc{\cA}{{\cal A}}
\nc{\cB}{{\cal B}}
\nc{\cC}{{\cal C}}
\nc{\cD}{{\cal D}}
\nc{\cE}{{\cal E}}
\nc{\cF}{{\cal F}}
\nc{\cG}{{\cal G}}
\nc{\cH}{{\cal H}}
\nc{\cI}{{\cal I}}
\nc{\cJ}{{\cal J}}
\nc{\cK}{{\cal K}}
\nc{\cL}{{\cal L}}
\nc{\cM}{{\cal M}}
\nc{\cN}{{\cal N}}
\nc{\cO}{{\cal O}}
\nc{\cP}{{\cal P}}
\nc{\cQ}{{\cal Q}}
\nc{\cR}{{\cal R}}
\nc{\cS}{{\cal S}}
\nc{\cT}{{\cal T}}
\nc{\cU}{{\cal U}}
\nc{\cV}{{\cal V}}
\nc{\cX}{{\cal X}}
\nc{\cY}{{\cal Y}}
\nc{\cZ}{{\cal Z}}
\nc{\cW}{{\cal W}}
\nc{\csupp}{{\operatorname{csupp}}}
\nc{\qsupp}{{\operatorname{qsupp}}}
\nc{\var}{{\operatorname{var}}}
\nc{\rar}{\rightarrow}
\nc{\lrar}{\longrightarrow}
\nc{\polylog}{{\operatorname{polylog}}}
\nc{\wt}{{\operatorname{wt}}}
\nc{\supp}{{\operatorname{supp}}}
\nc{\argmin}{{\operatorname{argmin}}}
\newcommand{\tpmod}[1]{{\@displayfalse\pmod{#1}}}
\def\a{\alpha}
\def\n{\nu}
\def\x{\xi}
\nc{\RR}{{{\mathbb R}}}
\nc{\CC}{{{\mathbb C}}}
\nc{\FF}{{{\mathbb F}}}
\nc{\NN}{{{\mathbb N}}}
\nc{\ZZ}{{{\mathbb Z}}}
\nc{\PP}{{{\mathbb P}}}
\nc{\QQ}{{{\mathbb Q}}}
\nc{\UU}{{{\mathbb U}}}
\nc{\EE}{{{\mathbb E}}}
\nc{\id}{{\operatorname{id}}}
\nc{\CHSH}{{\operatorname{CHSH}}}
\nc{\rU}{\mbox{U}}
\nc{\ob}[1]{#1}
\nc{\SEP}{{\text{\rm SEP}}}
\nc{\NS}{{\text{\rm NS}}}
\nc{\LOCC}{{\text{\rm LOCC}}}
\nc{\PPT}{{\text{\rm PPT}}}
\nc{\EXT}{{\text{\rm EXT}}}
\nc{\Sym}{{\operatorname{Sym}}}
\nc{\ERLO}{{E_{\text{r,LO}}}}
\nc{\ERLOCC}{{E_{\text{r,LOCC}}}}
\nc{\ERPPT}{{E_{\text{r,PPT}}}}
\nc{\ERLOCCinfty}{{E^{\infty}_{\text{r,LOCC}}}}
\nc{\Aram}{{\operatorname{\sf A}}}
\newcommand{\eps}{\varepsilon}
\def\grd@save@target#1{%
  \def\grd@target{#1}}
\def\grd@save@start#1{%
  \def\grd@start{#1}}
\tikzset{
  grid with coordinates/.style={
    to path={%
      \pgfextra{%
        \edef\grd@@target{(\tikztotarget)}%
        \tikz@scan@one@point\grd@save@target\grd@@target\relax
        \edef\grd@@start{(\tikztostart)}%
        \tikz@scan@one@point\grd@save@start\grd@@start\relax
        \draw[minor help lines,magenta] (\tikztostart) grid (\tikztotarget);
        \draw[major help lines] (\tikztostart) grid (\tikztotarget);
        \grd@start
        \pgfmathsetmacro{\grd@xa}{\the\pgf@x/1cm}
        \pgfmathsetmacro{\grd@ya}{\the\pgf@y/1cm}
        \grd@target
        \pgfmathsetmacro{\grd@xb}{\the\pgf@x/1cm}
        \pgfmathsetmacro{\grd@yb}{\the\pgf@y/1cm}
        \pgfmathsetmacro{\grd@xc}{\grd@xa + \pgfkeysvalueof{/tikz/grid with coordinates/major step}}
        \pgfmathsetmacro{\grd@yc}{\grd@ya + \pgfkeysvalueof{/tikz/grid with coordinates/major step}}
        \foreach \x in {\grd@xa,\grd@xc,...,\grd@xb}
        \node[anchor=north] at (\x,\grd@ya) {\pgfmathprintnumber{\x}};
        \foreach \y in {\grd@ya,\grd@yc,...,\grd@yb}
        \node[anchor=east] at (\grd@xa,\y) {\pgfmathprintnumber{\y}};
      }
    }
  },
  minor help lines/.style={
    help lines,
    step=\pgfkeysvalueof{/tikz/grid with coordinates/minor step}
  },
  major help lines/.style={
    help lines,
    line width=\pgfkeysvalueof{/tikz/grid with coordinates/major line width},
    step=\pgfkeysvalueof{/tikz/grid with coordinates/major step}
  },
  grid with coordinates/.cd,
  minor step/.initial=.2,
  major step/.initial=1,
  major line width/.initial=2pt,
}
\def\problem@s{}
\newcounter{problems@cnt}
\newcommand{\allproblems}{\problem@s}
\pgfplotsset{compat=1.18}
\DeclareMathOperator{\Poly}{Poly}
\DeclareMathOperator{\sgn}{sgn}
\DeclareMathOperator{\stp}{stp}
\newcommand{\relu}{\text{ReLU}}
\newcommand{\CNOT}{\text{CNOT}}
\newcommand{\holder}{\text{H{\"o}lder}}
\title{Non-asymptotic Approximation Error Bounds of Parameterized Quantum Circuits}
\author{
Zhan Yu\textsuperscript{\rm 1, 2}\thanks{Z.Y. and Q.C. contributed equally to this work} ~ Qiuhao Chen\textsuperscript{\rm 1}\footnotemark[1] ~ Yuling Jiao\textsuperscript{\rm 1, 3} ~ Yinan Li\textsuperscript{\rm 1, 3}\thanks{Yinan.Li@whu.edu.cn} ~ Xiliang Lu\textsuperscript{\rm 1, 3} \\ \textbf{Xin Wang\textsuperscript{\rm 4} ~ Jerry Zhijian Yang\textsuperscript{\rm 1, 3}}\thanks{zjyang.math@whu.edu.cn} \vspace{1em}\\
\footnotesize{ \textsuperscript{\rm 1} School of Mathematics and Statistics, Wuhan University, Wuhan 430072, China} \\
\footnotesize \textsuperscript{\rm 2} Centre for Quantum Technologies, National University of Singapore, 117543, Singapore \\
\footnotesize \textsuperscript{\rm 3} Hubei Key Laboratory of Computational Science, Wuhan 430072, China \\
\footnotesize \textsuperscript{\rm 4} Thrust of Artificial Intelligence, Information Hub, \\ \footnotesize Hong Kong University of Science and Technology (Guangzhou), Guangzhou 511453, China
}
\begin{document}

\maketitle

\begin{abstract}
Parameterized quantum circuits (PQCs) have emerged as a promising approach for quantum neural networks. However, understanding their expressive power in accomplishing machine learning tasks remains a crucial question. This paper investigates the expressivity of PQCs for approximating general multivariate function classes. Unlike previous Universal Approximation Theorems for PQCs, which are either nonconstructive or rely on parameterized classical data processing, we explicitly construct data re-uploading PQCs for approximating multivariate polynomials and smooth functions. We establish the first non-asymptotic approximation error bounds for these functions in terms of the number of qubits, quantum circuit depth, and number of trainable parameters. Notably, we demonstrate that for approximating functions that satisfy specific smoothness criteria, the quantum circuit size and number of trainable parameters of our proposed PQCs can be smaller than those of deep ReLU neural networks. We further validate the approximation capability of PQCs through numerical experiments. Our results provide a theoretical foundation for designing practical PQCs and quantum neural networks for machine learning tasks that can be implemented on near-term quantum devices, paving the way for the advancement of quantum machine learning.
\end{abstract}

\section{Introduction}
In quantum computing, one key area is to investigate if quantum computers could accelerate classical machine learning tasks in data analysis and artificial intelligence, giving rise to an interdisciplinary field known as \emph{quantum machine learning}~\cite{biamonte2017quantum}. As the quantum analogs of classical neural networks, \emph{parameterized quantum circuits} (PQCs)~\cite{benedetti2019parameterized} have gained significant attention as a prominent paradigm to yield quantum advantages. PQCs offer a concrete and practical way to implement quantum machine learning algorithms in noisy and intermediate-scale quantum (NISQ) devices~\cite{Preskill2018quantumcomputingin}, rendering them well-suited for a diverse array of tasks~\cite{kandala2017Hardware,cerezo2022VariationalQuantumState,cao2019QuantumChemistryAge,panDeepQuantumNeural2023,renExperimentalQuantumAdversarial2022,Huang2021Experimental,mitarai2018quantum,perez-salinas2020data}.

To establish the practical significance of quantum machine learning, an ongoing pursuit is to demonstrate their superiority in solving real-world learning problems compared to classical learning models, including the most commonly used deep neural networks~\cite{lecun2015DeepLearning}. Typical supervised learning tasks, such as image classification and price prediction, aim to construct a model to learn a mapping function from the input to output via training data sets. Essentially, the goal is to approximate multivariate functions. This viewpoint leads to the celebrated \emph{Universal Approximation Theorem}~\cite{cybenko1989approximation,hornik1989multilayer}, which limits what neural networks can theoretically learn. Recently, powerful tools from approximation theory have been utilized to establish a fruitful mathematical framework for understanding the ``black magic'' of deep learning by establishing non-asymptotic approximation error bounds of deep neural networks in terms of the \emph{width}, \emph{depth}, \emph{number of weights (neurons)} and function complexities, see e.g.\ Refs.~\cite{barron1993universal,yarotsky2017ErrorBoundsApproximations,yarotsky2018optimal,petersen2018optimal,yarotsky2020phase,shen2020deep,lu2021deep,shen2022optimal,weinan2022barron,jiao2023DNNReLUSineExponential,jiao2023deep} and references therein.

Substantial investigations have showcased the power of quantum machine learning for specific learning tasks~\cite{havlicek2019supervised,du2020expressive, liu2021RigorousRobustQuantum,Huang2021Information,jerbi2021parametrized,huang2021power, jerbi2023quantum, jager2023universal}. A fundamental question is whether the \emph{expressivity} of quantum machine learning models is as powerful as, or is more powerful than, the expressivity of classical machine learning models. This can be illustrated by proving universal approximation theorems for PQCs~\cite{schuld2021effect,gilvidal2020input,perez-salinas2021one,yu2022power,manzano2023parametrized,goto2021universal,gonon2023universal,qi2023TheoreticalErrorPerformance}, indicating that there exist PQCs with suitable parameter configurations to approximate target functions up to a given approximation accuracy. This will justify the power of PQCs to solve supervised learning tasks in a mathematical way. To further investigate whether PQCs are more expressive than the classical models or not, it is natural to examine the PQC approximation performance by establishing approximation error bounds for important function classes. Such quantitative error bounds are less known in the quantum setting, because the hypothesis functions generated by PQCs are more complicated than those generated by classical neural networks.

The difficulties of analyzing the PQC approximation performances can be partially overcome by allowing \emph{parameterized classical data processing}. Namely, trainable parameters are allowed not only in the quantum gates in PQCs but also in the classical data pre- and post-processing. This allows one to prove approximation error bounds following classical strategies~\cite{goto2021universal,qi2023TheoreticalErrorPerformance,gonon2023universal}. For instance,~\citet{goto2021universal} proved PQC approximation error rate for Lipschitz continuous functions in terms of the number of qubits and trainable parameters by incorporating trainable parameters in the measurement post-processing phase; similar results can also be obtained by utilizing Tensor-Train Network~\cite{qi2023TheoreticalErrorPerformance} or by linear transformations to preprocess the classical data. 

However, utilizing parameterized classical data processing makes it hard to distinguish whether the expressive power of PQCs comes from the classical or quantum parts. In fact, parameterized classical data processing enables one to directly convert the hypothesis functions generated by the quantum models into hypothesis functions generated by classical ones and adapt expressivity results for classical machine learning models to extract the expressivity of such quantum models. As a consequence, the resulting PQCs have very simple structures and short depth. It remains unknown whether one can prove approximation error bounds for PQCs without parameterized classical data processing. On the other hand, \citet{zhao2023learning} proved exponential lower bounds on the number of trainable parameters (in terms of the number of variables) needed for approximating bounded Lipschitz continuous functions using PQCs without parameterized classical data processing, illustrating that using PQCs to approximate Lipschitz functions still suffers from the \emph{curse of dimensionality} (CoD) met by classical deep neural networks~\cite{Grohsdeeplearning}. However, this does not rule out the possibility that one can achieve the same approximation rate with PQCs of \emph{smaller size} compared to classical deep neural networks.

In this paper, we explicitly construct \emph{the first} PQCs \emph{without} parameterized classical data processing for approximating multivariate polynomials and smooth functions; a glance at these constructed PQCs is illustrated in Fig.~\ref{fig:flowchart}. This eliminates the ambiguity regarding whether the expressivity originates from classical or quantum parts. We also establish \emph{non-asymptotic PQC approximation error bounds}, in the sense that the PQC approximation performances are characterized in terms of the number of qubits (width), the \emph{depth of PQCs}, the number of trainable parameters/gates (parameter count), and the function complexities. These results enable us to compare the approximation power of PQCs with that of classical neural networks. 
Notably, we show that for multivariate smooth functions, the quantum circuit size and the number of trainable parameters of our proposed PQCs demonstrate an improvement over the prior result of deep $\relu$ neural networks~\cite{lu2021deep}, one of the most commonly used neural network family in classical deep learning theory. Our proposed PQCs not only possess the universal approximation property but also achieve parameter efficiency comparable to classical neural networks, potentially leading to more efficient and scalable quantum machine learning algorithms for real-world tasks. 

\begin{figure}[ht!]
\centering
\includegraphics[width=0.99\textwidth]{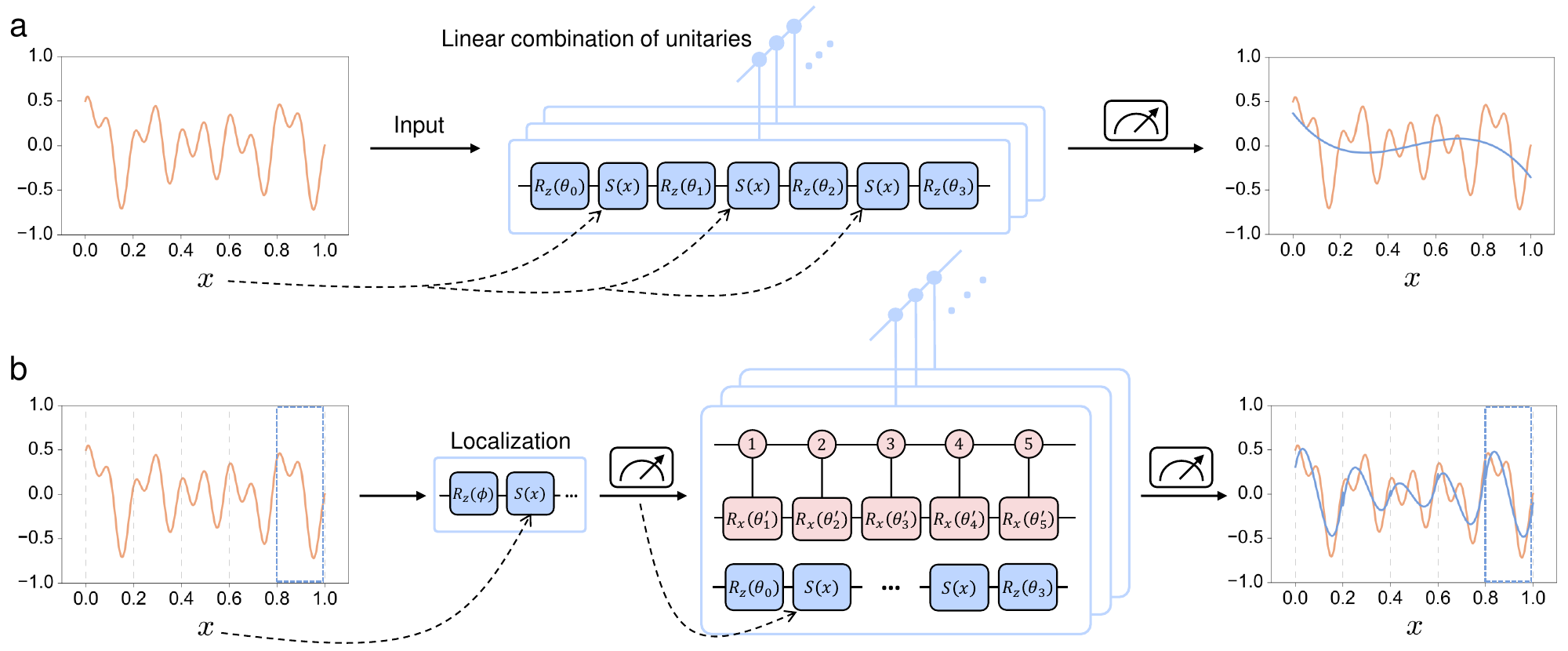}
\caption{\small{\textbf{Overview of PQCs for approximating continuous functions.} (a) Flowchart illustrating the strategy for using PQCs to approximate continuous functions via implementing Bernstein polynomials. The input data $x$ is encoded into the PQC through $S(x)$, with the PQC (blue background) capable of representing parity-constrained polynomials up to degree $3$ (as $x$ is encoded three times). The technique of linear combination of unitaries (LCU) is used to aggregate these polynomials together. The output of PQC derives from measurement with a specific observable. Fine-tuning trainable parameters in $R_Z$ gates yields a polynomial output depicted in the right panel.
(b) Flowchart illustrating the strategy of approximation via local Taylor expansions. We first apply a PQC to localize the input domain into $K=5$ regions. For example, for input $x\in [0.8, 1]$, PQC outputs $x^{\prime}=0.8$ as a fixed point. Then $x-x^{\prime}$ will be fed into a new PQC for implementing the local Taylor expansions at the fixed point $x'$, forming a nesting architecture. Control gates with pink backgrounds implement the Taylor coefficients. Fine-tuning trainable parameters in $R_X$ and $R_Z$ gates yields a piecewise polynomial with degree $3$ that approximates the target function.
}}
\label{fig:flowchart}
\end{figure}

\section{Preliminaries}
\paragraph{Quantum states.} 
The basic unit of information in quantum computing is the \emph{qubit}, which can exist in a superposition of the states 0 and 1 simultaneously, unlike classical bits that are restricted to either 0 or 1. A pure quantum state in the $d$-dimensional Hilbert space $\mathbb{C}^d$ is represented by the \emph{Dirac} notation $\ket{\phi}$. The conjugate transpose of $\ket{\phi}$ is denoted by $\bra{\phi}$. The inner product of two quantum states $\ket{\phi}$ and $\ket{\psi}$ is written as $\braket{\phi| \psi}$. An important property is that $\braket{\phi| \phi}=1$ for any pure state $\ket{\psi}$. By convention, the computational basis states for single-qubit systems are written as $\ket{0}=[1, 0]^T$ and $\ket{1}=[0, 1]^T$, where the superscript $T$ denotes the transpose. For $n$-qubit systems, the computational basis states are expressed as $\ket{j}\in\{\ket{0}, \ket{1}\}^{\otimes n}$, where $\otimes$ denotes the tensor product operation. 

\paragraph{Quantum gates.}
Quantum gates are building blocks of quantum circuits operating on quantum states. Unlike classical gates, quantum gates are reversible and described as unitary matrices. In quantum machine learning, common parameterized quantum gates include single-qubit Pauli rotation gates $R_X(\theta) = e^{-\theta X/2}$, $R_Y(\theta) = e^{-\theta Y/2}$, and $R_Z(\theta) = e^{-\theta X/2}$ that rotate a quantum state through angle $\theta$ around the corresponding axis, where the three Pauli operators are defined as:
\[
X = \begin{bmatrix}
  0 & 1 \\ 1 & 0
\end{bmatrix}, \quad
Y = \begin{bmatrix}
	0 & -i \\ i & 0
\end{bmatrix}, \quad
Z = \begin{bmatrix}
	1 & 0 \\ 0 & -1
\end{bmatrix},
\]
where $i$ represents the imaginary unit. Commonly used two-qubit quantum gates include CNOT gate that flips the target qubit if and only if the the control qubit is in $\ket{1}$.

\paragraph{Quantum measurement}
The quantum measurement is a procedure manipulating a quantum system to extract classical information. The simplest measurement is the computational basis measurement: For a single-qubit system $\ket{\psi}=\alpha\ket{0}+\beta\ket{1}$, the outcome is either $\ket{0}$ with probability $|\alpha|^2$ or $\ket{1}$ with probability $|\beta|^2$. These measurements project the quantum state onto the measured basis, collapsing the state itself. Observables, represented by Hermitian operators, correspond to measurable quantities in a quantum system like energy or position. Each observable has a set of possible outcomes (eigenvalues) and corresponding states (eigenvectors). When a measurement of an observable is performed, the outcome is one of the eigenvalues, and the state of the system collapses to the corresponding eigenvector. If we are measuring a state $\ket{\psi}$ using observable $\cO$, the expected value of outcome is $\bra{\psi}\cO\ket{\psi}$. This represents the average result one would expect from repeated measurements on identically prepared systems.
A comprehensive introduction to the fundamental notations and concepts of quantum computation can be found in~\cite{nielsen2010quantum}.

\paragraph{Data re-uploading PQCs.}
The PQCs we shall construct in this paper are of \emph{data re-uploading} type~\cite{perez-salinas2020data}, i.e., consisting of interleaved data encoding circuit blocks and trainable circuit blocks.
More precisely, let $\bm x$ be the input data vector and $\bm \theta = (\bm{\theta}_0, \ldots, \bm{\theta}_L)$ be a set of trainable parameter vectors. $S(\bm x)$ is a quantum circuit that encode $\bm x$ and $V(\bm \theta_j)$ is a trainable quantum circuit with trainable parameter vector $\bm\theta_j$. An $L$-layer data re-uploading PQC can be then expressed as
\begin{equation}\label{eqn:data reuploading PQC}
    U_{\bm\theta}(\bm{x}) = V(\bm{\theta_0}) \prod_{j = 1}^L S(\bm{x}) V(\bm{\theta_j}), 
\end{equation}
Applying $U_{\bm\theta}(\bm x)$ to an initial quantum state and measuring the output states provides a way to express functions on $\bm x$:
\begin{equation}\label{eqn:output_PQC}
    f_{U_{\bm\theta}}(\bm{x}) \coloneqq \bra{0} U^\dagger_{\bm\theta}(\bm{x}) \cO U_{\bm\theta}(\bm{x}) \ket{0}, 
\end{equation}

where $\cO$ is some Hermitian observable. The \emph{approximation capability} of the PQC $U_{\bm\theta}(\bm{x})$ can be characterized by the classes of functions that $f_{U_{\bm\theta}}(\bm{x})$ can approximate by tuning the trainable parameter vector $\bm\theta$. We then turn to an example of single-qubit PQCs approximating univariate functions. For the input $x \in [-1, 1]$, we utilized the Pauli $X$ basis encoding scheme~\cite{mitarai2018quantum} and defined the data encoding operator as a Pauli X rotation $S(x) \coloneqq e^{i \arccos(x) X}$. 
Interleaving the data encoding unitary $S(x)$ with some parameterized Pauli $Z$ rotations $R_Z(\theta)$ gives the circuit of data re-uploading PQC for one variable as $U_{\bm{\theta}}(x) \coloneqq R_Z(\theta_0) \prod_{j=1}^L S(x) R_Z(\theta_j)$
where $\bm \theta\ = (\theta_0, \ldots, \theta_L) \in \RR^{L+1}$ is a set of trainable parameters. Utilizing results from quantum signal processing~\cite{low2016methodology, low2017optimal, gilyen2019quantum}, there exists $\bm\theta\in \RR^{L+1}$ such that $U_{\bm{\theta}}(x)$ implements polynomial transformations $p(x) \in \RR[x]$ as $p(x) = \braket{+| U_{\bm{\theta}}(x)|+}$ for any $x\in[-1, 1]$
if and only if the degree of $p(x)$ is at most $L$, the parity of $p(x)$ is  $L \bmod 2$ \footnote{{A polynomial $p(x)$ has parity $0$ if all coefficients corresponding to odd powers of $x$ are $0$, and similarly $p(x)$ has parity $1$ if all coefficients corresponding to even powers of $x$ are $0$.}}, and $\abs{p(x)} \leq 1$ for all $x \in [-1, 1]$. Then, univariate functions that could be approximated by the specified polynomial $p(x)$ could also be approximated by the PQC $U_{\bm{\theta}}(x)$. Other than the real polynomials, there are also types of single-qubit PQC with Pauli $Z$ basis encoding that could implement complex trigonometric polynomials~\cite{yu2022power}.

\section{Expressivity of PQCs for multivariate continuous functions}
\subsection{Explicit construction of PQCs for multivariate polynomials}
Although PQCs for approximate univariate functions have been constructed and analyzed, they have not yet been generally extended to the case of multivariate functions. Current proofs of universal approximation for multivariate functions are nonconstructive~\cite{schuld2021effect, manzano2023parametrized} and require arbitrary circuit width, arbitrary multi-qubit global parameterized unitaries, and arbitrary observables. \citet{goto2021universal} proposed several constructions for approximating multivariate functions with the assistance of parameterized data pre-processing and post-processing, yielding a quantum-enhanced hybrid scheme rather than a purely quantum setting.

We now move to our explicit construction of PQCs for multivariate polynomials. A multivariate polynomial with $d$ variables and degree $s$ is defined as $p(\bm{x}) \coloneqq \sum_{\norm{\bm\a}_1 \leq s} c_{\bm\a} \bm{x^\a}$ where $\bm{x^\a} = x_1^{\a_1}x_2^{\a_2}\cdots x_d^{\a_d}$. To implement the multivariate polynomial $p(\bm{x})$, we first build a PQC to express a monomial $c_{\bm\a}\bm{x^\a}$. The construction is a trivial extension of the univariate case: We simply apply the single-qubit PQC with Pauli $X$ basis encoding on each $x_j$ to implement $x_j^{\a_j}$ for $1 \leq j \leq d$, respectively. The coefficient $c_{\bm\a} \in \RR$ could be implemented by any of these PQCs. Thus we could construct a PQC $U^{\bm\a}(\bm{x}) \coloneqq \bigotimes_{j=1}^d U_{\bm\theta_j}(x_j)$ such that $\bra{+}^{\otimes d}\! U^{\bm\a}(\bm{x}) \!\ket{+}^{\otimes d} = c_{\bm\a}\bm{x^\a}$. The depth of the PQC $U^{\bm\a}(\bm{x})$ is at most $2s+1$, the width is at most $d$, and the number of parameters is at most $s+d$.

Having PQCs that implement monomials, the next step is to aggregate monomials to implement the multivariate polynomial. A natural idea is to sum the monomial PQCs together as $U_p(\bm{x}) = \sum_{\norm{\bm\a}_1 \leq s} U^{\bm\a}(\bm{x})$. However, the addition operation in quantum computing is non-trivial as the sum of unitary operators is not necessarily unitary. To overcome this issue, we utilize \emph{linear combination of unitaries} (LCU)~\cite{childs2012hamiltonian} to implement the operator $U_p(\bm{x})$ on a quantum computer. Realizing the linear combination of PQCs $U^{\bm\a}(\bm{x})$ requires applying multi-qubit control on each $U^{\bm\a}(\bm{x})$, which could be further decomposed into linear-depth quantum circuits of CNOT gates and single-qubit rotation gates without using any ancilla qubit~\cite{dasilva2022lineardepth}. Then we can obtain the polynomial $p(\bm{x}) =\bra{+}^{\otimes d}\!  U_p(\bm{x}) \!\ket{+}^{\otimes d}$ by applying the Hadamard test on the LCU circuit. Summarizing the above, we establish the following theorem about using PQCs to implement multivariate polynomials. A formal description of such PQCs is given in \cref{Appendix:B}.
\begin{theorem}\label{prop:pqc_poly}
    For any multivariate polynomial $p(\bm{x})$ with $d$ variables and degree $s$ such that $\abs{p(\bm{x})} \leq 1$ for $\bm{x} \in [0,1]^d$, there exists a PQC $W_p(\bm{x})$ such that
    \begin{equation}
        f_{W_p}(\bm{x}) \coloneqq \bra{0} W^\dagger_{p}(\bm{x}) Z^{(0)} W_{p}(\bm{x}) \ket{0} =  p(\bm{x}) 
    \end{equation}
    where $Z^{(0)}$ is the Pauli $Z$ observable on the first qubit. The width of the PQC is $O(d+\log s + s\log d)$, the depth is $O(s^2 d^s (\log s + s\log d))$, and the number of parameters is $O(sd^s(s+d))$.
\end{theorem}
Note that the initial state in the Hadamard test is $\ket{0}^{\otimes d}$ since $\ket{+}^{\otimes d}$ could be easily prepared by applying Hadamard gates on $\ket{0}^{\otimes d}$. Measuring the first qubit of $W_p(\bm{x})$ for $O(\frac{1}{\eps^2})$ times is needed to estimate the value of $p(\bm{x})$ up to an additive error $\eps$. We could further use the amplitude estimation algorithm~\cite{brassard2002quantum} to reduce the overhead while increasing the circuit depth by $O(\frac{1}{\eps})$.

\subsection{PQC approximation for continuous functions}
Polynomials play a central role in approximation theory.
The celebrated Weierstrass approximation theorem (see e.g.~\cite[Sec.\ 10.2.2]{Davidson2002Realanalysiswithrealapplications}) indicates that polynomials are sufficient to approximate continuous univariate functions. 
For multivariate functions, their approximation can be implemented using Bernstein polynomials~\cite{heitzinger2002simulation, foupouagnigni2020multivariate}. We shall apply these results to prove PQC approximation error bounds for multivariate Lipschitz continuous functions.

For a $d$-variable continuous function $f: [0,1]^d \to \RR$, the multivariate Bernstein polynomial with degree $n\in\NN^{+}$ of $f$ is defined as
\begin{equation}\label{eqn:Bernstein_poly}
    B_n(\bm{x}) \coloneqq \sum_{k_1=0}^n \cdots \sum_{k_d=0}^n f\bigl(\frac{\bm{k}}{n}\bigr) \prod_{j=1}^d \binom{n}{k_j} x_j^{k_j}(1-x_j)^{n-k_j},
\end{equation}
where $\bm{k} = (k_1, \ldots, k_d) \in \{0, \ldots, n\}^d$. It is known that Bernstein polynomials converge uniformly to $f$ on $[0,1]^d$ as $n\to\infty$ \cite{heitzinger2002simulation, foupouagnigni2020multivariate}. The PQC constructed in \cref{prop:pqc_poly} could implement the Bernstein polynomial with proper rescaling, which implies that the PQC is a universal approximator for any bounded continuous functions.

\begin{theorem}[The Universal Approximation Theorem of PQC]\label{thm:pqc_uat}
    For any continuous function $f:[0,1]^d \to [-1,1]$, given an $\eps > 0$, there exist an $n\in \NN$ and a PQC $W_b(\bm{x})$ with width $O(d\log n)$, depth $O(dn^d\log n)$ and the number of trainable parameters $O(dn^d)$ such that
    \begin{equation}
        \abs{f(\bm{x}) - f_{W_b}(\bm{x})} \leq \eps
    \end{equation}
    for all $\bm{x}\in [0,1]^d$, where $f_{W_b}(\bm{x}) \coloneqq \bra{0} W^\dagger_{b}(\bm{x}) Z^{(0)} W_{b}(\bm{x}) \ket{0}$.
\end{theorem}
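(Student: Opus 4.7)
The plan is to combine two ingredients. The analytic ingredient is the classical multivariate Weierstrass--Bernstein approximation theorem, which states that the Bernstein polynomial $B_n$ defined in \cref{eqn:Bernstein_poly} converges uniformly to $f$ on $[0,1]^d$ as $n\to\infty$; this reduces approximation of continuous functions to approximation by polynomials. The constructive ingredient is \cref{prop:pqc_poly}, which realizes any bounded multivariate polynomial as the $Z^{(0)}$-expectation value of a PQC by combining single-qubit QSP with LCU. The overall strategy is: pick $n$ so that $\|f-B_n\|_\infty\leq\eps$, then build a PQC $W_b$ whose output $f_{W_b}$ equals $B_n$, so that the triangle inequality gives $|f-f_{W_b}|\leq\eps$.

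For the analytic step, by the multivariate Weierstrass--Bernstein theorem, for any $\eps>0$ there is $n=n(f,\eps)\in\NN$ with $\sup_{\bm x\in[0,1]^d}|f(\bm x)-B_n(\bm x)|\leq\eps$. The boundedness hypothesis $|B_n(\bm x)|\leq 1$ required by \cref{prop:pqc_poly} follows because the product Bernstein basis functions are nonnegative and form a partition of unity on $[0,1]^d$, while $|f(\bm k/n)|\leq 1$ by assumption; hence $|B_n(\bm x)|\leq\max_{\bm k}|f(\bm k/n)|\leq 1$ everywhere on the domain.

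For the constructive step, a black-box invocation of \cref{prop:pqc_poly} with total degree $s=nd$ would yield width $O(nd\log d)$, exponentially worse than the claimed $O(d\log n)$. To obtain the stated bounds, I would specialize the construction behind \cref{prop:pqc_poly} to the tensor-product sum structure of $B_n$, writing
\[
B_n(\bm x)=\sum_{\bm k\in\{0,\ldots,n\}^d} f\!\left(\frac{\bm k}{n}\right) \bigotimes_{j=1}^d q_{k_j}(x_j), \qquad q_{k_j}(x)\coloneqq\binom{n}{k_j} x^{k_j}(1-x)^{n-k_j}.
\]
For each multi-index $\bm k$, realize the factor $\prod_j q_{k_j}(x_j)$ as a tensor product of $d$ single-qubit QSP circuits on disjoint data qubits, each of depth $O(n)$. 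Aggregate the $(n+1)^d$ tensor products by an LCU whose ancilla preparation encodes the coefficients $f(\bm k/n)$, using $\lceil\log_2(n+1)^d\rceil = O(d\log n)$ ancilla qubits. A Hadamard test on the LCU ancilla then realizes $f_{W_b}(\bm x)=B_n(\bm x)$. The width is $d+O(d\log n)=O(d\log n)$; the depth accumulates over the $(n+1)^d=O(n^d)$ sequential controlled tensor-product blocks, each of cost $O(n)$ for the QSP stage plus $O(d\log n)$ for the multi-qubit control decomposition (as in \cite{dasilva2022lineardepth}), giving $O(dn^d\log n)$; and the trainable parameter count is $O(dn^d)$ once the $f(\bm k/n)$ values are absorbed into the LCU ancilla preparation rather than into the rotation angles of the data register.

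The main obstacle I anticipate is tightening the gate- and parameter-level accounting to match the stated $O(d\log n)$ width and $O(dn^d)$ parameter count. With $(n+1)^d$ LCU branches, a careless implementation would inflate both quantities by factors of $n$ or $d$; in particular one must ensure that the $f(\bm k/n)$ values enter only through the ancilla state preparation (so they do not count toward the PQC's trainable parameters on the data qubits), and that the multi-qubit controlled decomposition of each tensor-product block contributes only $O(d\log n)$ overhead per branch. The analytic convergence $\|f-B_n\|_\infty\to 0$, by contrast, is an immediate consequence of standard Bernstein theory and poses no new difficulty.
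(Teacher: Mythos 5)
Your proposal is correct and follows essentially the same route as the paper: uniform convergence of multivariate Bernstein polynomials, combined with a PQC that exploits the tensor-product structure of the Bernstein basis (a single-qubit QSP circuit per variable implementing $x_j^{k_j}(1-x_j)^{n-k_j}$, an LCU over the $(n+1)^d$ terms with $O(d\log n)$ ancillas, and a Hadamard test), which is exactly the paper's dedicated lemma on implementing Bernstein polynomials. The only cosmetic difference is that you encode the coefficients $f(\bm{k}/n)$ in the LCU ancilla state preparation, whereas the paper absorbs each coefficient into the rotation angles of one single-qubit QSP factor and uses a uniform-superposition ancilla; your observation that a black-box invocation of Proposition~1 with total degree $s=nd$ would blow up the width to $O(nd\log d)$ is precisely the reason the paper gives the specialized construction rather than citing Proposition~1 directly.
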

\cref{thm:pqc_uat} serves as the quantum counterpart to the universal approximation theorem of classical neural networks. Moreover, the PQCs that universally approximate continuous functions are explicitly constructed without any impractical assumption, improving the previous results presented in Refs.~\cite{schuld2021effect, manzano2023parametrized}. Moreover, for continuous functions $f$ satisfying the Lipschitz condition, $\abs{f(\bm{x}) - f(\bm{y})} \leq \ell \norm{\bm{x} - \bm{y}}_\infty$ for any $\bm{x},\bm{y}$, the approximation rate of Bernstein polynomials could be quantitatively characterized in terms of the degree $n$, the number of variables $d$ and the Lipschitz constant $\ell$~\cite{foupouagnigni2020multivariate}. Thus a non-asymptotic error bound for PQC approximating Lipschitz continuous functions could be obtained as follows. 
\begin{theorem}\label{thm:approx_Lipschitz}
Given a Lipschitz continuous function $f: [0, 1]^d \to [-1, 1]$ with a Lipschitz constant $\ell$, for any $\eps > 0$ and $n \in \NN$, there exists a PQC $W_b(\bm{x})$ with such that $f_{W_b}(\bm{x}) \coloneqq \bra{0} W^\dagger_{b}(\bm{x}) Z^{(0)} W_{b}(\bm{x}) \ket{0}$ satisfies
\begin{equation}
    \abs{f(\bm{x})-f_{W_b}(\bm{x})} \leq \eps + 2\biggl( \Bigl(1+\frac{\ell^2}{n\eps^2}\Bigr)^d-1 \biggr) \leq \eps + d2^{d}\frac{\ell^2}{n\eps^2}
\end{equation}
for all $\bm{x}\in[0, 1]^d$. The width of the PQC is $O(d\log n)$, the depth is $O\bigl(dn^{d}\log{n}\bigr)$, and the number of parameters is $O(dn^d)$.
\end{theorem}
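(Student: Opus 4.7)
The plan is to combine a classical approximation of $f$ by a multivariate Bernstein polynomial with the explicit polynomial-implementing PQC construction promised in \cref{prop:pqc_poly}. Given the target $f$ and degree $n$, I would take the approximant to be the Bernstein polynomial $B_n(\bm{x})$ from \eqref{eqn:Bernstein_poly}. Since $f(\bm{k}/n) \in [-1,1]$ and the Bernstein basis functions are nonnegative and form a partition of unity on $[0,1]^d$, one automatically gets $\abs{B_n(\bm{x})} \leq 1$ there, so $B_n$ is a legal input to any PQC construction that assumes polynomials bounded by $1$.

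For the analytic error, I would quote the multivariate Popoviciu-type estimate for Lipschitz functions from \cite{foupouagnigni2020multivariate}: for every $\eps > 0$,
\begin{equation*}
\sup_{\bm{x}\in[0,1]^d}\abs{f(\bm{x}) - B_n(\bm{x})} \leq \eps + 2\biggl(\Bigl(1+\frac{\ell^2}{n\eps^2}\Bigr)^d - 1\biggr).
\end{equation*}
The further simplification $\leq \eps + d\,2^{d}\ell^2/(n\eps^2)$ follows from the elementary inequality $(1+x)^d - 1 \leq d\,x\,(1+x)^{d-1}$ applied with $x = \ell^2/(n\eps^2)$ in the regime of interest where $x \leq 1$. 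Together with $f_{W_b}(\bm{x}) = B_n(\bm{x})$, this immediately yields the stated bound.

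For the circuit itself, I would \emph{not} feed $B_n$ into \cref{prop:pqc_poly} as a generic polynomial of total degree $nd$, since that would give a complexity far worse than the claim. Instead I would exploit the tensor-product structure of the Bernstein basis: each basis function factorises as $\prod_{j=1}^d g_{k_j}(x_j)$ with $g_{k_j}(x_j) = \binom{n}{k_j} x_j^{k_j}(1-x_j)^{n-k_j}$ a univariate polynomial of degree $n$ bounded by $1$ on $[0,1]$. By the single-qubit quantum-signal-processing construction recalled earlier in the paper, each $g_{k_j}$ is implemented by a single-qubit data re-uploading PQC of depth $O(n)$ with $O(n)$ trainable angles, and the product over $j$ by running $d$ such PQCs in parallel on separate qubits. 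The $(n+1)^d$ basis terms are then aggregated with coefficients $f(\bm{k}/n)$ via the same LCU+Hadamard-test gadget used to prove \cref{prop:pqc_poly}, requiring $\lceil \log_2(n+1)^d\rceil = O(d\log n)$ ancilla qubits to index the PREPARE amplitudes.

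Collecting these ingredients gives the announced width $O(d\log n)$, depth $O(d n^d \log n)$, and parameter count $O(d n^d)$. I expect the main obstacle will not be the Bernstein estimate (a direct quotation) but the bookkeeping in the LCU SELECT step: one must verify that the linear-depth multi-controlled decomposition of \cite{dasilva2022lineardepth} lets the $O(d\log n)$ controls combine with the $O(n)$-depth controlled tensor-product blocks so that the total depth stays at $O(d n^d \log n)$ rather than acquiring an extra factor of $n$, and that the PREPARE circuit for the signed amplitudes $f(\bm{k}/n)$ fits inside the claimed width and parameter budgets. Once those resource counts are established, the theorem follows by setting $f_{W_b}(\bm{x}) = B_n(\bm{x})$ and substituting the Popoviciu bound.
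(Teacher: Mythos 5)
Your proposal matches the paper's own proof essentially step for step: the paper also approximates $f$ by the multivariate Bernstein polynomial, quotes the Popoviciu-type bound from \cite{foupouagnigni2020multivariate} (its Lemma on Bernstein approximation of Lipschitz functions), and implements $B_n$ by exploiting exactly the tensor-product structure you describe --- one $O(n)$-depth quantum-signal-processing block per variable for each factor $x_j^{k_j}(1-x_j)^{n-k_j}$, with the $(n+1)^d$ basis terms aggregated by the same LCU-plus-Hadamard-test gadget as in \cref{prop:pqc_poly}, yielding the identical width, depth, and parameter counts. The LCU bookkeeping you flag as the main obstacle is resolved in the paper precisely as you anticipate, via the linear-depth multi-controlled decomposition of \cite{dasilva2022lineardepth}.
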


We prove these theorems in \cref{Appendix:C}.
Although a quantitative approximation error bound is characterized in \cref{thm:approx_Lipschitz}, we could find that $n$ must be sufficiently large to obtain a good precision, yielding an extremely deep PQC. This inefficiency is essentially due to the intrinsic difficulty of using a single global polynomial to approximate a continuous function uniformly. A possible approach that may overcome the obstacle is to use local polynomials to achieve a piecewise approximation, which we will discover in the next section.

\subsection{PQC approximation for H{\"o}lder smooth functions}
To achieve a piecewise approximation of multivariate functions, we follow the path of classical deep neural networks approximation~\cite{petersen2018optimal, lu2021deep, jiao2023deep}, which utilizes multivariate Taylor series to approximate target functions in small local regions.

We focus on $\holder$ smooth functions. Let $\beta = s + r >0$, where $r \in (0, 1]$ and $s\in \NN^{+}$. For a finite constant $B_0 > 0$, the $\beta$-$\holder$ class of functions $\cH^\beta([0,1]^d, B_0)$ is defined as
\begin{equation}
    \cH^\beta([0,1]^d, B_0)\!=\!\Bigl\{ f\!:[0,1]^d\!\to\!\RR, \max_{\norm{\bm\a}_1\leq s} \norm{\partial^{\bm\a}f}_{\infty}\!\leq\!B_0, \max_{\norm{\bm\a}_1=s} \sup_{\bm{x} \neq \bm{y}} \frac{\abs{\partial^{\bm\a}f(\bm{x})-\partial^{\bm\a}f(\bm{y})}}{\norm{\bm{x}-\bm{y}}_2^r}\!\leq\! B_0\Bigr\},
\end{equation}
where $\partial^{\bm\a} = \partial^{\a_1} \cdots \partial^{\a_d}$ for $\bm\a=(\a_1,\ldots, \a_d)\in \NN^d$. We note that $\holder$ smooth functions are natural generalizations of various continuous functions: When $\beta\in(0,1)$, $f$ is $\holder$ continuous with order $\beta$ and $\holder$ constant $B_0$; when $\beta = 1$, $f$ is Lipschitz continuous with Lipschitz constant $B_0$; when $1<\beta\in\NN$, $f\in C^s([0,1]^d)$, the class of $s$-smooth functions whose $s$-th partial derivatives exist and are bounded. As shown in \citet{petersen2018optimal}, for any $\beta$-$\holder$ smooth function $f \in \cH^\beta([0,1]^d, B_0)$, its local Taylor expansion at some fixed point $\bm{x}_0 \in [0,1]^d$ satisfies
\begin{equation}\label{eqn:taylor_expansion}
    \abs[\Big]{f(\bm{x}) - \sum_{\norm{\bm{\a}}_1 \leq s} \frac{\partial^{\bm{\a}} f(\bm{x_0})}{\bm\a !} (\bm{x} - \bm{x_0})^{\bm{\a}}} \leq d^s \norm{\bm{x} - \bm{x_0}}^\beta_2
\end{equation}
for all $\bm{x} \in [0,1]^d$, where $\bm\a! = \a_1!\cdots \a_d!$. Next, we show how to construct PQCs to implement the Taylor expansion of $\beta$-$\holder$ functions in the following three steps.

\paragraph{Localization.} To utilize the $\holder$ smoothness, we need to first localize the entire region $[0, 1]^d$. 
The motivation of localization is to determine the local point $\bm{x_0}$ in \cref{eqn:taylor_expansion} so that the distance between $\bm{x}$ and $\bm{x_0}$ is fairly small. An intuitive configuration is illustrated in \cref{fig:fig2}, where the stars represent the local points. 
Given $K \in \NN$ and $\Delta \in (0, \frac{1}{3K})$, for each $\bm{\eta} = (\eta_1, \ldots, \eta_d) \in \{0, 1, \ldots, K-1\}^d$, we define
\begin{equation}
    Q_{\bm{\eta}} \coloneqq \Bigl\{ \bm{x}=(x_1,\ldots,x_d): x_i \in \bigl[ \frac{\eta_i}{K}, \frac{\eta_i+1}{K} - \Delta \cdot 1_{\eta_i < K - 1}\bigr] \Bigr\}.
\end{equation}
By the definition of $Q_{\bm{\eta}}$, the region $[0, 1]^d$ is approximately divided into small hypercubes $\bigcup_{\bm\eta}Q_{\bm{\eta}}$ and some trifling region $\Lambda(d, K, \Delta) \coloneqq [0,1]^d \setminus (\bigcup_{\bm\eta}Q_{\bm{\eta}})$, as illustrated in \cref{fig:fig2}. 
\begin{figure}[ht!]
\centering
\subfloat{\includegraphics[width=0.57\textwidth]{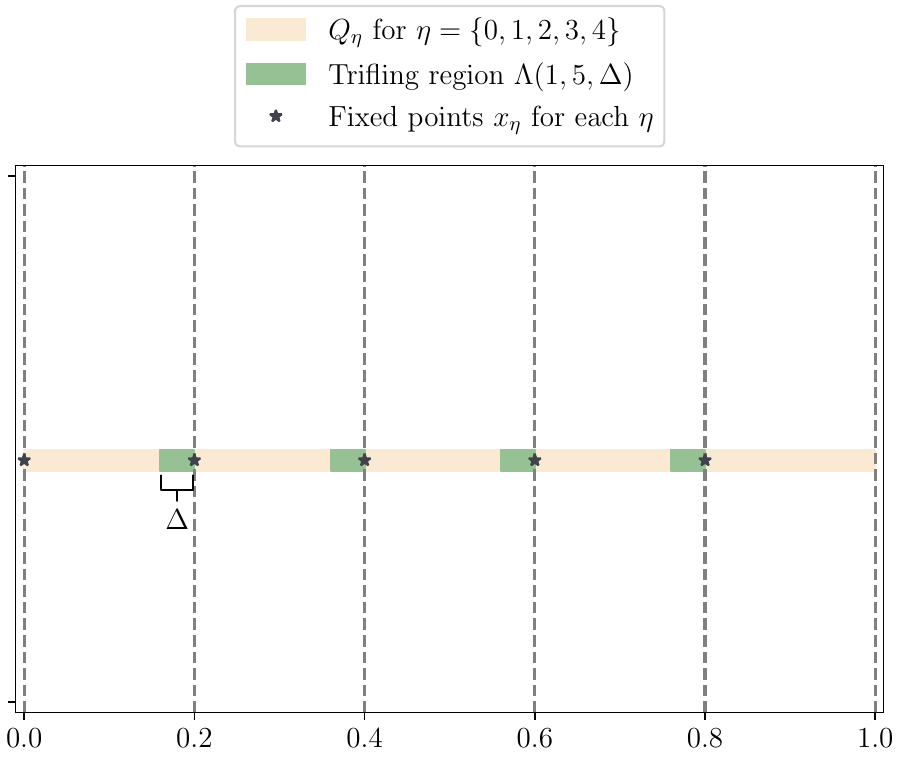}}
\hfill
\subfloat{\includegraphics[width=0.384\textwidth]{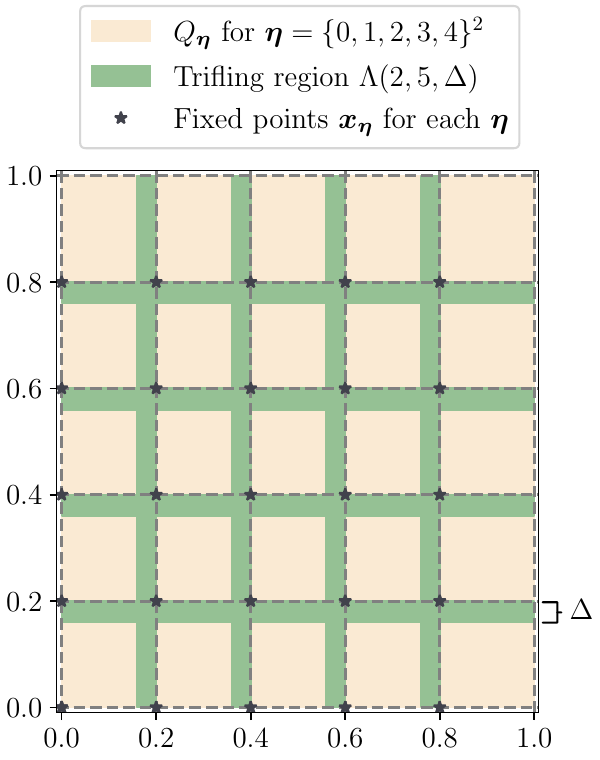}}
\caption{\small{\textbf{An illustration of localization.} The left panel demonstrates the localization $\bigcup_{\bm{\eta}} Q_{\bm{\eta}}$ for $K=5$ and $d=1$. The right panel shows the case of localization for $K=5$ and $d=2$. The ``volume'' of the trifling region $\Lambda(d, K, \Delta)$ is no more than $dK\Delta$. 
}}
\label{fig:fig2}
\end{figure}

We construct a PQC that maps all $\bm{x} \in Q_{\bm{\eta}}$ to some fixed point $\bm{x_\eta} = \frac{\bm\eta}{K}$ in $Q_{\bm{\eta}}$, i.e., approximating the piecewise-constant function $D(\bm{x}) = \frac{\bm{\eta}}{K}$ if $\bm{x} \in Q_{\bm{\eta}}$.
We describe our construction for $d=1$, where $D(x) = \frac{k}{K}$ if $x \in [\frac{k}{K}, \frac{k+1}{K} - \Delta \cdot 1_{k < K - 1}]$ for $k = 0, \ldots, K-1$. The multivariate case could be naturally generalized by applying $D(x)$ to each variable $x_j$. The idea is to construct a polynomial that approximates the function $D(x)$ based on the polynomial approximation to the sign function~\cite{low2017quantum}, which a single-qubit PQC can then implement. Generalizing to the multivariate localization, there exists a PQC $W_{D}(\bm{x})$ of depth $O(\frac{1}{\Delta}\log\frac{K}{\eps})$ and width $O(d)$ such that the output $f_{W_D}(\bm{x})$ maps $\bm{x}$ to the corresponding fixed point $\bm{x_\eta}$ with precision $\eps$.
We can obtain an estimation of $\bm{\eta}$ using $\lfloor Kf_{W_D}(\bm{x}) \rfloor$.

\paragraph{Implementing the Taylor coefficients.} 
Next, we use PQC to implement the Taylor coefficients $\xi_{{\bm{\eta}}, \bm\a} \coloneqq \frac{\partial^{\bm\a} f(\bm{x_\eta})}{\bm\a !} \in [-1, 1]$ for each ${\bm{\eta}} = (\eta_1, \ldots, \eta_d) \in \{0, 1, \ldots, K-1\}^d$ and $\bm\a$, which is essentially a point-fitting problem. Then we could construct a PQC $U_{co}^{\bm\a} = \sum_{{\bm{\eta}}} \ketbra{{\bm{\eta}}}{{\bm{\eta}}} \otimes R_X(\theta_{\bm{\eta},\bm \a})$
such that $\bra{{\bm{\eta}}, 0} U_{co}^{\bm\a} \ket{{\bm{\eta}}, 0} = \xi_{{\bm{\eta}}, \bm \alpha}$, where $\ket{{\bm{\eta}}} = \ket{\eta_1} \otimes \cdots \otimes \ket{\eta_d}$ and $\theta_{\bm \eta,\bm \a}=2\arccos(\xi_{\bm\eta,\bm\a})$. The depth of $U_{\bm{\a}}$ is $O(K^d)$, the width is $O(d\log K)$, and the number of parameters is $O(K^d)$. Note that the state $\ket{{\bm{\eta}}}$ can be prepared using basis encoding on the provided $\bm\eta$ $=\lfloor Kf_{W_D}(\bm{x}) \rfloor$ from the localization step.

\paragraph{Implementing multivariate Taylor series.}
To implement the multivariate Taylor expansion of a function at some fixed point $\bm{x_\eta}$, we first build a PQC to represent a single term in the Taylor series, which could be done by combining the PQC, which implements the Taylor coefficients and the PQC which implements monomials, i.e., constructing $U^{\bm\a}_{\bm\eta}(\bm{x}) \coloneqq U_{co}^{\bm{\a}} \otimes U^{\bm\a}(\bm{x} - \bm{x_\eta})$. The depth of $U^{\bm\a}_{\bm\eta}(\bm{x})$ is $O(K^d + s)$, the width is $O(d\log K)$, and the number of parameters is at most $K^d+s+d$. The next step is to aggregate single Taylor terms together to implement the truncated Taylor expansion of the target function. We use LCU to construct the PQC $U_t(\bm{x}, \bm{x_\eta}) \coloneqq \sum_{\norm{\bm\a}_1 \leq s} U^{\bm\a}_{\bm\eta}(\bm{x})$ so that we can implement the Taylor expansion of the function $f$ at point $\bm{x_\eta}$ as $\bra{\bm\eta, 0}\!\bra{+}^{\otimes d} U_t(\bm{x}, \bm{x_\eta}) \ket{\bm\eta, 0}\!\ket{+}^{\otimes d}$.

We construct a nested PQC as $U_t(\bm{x}, f_{W_D}(\bm{x}))$, such that for any input $\bm{x}$, the corresponding fixed point could be determined by the localization PQC. Such a PQC could be used, together with the Hadamard test, to approximate $\holder$ smooth functions. In particular, we prove the approximation error bound of our constructed PQC based on the error rate of Taylor expansion in \cref{eqn:taylor_expansion}.

\begin{theorem}\label{thm:approx_holder}
    Given a function $f \in \cH^\beta([0,1]^d, 1)$ with $\beta = r + s$, $r\in(0,1]$ and $s\in \NN^{+}$, for any $K\in\NN$ and $\Delta\in(0, \frac{1}{3K})$, there exists a PQC $W_{t}(\bm{x})$ such that $f_{W_t}(\bm{x}) \coloneqq \bra{0} W^\dagger_{t}(\bm{x}) Z^{(0)} W_{t}(\bm{x}) \ket{0}$ satisfies
    \begin{equation}
        \abs{f(\bm{x}) - f_{W_t}(\bm{x})} \leq d^{s+\beta/2}K^{-\beta}
    \end{equation}
    for $\bm{x} \in \bigcup_{{\bm{\eta}}} Q_{\bm{\eta}}$. The width of the PQC is $O(d\log K + \log s + s\log d)$, the depth is $O(s^2 d^sK^d(\log s + s\log d + d\log K)) + \frac{1}{\Delta}\log K)$, and the number of parameters is $O(sd^s(s+d+K^d) + \frac{d}{\Delta} \log K)$.
\end{theorem}
The proof can be found in \cref{Appendix:D}.
Note that the PQC in \cref{thm:approx_holder} consists of two nested parts and its depth is counted as the sum of two PQCs for simplicity. We have established the uniform convergence property of PQCs for approximating $\holder$ smooth function on $[0, 1]^d$ except for the trifling region $\Lambda(d, K, \Delta)$. The Lebesgue measure of such a trifling region is no more than $dK\Delta$. We can set $\Delta=K^{-d}$ with no influence on the size of the constructed PQC, and a similar approximation error bound in the entire region $[0,1]^d$ under the $L^2$ distance could be obtained.

\section{Numerical experiments}
	This section presents numerical experiments to illustrate the expressivity of our proposed PQCs in approximating multivariate functions. We focus on approximating a bivariate polynomial function \[f(x, y)=\frac{(x^2+y-1.5\pi)^2+(x+y^2+\pi)^2+(x+y-0.5\pi)^2}{5\pi^2},\] over the domain $(x, y)\in[0, 1]^2$. The approximation process involves two separate steps: (1) Learning a piecewise-constant function, $D(x)=\frac{k}{K}$ if $x\in[\frac{k}{K}, \frac{k+1}{K})$, using a single-qubit PQC, where $K\in\mathbb{N}^{+}$ determines the number of intervals for the piecewise-constant function. (2) Learning the Taylor expansion of $f(x, y)$ using multi-qubit PQCs based on~\cref{thm:approx_holder}. Both learning processes are implemented on a Gold 6248 2.50 GHz Intel(R) Xeon(R) CPU.
	
	We randomly sample \num{200} data points within the domain $[0,1]$ to create training and test datasets for $D(x)$. A single-qubit PQC with adjustable parameters $L=764$ ($L=996$) is used to learn $D(x)$ with $K=2$ ($K=10$). Each parameter of the PQC is randomly initialized within the range \([0, \pi]\). We use the Adam optimizer~\cite{kingma2015adam} with a learning rate of \num{0.01} to minimize the Mean Squared Error (MSE) loss function during training. The training process was limited to a maximum of \num{300} iterations with a batch size of 100 data points. Early termination occurred if the MSE reached below $10^{-4}$. The achieved MSE on the test data was \num{3.57d-4} ($K=2$) and \num{1.04d-4} ($K=10$). The numerical results are visualized in~\cref{fig:fig3}.
	
	\begin{figure}[ht!]
		\centering
		\includegraphics[width=0.35\textwidth]{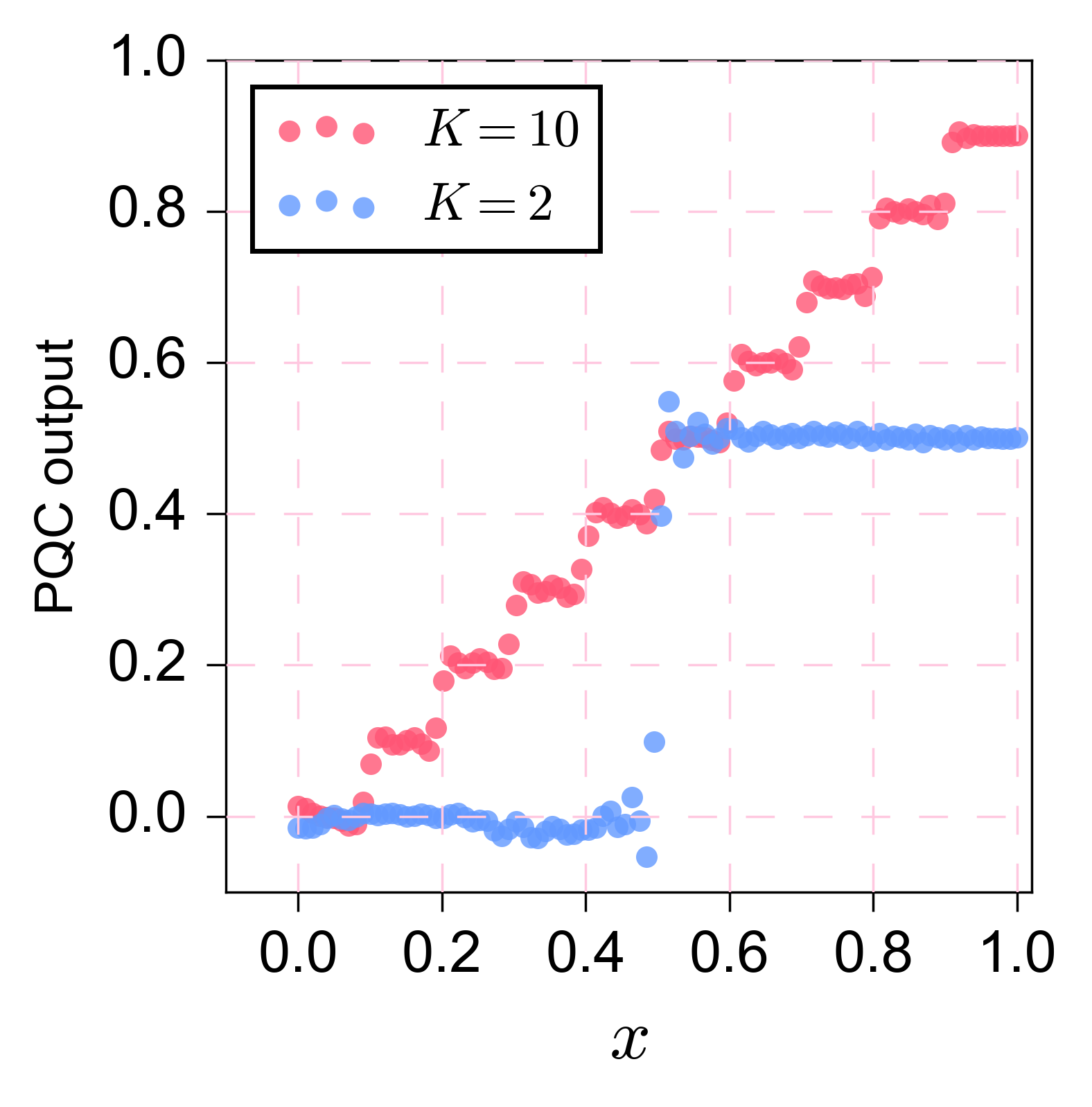}
		\caption{\small{\textbf{Simulation results of localization.} We use single-qubit PQCs to approximate the localization function $D(x)$ for $K=2$ and $K=10$ respectively.
		}}
		\label{fig:fig3}
	\end{figure}
	
	Similar to the previous step, we randomly sampled \num{200} data points within the domain $[0, 1]^2$ to create training and test datasets for $f(x, y)$. A nested PQC structure was designed. It combined \num{12} two-qubit PQCs with a depth of \num{2}, allowing the approximation of a degree-4 polynomial through a combination of lower-degree ones. Additionally, Taylor coefficients were stored in a separate matrix of size $K^2\times 12$. The number of trainable parameters varied from \num{120} ($K=2$) to \num{1272} ($K=10$), each initialized randomly from \([0, \pi]\). The Adam optimizer with a learning rate of 0.01 was used to minimize the MSE loss during training. The training was limited to \num{500} iterations with a batch size of 100, with early termination for MSE below \num{d-4}. The achieved MSE on the test data was \num{2.22d-4} ($K=2$) and \num{9.82d-5} ($K=10$). \cref{fig:fig4} visualizes the results. As $K$ increases, the PQC demonstrates improved approximation performance, aligning with the theoretical findings.
	
	\begin{figure}[ht!]
		\centering
		\includegraphics[width=0.99\textwidth]{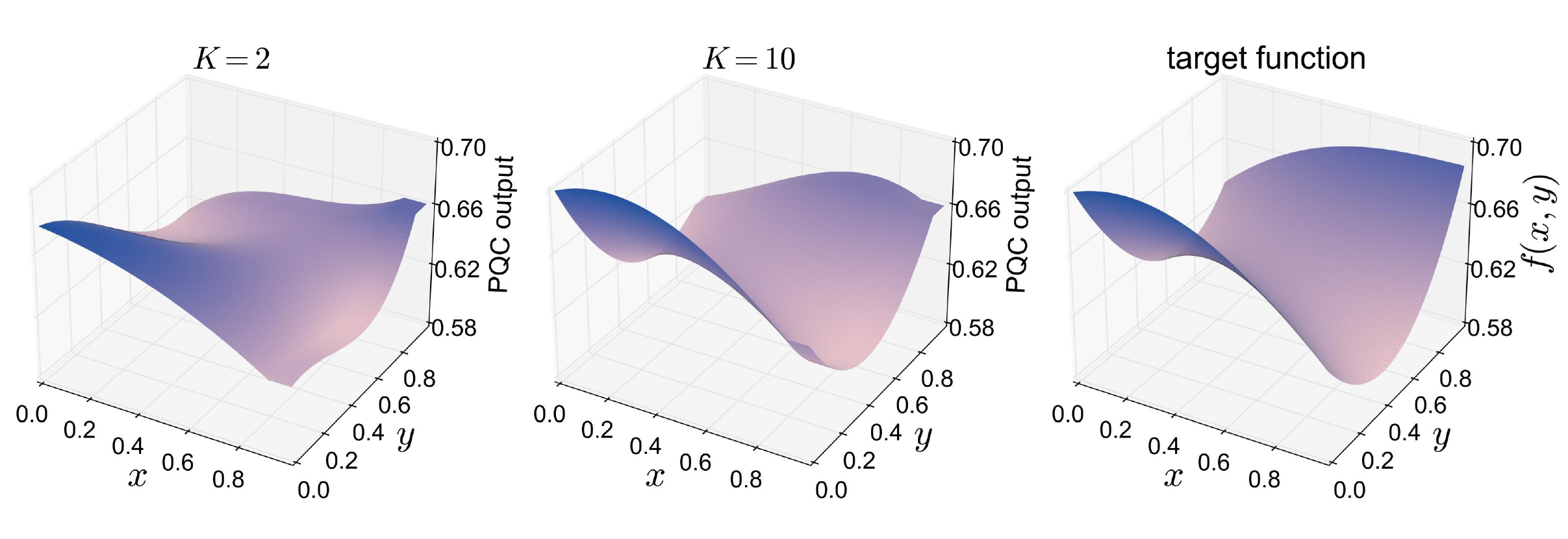}
		\caption{\small{\textbf{Simulation results for learning $f(x, y)$.} The left two panels are derived by interpolating and smoothing the output values of PQC on 100 test data points.
		}}
		\label{fig:fig4}
	\end{figure}

\section{Discussion}

To the best of our knowledge, our results establish the first explicit PQC constructions for approximating Lipschitz continuous and $\holder$ smooth functions with quantitative approximation error bounds.
These results open up the possibility of comparing the size of PQCs and the size of classical deep neural networks for accomplishing the same function approximation tasks and see if there is any quantum advantage in terms of the model size and the number of trainable parameters.
Here, we mainly focus on the comparison with the results of approximation errors of classical machine learning models. In classical deep learning, the deep feed-forward neural network (FNN) equipped with the rectified linear unit (ReLU) activation function is one of the most commonly used models. The quantitative approximation error bounds of ReLU FNNs for approximating continuous functions have been recently established, including the nearly optimal approximation error bounds of ReLU FNNs for smooth functions~\cite{lu2021deep}. We briefly compare the approximation errors of PQCs and ReLU FNNs in terms of width, depth and the number of trainable parameters. Detailed comparisons can be found in \cref{Appendix:E}.

We consider multivariate smooth functions in $C^s_u([0,1]^d)$ (the unit ball of $C^s([0,1]^d)$) with smooth index $s \in \NN$ as the target functions in our comparison. Note that smooth functions with smooth index $s$ are exactly $(s+1)$-$\holder$ smooth functions by definition. For simplicity, we first show the case of $s=2$. To achieve the same approximation error $\eps$ (say some constant), we need to set $K_{Q}=\Theta(d^2/\sqrt{\eps})$ for the constructed PQCs from \cref{thm:approx_holder} and set $K_C=\Theta(2^{d/2}/\sqrt{\eps})$ for the constructed near-optimal ReLU FNNs from Ref.~\cite{lu2021deep}. Substituting the choices of $K$'s in the sizes of PQCs and ReLU FNNs, we have
\begin{equation}
    \frac{\text{Width of PQC}\times \text{Depth of PQC}}{\text{Width of FNN}\times \text{Depth of FNN}}=O\Bigl(\frac{d^3K_Q^{d}}{2^{d+3}K_C^{d/2}}\Bigr)=O\Bigl(\frac{\eps^{-d/4}}{2^{d^{2}-d\log d }}\Bigr).
\end{equation}
One can obtain a similar relation for the number of required parameters in PQCs and ReLU FNNs for approximating smooth functions and extend these results to any $2\leq s<d$, which holds relevance in numerous real-world applications (e.g., the input dimension $d$ is \num{784} for the MNIST dataset and is \num{150528} for the ImageNet dataset~\cite{deng2009imagenet}, and empirically $s\leq 10$).  Therefore, to achieve the same approximation error, the required quantum circuit size and number of parameters of PQCs is exponentially smaller than the required network size and number of parameters of ReLU FNNs proposed in Ref.~\cite{lu2021deep}.

Aiming to understand and continuously expand the range of problems that can be addressed using quantum machine learning, we have demonstrated the approximation capabilities of PQC models in supervised learning. We characterized the approximation error of PQCs in terms of the model size, delivering a deeper understanding of the expressive power of PQCs that is beyond the universal approximation properties. With these results, we can unlock the full potential of these models and drive advancements in quantum machine learning. Notably, by comparing our results with the near-optimal approximation error bound of classical $\relu$ neural networks, we demonstrate an improvement over the classical models on approximating high-dimensional functions that satisfy specific smoothness criteria, quantified by an improvement on the model size and the number of parameters.

Unlike many other investigations in the universal approximation properties of PQC models~\cite{havlicek2019supervised,du2020expressive, liu2021RigorousRobustQuantum,Huang2021Information,jerbi2021parametrized,huang2021power, jerbi2023quantum, jager2023universal}, our constructions of PQCs for approximating broad classes of continuous functions do not rely on any impractical assumptions. All the variables take the form of parameters within single-qubit rotation gates, avoiding any classical parameterized pre-processing or post-processing. Ultimately, our research provides valuable insights into the theoretical underpinnings of PQCs in quantum machine learning and paves the way for leveraging its capabilities in machine learning for both classical and quantum applications. 

In this work, we introduce a novel nested PQC structure, which significantly improves the approximation capabilities. Future work could focus on exploring more powerful PQC constructions based on our proposed idea and understanding the capabilities and limitations of PQCs in more practical tasks even with real-world data. Developing efficient training strategies for PQCs, such as accelerated methods that achieve faster convergence rates, will also be interesting.

\begin{ack}
Part of this work was done when Z.Y. was visiting Wuhan University. Z.Y. thanks Patrick Rebentrost for helpful discussions. The authors thank the helpful comments from the anonymous reviewers. This work is supported by the National Key Research and Development Program of China (No.\ 2020YFA0714200), the National Nature Science Foundation of China (No.\ 62302346, No.\ 12125103, No.\ 12071362, No.\  12371424, No.\ 12371441) and supported by the ``Fundamental Research Funds for the Central Universities''.
\end{ack}

\bibliographystyle{unsrtnat}
\bibliography{ref}

\clearpage

\begin{center}
\textbf{
{\Large{Supplementary Material}}}
\end{center}

\appendix
\renewcommand{\theequation}{\thesection.\arabic{equation}}
\renewcommand{\thetheorem}{S\arabic{theorem}}
\renewcommand{\theproposition}{S\arabic{proposition}}
\renewcommand{\thelemma}{S\arabic{lemma}}
\renewcommand{\theremark}{S\arabic{remark}}
\renewcommand{\thecorollary}{S\arabic{corollary}}
\renewcommand{\thefigure}{S\arabic{figure}}
\renewcommand{\thetable}{S\arabic{table}}
\setcounter{equation}{0}
\setcounter{table}{0}
\setcounter{proposition}{0}
\setcounter{lemma}{0}
\setcounter{corollary}{0}
\setcounter{figure}{0}
\setcounter{remark}{0}
\setcounter{figure}{0}
\setcounter{table}{0}

\tableofcontents
\medskip

\section{Preliminaries}
In this section, we will first present some essential mathematical foundations for deriving the main results of this work. Moreover, to contextualize our work within the existing literature, we comprehensively review relevant studies in \cref{SM:related work}.

\subsection{Notation}
We unify the notations throughout the whole work. The univariate polynomial ring over a field $\mathbb{F}$ is symbolized as $\mathbb{F}[x]$, with the variable $x$ representing the input. The ring of Laurent polynomial $\mathbb{F}[x, x^{-1}]$ is an extension of the polynomial ring obtained by adding inverses of $x$. The collection of natural numbers is represented by the symbol $\mathbb{N}:=\{1, 2, 3, \dots\}$, while the set of non-negative integers is denoted as $\mathbb{N}_0:=\{0\}\cup\mathbb{N}$. The $1$-norm of a vector $\bm{\alpha}=(\alpha_1, \alpha_2, \dots, \alpha_d)$ is denoted by $\|\bm{\alpha}\|_1:=|\alpha_1|+|\alpha_2|+\cdots +|\alpha_d|$. 

\subsection{Data re-uploading PQCs}
In this section, we review the concept of data re-uploading PQC and define the PQC we use in this paper. The data re-uploading PQC is a quantum circuit that consists of interleaved data encoding circuit blocks and trainable circuit blocks~\cite{gilvidal2020input, perez-salinas2020data}. More precisely, let $\bm x$ be the input data vector and $\bm \theta = (\bm{\theta_0}, \ldots, \bm{\theta_L})$ be a set of trainable parameters. $S(\bm x)$ is a quantum circuit that encode $\bm x$ and $V(\bm \theta_j)$ is a trainable quantum circuit with trainable parameter vector $\bm\theta_j$. An $L$-layer data re-uploading PQC can be then expressed as  
\begin{equation}\label{eqnS:data reuploading PQC}
    U_{\bm\theta}(\bm{x}) = V(\bm{\theta_0}) \prod_{j = 1}^L S(\bm{x}) V(\bm{\theta_j}),
\end{equation}
Applying $U_{\bm\theta}(\bm x)$ to a quantum state and measuring the output states provides a way to express functions on $\bm x$. The \emph{expressivity} of the data re-uploading PQC model can be characterized by the classes of functions that it can implement. It is common to build data encoding circuits and trainable circuits using the most prevalent Pauli rotation operators,
\begin{equation}
    R_X(\theta) = \begin{bmatrix}
        \cos\frac{\theta}{2} & -i\sin\frac{\theta}{2}\\[1ex]
        -i\sin\frac{\theta}{2} & \cos\frac{\theta}{2}
    \end{bmatrix},\quad
    R_Y(\theta) = \begin{bmatrix}
        \cos\frac{\theta}{2} & -\sin\frac{\theta}{2}\\[1ex]
        \sin\frac{\theta}{2} & \cos\frac{\theta}{2}
    \end{bmatrix},\quad
    R_Z(\theta) = \begin{bmatrix}
        e^{-i \frac{\theta}{2}} & 0\\[1ex]
        0 & e^{i \frac{\theta}{2}}
    \end{bmatrix}.
\end{equation}
Different data encoding schemes lead to different types of data re-uploading PQCs. 

In some cases, trainable parameters are also included both during the initial data encoding phase and the final processing of measurement outcomes. These PQCs are considered to have \emph{hybrid} structures. For instance, in the models proposed by Refs.~\cite{gilvidal2020input,perez-salinas2021one,gonon2023universal}, each input data is multiplied by a specific trainable parameter and subsequently subjected to $R_Z$ gates during the data encoding stage. In a similar vein, Refs.~\cite{goto2021universal, gonon2023universal} incorporate trainable weights into each measurement outcome generated by the constructed PQCs, aggregating these weighted outcomes to produce the final result. Such a structure makes it hard to judge whether the expressive power comes from the classical or quantum part.

\subsubsection{Implementing real polynomials}

We first introduce the data re-uploading PQC for implementing real univariate polynomials. We utilize the so-called \emph{Pauli $X$ basis encoding}~\cite{mitarai2018quantum}: The data encoding unitary is a single-qubit rotation defined as
\begin{equation}\label{eqnS:data encoding S}
    S(x) \coloneqq e^{i \arccos(x) X}=\begin{pmatrix}
        x & i\sqrt{1-x^2}\\
        i\sqrt{1-x^2} & x
    \end{pmatrix},
\end{equation}
where $x\in[-1,1]$ is the input data. Then interlaying the data encoding unitary $S(x)$ with some parameterized Pauli $Z$ rotations $R_Z(\theta)$ gives the circuit of data re-uploading PQC for one variable as
\begin{equation}\label{eqnS:PQC_x}
    U_{\bm{\theta}}(x) \coloneqq R_Z(\theta_0) \prod_{j=1}^L S(x) R_Z(\theta_j),
\end{equation}
where $\bm \theta\ = (\theta_0, \ldots, \theta_L) \in \RR^{L+1}$ is a set of trainable parameters. The PQC in~\cref{eqnS:PQC_x} can be used to implement polynomial transformations on input $x$, as shown in the following lemma.

\begin{lemma}[\cite{gilyen2019quantum}]\label{lem:qsp}
There exists $\bm \theta \in \RR^{L+1}$ such that
\begin{equation}
    U_{\bm{\theta}}(x) = \begin{pmatrix}
        P(x) & iQ(x)\sqrt{1-x^2}\\
        iQ^*(x)\sqrt{1-x^2} & P^*(x)
    \end{pmatrix}
\end{equation}
if and only if polynomials $P, Q \in \CC[x]$ satisfy
\begin{enumerate}
    \item $\deg(P)\leq L$ and $\deg(Q) \leq L-1$,
    \item $P$ has parity $L \bmod 2$ and $Q$ has parity $(L-1) \bmod 2$\footnote{{For a polynomial $P\in\CC[x]$, $P$ has parity $0$ if all coefficients corresponding to odd powers of $x$ are $0$, and similarly $P$ has parity $1$ if all coefficients corresponding to even powers of $x$ are $0$.}},
    \item $\forall x\in [-1, 1]$, $\abs{P(x)}^2 + (1-x^2)\abs{Q(x)}^2 = 1$.
\end{enumerate}
\end{lemma}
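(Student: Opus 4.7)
The plan is to proceed by induction on the number of data-encoding layers $L$, since both directions admit a natural recursive structure. In the base case $L=0$, the PQC reduces to $U_{\theta_0}(x) = R_Z(\theta_0)$, whose diagonal form forces $P(x) = e^{-i\theta_0/2}$ and $Q(x) = 0$; this pair has degree $0$ and parity $0$, trivially meets the degree-$(-1)$ bound on $Q$, and satisfies $|P|^2 = 1$. For the inductive step, the key move in both directions is to isolate one block $S(x) R_Z(\theta_L)$ and reduce to the $(L-1)$-layer case.

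For the forward direction, I would decompose $U_{\bm\theta}(x) = U_{\bm\theta'}(x) \cdot S(x) R_Z(\theta_L)$, apply the inductive hypothesis to $U_{\bm\theta'}(x)$ to obtain polynomials $P', Q'$ with the claimed properties for $L-1$, and then perform the $2\times 2$ matrix multiplication. This yields the explicit recursion
\begin{align*}
P(x) &= e^{-i\theta_L/2}\bigl[xP'(x) - (1-x^2)Q'(x)\bigr],\\
Q(x) &= e^{i\theta_L/2}\bigl[P'(x) + xQ'(x)\bigr].
\end{align*}
Tracking degrees and parities (multiplication by $x$ shifts parity by one, while $(1-x^2)$ preserves it) then gives $\deg P \le L$ with parity $L \bmod 2$ and $\deg Q \le L-1$ with parity $(L-1)\bmod 2$. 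The normalization $|P|^2 + (1-x^2)|Q|^2 = 1$ is nothing more than the unitarity of $U_{\bm\theta}(x)$ read off from its first column, which is inherited automatically since $S(x) R_Z(\theta_L)$ is unitary for $x \in [-1,1]$.

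The converse direction is the main obstacle. Given $P, Q$ satisfying the three conditions, I need to construct $\theta_L$ and polynomials $P', Q'$ of one smaller degree that fulfil the conditions for $L-1$. Inverting the recursion above gives
\[
P'(x) = x e^{i\theta_L/2} P(x) + (1-x^2) e^{-i\theta_L/2} Q(x), \qquad Q'(x) = -e^{i\theta_L/2} P(x) + x e^{-i\theta_L/2} Q(x).
\]
Writing $p_L$ and $q_{L-1}$ for the coefficients of $x^L$ in $P$ and of $x^{L-1}$ in $Q$, the highest-order terms in $P'$ (at $x^{L+1}$) and in $Q'$ (at $x^L$) both cancel precisely when $e^{i\theta_L} = q_{L-1}/p_L$. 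The crucial point is that this ratio has unit modulus: reading $|P|^2 + (1-x^2)|Q|^2 = 1$ as a polynomial identity in $x$ and equating coefficients of $x^{2L}$ yields $|p_L|^2 - |q_{L-1}|^2 = 0$, so a real $\theta_L$ exists. The degenerate case $p_L = q_{L-1} = 0$ is handled by choosing $\theta_L = 0$. The parity hypothesis then forces the next coefficients of $P'$ and $Q'$ to vanish automatically, so $\deg P' \le L-1$ and $\deg Q' \le L-2$ with parities $(L-1)\bmod 2$ and $L \bmod 2$ respectively.

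The last thing to verify is that the reduced pair still satisfies $|P'|^2 + (1-x^2)|Q'|^2 = 1$. Rather than a brute-force algebraic expansion, I will argue structurally: by construction, the matrix with entries $(P', Q')$ is precisely $U_P(x) \cdot R_Z(-\theta_L)\, S(x)^\dagger$, where $U_P(x)$ denotes the given unitary with entries $(P, Q)$. Since $S(x)$ is unitary on $[-1,1]$, this product is unitary as well, and reading unitarity off its first column recovers the required normalization for $P', Q'$. Applying the inductive hypothesis to $(P', Q')$ produces $(\theta_0, \ldots, \theta_{L-1})$, and appending the already-chosen $\theta_L$ yields the desired $\bm\theta \in \RR^{L+1}$.
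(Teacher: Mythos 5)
The paper does not prove this lemma; it is imported verbatim from the cited reference \cite{gilyen2019quantum}, so there is no in-paper proof to compare against. Your argument is correct and is essentially the standard quantum-signal-processing induction from that reference: peel off the last block $S(x)R_Z(\theta_L)$, verify the degree/parity/normalization bookkeeping in the forward direction, and in the converse use the coefficient of $x^{2L}$ in the identity $\abs{P}^2+(1-x^2)\abs{Q}^2=1$ to show $\abs{p_L}=\abs{q_{L-1}}$, which makes the top-degree cancellation achievable with a real $\theta_L$. All the key computations check out, including the observation that the parity constraints kill the next-lower coefficients automatically and that the degenerate case $p_L=q_{L-1}=0$ already has reduced degree by parity. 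The only point worth making explicit is that in the converse direction the matrix $U_P(x)$ is not \emph{given} as unitary --- only the polynomials $P,Q$ and conditions 1--3 are given --- so you should note that condition 3 is exactly what makes $\bigl(\begin{smallmatrix} P & iQ\sqrt{1-x^2}\\ iQ^*\sqrt{1-x^2} & P^*\end{smallmatrix}\bigr)$ unitary on $[-1,1]$ (the off-diagonal orthogonality holds identically for matrices of this form); once that is said, your structural argument for the normalization of $(P',Q')$ is complete.
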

As shown in the above lemma, one could implement a polynomial transformation $\Poly(x)$ such that $\Poly(x) = \braket{0|U_{\bm{\theta}}(x) |0} = P(x)$. Notice that the achievable polynomial $\Poly(x)$ implemented in this way is limited to $P(x)$ for which there exists a polynomial $Q(x)$ satisfying the conditions of \cref{lem:qsp}. As the target polynomial is often real in practice, we could overcome such a limitation by defining $\Poly(x) = \braket{+| U_{\bm{\theta}}(x)|+} = \Re(P(x)) + i\Re(Q(x))\sqrt{1-x^2}$. Then we can achieve any real polynomials with parity $L \bmod 2$ such that $\deg(\Poly(x)) \leq L$, and $\abs{\Poly(x)}\leq 1$ for all $x\in[-1, 1]$.

\begin{corollary}[\cite{gilyen2019quantum}]\label{cor:qsp}
There exists $\bm\theta \in \RR^{L+1}$ such that
\begin{equation}
    p(x) = \braket{+| U_{\bm{\theta}}(x)|+}
\end{equation}
if and only if the real polynomial $p(x) \in \RR[x]$ satisfies
\begin{enumerate}
    \item $\deg(p(x))\leq L$,
    \item $p(x)$ has parity $L \bmod 2$ \footnote{{A polynomial $p(x)$ has parity $0$ if all coefficients corresponding to odd powers of $x$ are $0$, and similarly $p(x)$ has parity $1$ if all coefficients corresponding to even powers of $x$ are $0$.}},
    \item $\forall x\in [-1, 1]$, $\abs{p(x)}\leq 1$.
\end{enumerate}
\end{corollary}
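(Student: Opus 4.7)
The plan is to derive the corollary directly from Lemma \ref{lem:qsp} by writing $\braket{+|U_{\bm\theta}(x)|+}$ in terms of the block entries the lemma provides. Expanding $\ket{+}=(\ket{0}+\ket{1})/\sqrt{2}$ against the four matrix entries yields
\begin{equation}
\braket{+|U_{\bm\theta}(x)|+} = \tfrac{1}{2}\bigl(P(x)+P^{*}(x)\bigr) + \tfrac{i\sqrt{1-x^2}}{2}\bigl(Q(x)+Q^{*}(x)\bigr) = \Re\bigl(P(x)\bigr) + i\sqrt{1-x^2}\,\Re\bigl(Q(x)\bigr),
\end{equation}
and both directions of the biconditional will reduce to matching real and imaginary parts in this identity.

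For the forward direction I will assume $p(x)=\braket{+|U_{\bm\theta}(x)|+}$ and use the fact that $p$ is a real polynomial to conclude $\sqrt{1-x^2}\,\Re(Q(x))\equiv 0$ on $(-1,1)$, hence $\Re(Q)\equiv 0$ as an element of $\RR[x]$. Then $p(x)=\Re(P(x))$, and the three required properties transfer from $P$: the degree bound $\deg(p)\le L$ and parity $L\bmod 2$ follow because taking real parts of coefficients preserves both, while $|p(x)|\le |P(x)|\le 1$ on $[-1,1]$ follows from condition (3) of Lemma \ref{lem:qsp} (which forces $|P(x)|^2\le 1$ since $(1-x^2)|Q(x)|^2\ge 0$ there). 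This direction is essentially a one-line calculation.

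For the converse, the task is, given a real $p$ satisfying (1)--(3), to exhibit complex polynomials $P,Q\in\CC[x]$ obeying the conditions of Lemma \ref{lem:qsp} together with $\Re(P)=p$ and $\Re(Q)\equiv 0$. The naive choice $P=p$ with $Q$ real forces $(1-x^2)|Q(x)|^2=1-p(x)^2$, which in general has no polynomial solution because $(1-x^2)$ need not divide $1-p(x)^2$. The correct route is to allow $P$ a purely imaginary part: I will seek real polynomials $q,\tilde Q\in\RR[x]$ with parities matching those prescribed in Lemma \ref{lem:qsp}, and write $P=p+iq$, $Q=i\tilde Q$. The unitarity constraint of Lemma \ref{lem:qsp} then becomes
\begin{equation}
q(x)^2 + (1-x^2)\,\tilde Q(x)^2 = 1-p(x)^2,
\end{equation}
which is a decomposition of the non-negative even polynomial $1-p(x)^2$ of degree at most $2L$ on $[-1,1]$.

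The main obstacle is producing $q$ and $\tilde Q$ with the prescribed parities and degree bounds. This will not be a direct calculation but an invocation of a classical fact in real algebraic geometry: the Markov--Luk\'acs factorization of non-negative polynomials on $[-1,1]$, equivalently the complementary-polynomial construction used throughout the QSP literature (built from Fej\'er--Riesz factorization after the substitution $x=(z+z^{-1})/2$, see \cite{gilyen2019quantum}). Once such $q$ and $\tilde Q$ are obtained, Lemma \ref{lem:qsp} supplies parameters $\bm\theta\in\RR^{L+1}$ realising the corresponding $P,Q$, and the identity above then yields $\braket{+|U_{\bm\theta}(x)|+}=\Re(P(x))=p(x)$, completing the argument.
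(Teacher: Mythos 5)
Your proposal is correct and takes essentially the same route as the paper, which states this corollary as an import from \cite{gilyen2019quantum} and only records the identity $\braket{+|U_{\bm\theta}(x)|+}=\Re(P(x))+i\sqrt{1-x^2}\,\Re(Q(x))$. Your two directions (forcing $\Re(Q)\equiv 0$ and extracting the degree/parity/boundedness of $\Re(P)$ for necessity; the complementary-polynomial/Fej\'er--Riesz construction of $P=p+iq$, $Q=i\tilde Q$ for sufficiency) are exactly the argument of the cited reference, worked out in more detail than the paper itself provides.
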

\begin{remark}
    The results of PQC with Pauli $X$ basis encoding presented here have been established in the technique of quantum signal processing~\cite{low2016methodology, low2017optimal, gilyen2019quantum}, which uses interleaving signal operators and signal processing operators to transform the input signal. The QSP circuit could be identified as a PQC in the context of quantum machine learning.
\end{remark}

\subsubsection{Implementing trigonometric polynomials}
Other than the real polynomials, there are also types of single-qubit PQC with Pauli $Z$ basis encoding that could implement complex trigonometric polynomials~\cite{yu2022power}. The data encoding unitary is a single-qubit rotation in the Pauli $Z$ basis
\begin{equation}
	S(x) \coloneqq R_Z(x) = \begin{pmatrix}
		e^{ix/2} & 0\\
		0 & e^{-ix/2}
	\end{pmatrix},
\end{equation}
where $x \in \RR$ is the data. By interleaving the data encoding unitary $S(x)$ with trainable gates $R_Y(\theta)R_Z(\phi)$, the PQC is defined as
\begin{equation}\label{eqn:PQC_z}
	U_{\bm{\theta}, \bm{\phi}}(x) \coloneqq R_Z(\omega)R_Y(\theta_0)R_Z(\phi_0) \prod_{j = 1}^L S(x) R_Y(\theta_j)R_Z(\phi_j),
\end{equation}
where $\bm\theta = (\theta_0, \ldots, \theta_L) \in \RR^{L+1}$, $\bm\phi = (\phi_0, \ldots, \phi_L) \in \RR^{L+1}$ and $\omega \in \RR$. The following lemma characterizes the correspondence between PQC with $\sigma_z$ basis encoding and complex trigonometric polynomials.

\begin{lemma}[\cite{yu2022power}]\label{lem:trig_qsp}
There exist $\bm\theta,\bm\phi \in \RR^{L+1}$ and $\omega \in \RR$ such that
\begin{equation}\label{eqn:trig form of QSP_}
    U_{\bm{\theta}, \bm{\phi}}(x) = \begin{pmatrix}
        P(x) & -Q(x)\\
        Q^*(x) & P^*(x)
    \end{pmatrix}
\end{equation}
if and only if Laurent polynomials $P, Q \in \CC[e^{ix/2}, e^{-ix/2}]$ satisfy
\begin{enumerate}
    \item $\deg(P)\leq L$ and $\deg(Q) \leq L$,
    \item $P$ and $Q$ have parity $L \bmod 2$,
    \item $\forall x\in \RR$, $\abs{P(x)}^2 + \abs{Q(x)}^2 = 1$.
\end{enumerate}
\end{lemma}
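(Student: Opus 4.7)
The plan is to prove both directions by induction on $L$, after the substitution $z = e^{ix/2}$ which turns $S(x) = R_Z(x)$ into $\operatorname{diag}(z, z^{-1})$ and renders every entry of $U_{\bm\theta, \bm\phi}(x)$ a Laurent polynomial in $\CC[z, z^{-1}]$. Since $R_Y(\theta)$, $R_Z(\phi)$, and $S(x)$ all lie in $SU(2)$, the product $U_{\bm\theta, \bm\phi}(x)$ is always in $SU(2)$, and hence automatically has top-left entry $P$, bottom-right entry $P^*$, and off-diagonals $-Q$ and $Q^*$ with $|P|^2 + |Q|^2 = 1$; condition (3) of the lemma is thus built in, and only the degree and parity statements need to be tracked.

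For the necessity direction, the base case $L = 0$ is the standard Z-Y-Z Euler-angle decomposition: $U = R_Z(\omega)R_Y(\theta_0)R_Z(\phi_0)$ sweeps out all of $SU(2)$ with entries that are constants of parity zero. For the inductive step I would write $U_{L+1}(x) = U_L(x) \cdot S(x) R_Y(\theta_{L+1}) R_Z(\phi_{L+1})$ and observe that right multiplication by $R_Y R_Z$ is $z$-independent while $S(x)$ scales the two columns by $z$ and $z^{-1}$ respectively; consequently the Laurent degree of every entry grows by at most one and the parity flips, matching the claim at depth $L+1$.

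For sufficiency, the strategy is to strip off the last block. Given $P, Q$ of degree $L$ satisfying (1)-(3), I would form
\[
\widetilde U(x) \coloneqq \begin{pmatrix} P & -Q \\ Q^* & P^* \end{pmatrix} R_Z(-\phi_L)\, R_Y(-\theta_L)\, S(x)^{-1}
\]
and solve for $(\theta_L, \phi_L) \in \RR^2$ so that its top row consists of Laurent polynomials of degree at most $L-1$ with parity $(L-1) \bmod 2$. Extracting the extremal monomials reduces this to the pair of scalar equations
\begin{align*}
p_{-L}\, e^{i\phi_L/2} \cos(\theta_L/2) + q_{-L}\, e^{-i\phi_L/2} \sin(\theta_L/2) &= 0, \\
p_{L}\, e^{i\phi_L/2} \sin(\theta_L/2) - q_{L}\, e^{-i\phi_L/2} \cos(\theta_L/2) &= 0,
\end{align*}
where $p_{\pm L}, q_{\pm L}$ are the extreme Laurent coefficients of $P, Q$. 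The crux of the argument is that the $z^{2L}$ Laurent coefficient of the identity $|P|^2 + |Q|^2 = 1$ is exactly $p_L p_{-L}^* + q_L q_{-L}^* = 0$, which is precisely the compatibility condition making the two equations simultaneously solvable over the reals. Once the top row is reduced, the bottom row has the required form automatically because $\widetilde U \in SU(2)$, so applying the induction hypothesis to $\widetilde U$ supplies the remaining parameters.

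The main obstacle I anticipate is the degenerate case where some of $p_{\pm L}, q_{\pm L}$ vanish; then one of the two equations becomes trivial and $(\theta_L, \phi_L)$ is underdetermined. I would handle this by case analysis, selecting any solution of the surviving equation and verifying directly that the resulting $P', Q'$ still have the claimed degree and parity. The remaining bookkeeping is routine: confirming that multiplication by $z^{\pm 1}$ sends parity $L \bmod 2$ to parity $(L-1) \bmod 2$, and that after $L$ stripping steps the residue is a constant $SU(2)$ matrix to which the base-case Euler decomposition applies, thereby producing $(\omega, \theta_0, \phi_0)$.
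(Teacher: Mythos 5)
The paper does not prove this lemma; it is imported verbatim from the cited reference (Yu et al.), so there is no internal proof to compare against. Your reconstruction is the standard argument for results of this type and is correct: the change of variable $z=e^{ix/2}$, the observation that unitarity makes condition (3) automatic, the degree/parity bookkeeping for necessity, and the layer-stripping induction for sufficiency all go through. The crux you identify is right --- the $z^{2L}$ Laurent coefficient of $\abs{P}^2+\abs{Q}^2=1$ is $p_Lp_{-L}^*+q_Lq_{-L}^*=0$, which is exactly what makes the two stripping equations compatible over real $(\theta_L,\phi_L)$ --- and the same identity also resolves every degenerate subcase (e.g.\ $p_{-L}=0$, $q_{-L}\neq 0$ forces $\theta_L=0$, and compatibility then forces $q_L=0$ so the second equation is consistent), so the case analysis you defer does close. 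The only bookkeeping worth making explicit is that after killing the extremal coefficients the surviving exponents lie in $\{-L+2,\dots,L\}$ (resp.\ $\{-L,\dots,L-2\}$) before the shift by $z^{\mp 1}$, so the stripped entries land in $\{-L+1,\dots,L-1\}$ with parity $(L-1)\bmod 2$ as required, and after $L$ strips the residue is a constant $SU(2)$ matrix handled by the Euler base case.
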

Note that Laurent polynomials in $\CC[e^{ix/2}, e^{-ix/2}]$ with parity $0$ are Laurent polynomials in $\CC[e^{ix}, e^{-ix}]$ without parity constraints, which implies that the trigonometric QSP could implement complex trigonometric polynomials.

\begin{corollary}[\cite{yu2022power,wang2023quantum}]\label{corS:trig_qsp_proj}
	There exist $\bm\theta,\bm\phi \in \RR^{2L+1}$ and $\omega \in \RR$ such that
	\begin{equation}
		t(x) = \bra{0} U_{\bm{\theta}, \bm{\phi}}(x) \ket{0}
	\end{equation}
	if and only if the complex-valued trigonometric polynomial $t(x) = \sum_{j=-L}^L c_j e^{ijx}$ satisfies $\abs{t(x)} \leq 1$ for all $x \in \RR$.
\end{corollary}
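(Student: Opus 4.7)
The plan is to reduce \cref{corS:trig_qsp_proj} to \cref{lem:trig_qsp} by applying the latter with $2L$ layers instead of $L$, and by handling the ``completion'' of the $(0,0)$-entry into a full unitary via a Fej\'er--Riesz spectral factorization. The key observation is that if we invoke \cref{lem:trig_qsp} with depth $2L$ (which matches the $2L+1$ angles in each of $\bm\theta,\bm\phi$), then the resulting Laurent polynomial $P \in \CC[e^{ix/2},e^{-ix/2}]$ has degree at most $2L$ and parity $2L \bmod 2 = 0$, i.e.\ only even powers of $e^{ix/2}$ appear. Such objects are exactly the ordinary trigonometric polynomials $\sum_{j=-L}^L c_j e^{ijx}$ of degree at most $L$ in $e^{\pm ix}$, which is precisely the class appearing in the statement.

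For the ``if'' direction, given a trigonometric polynomial $t(x)=\sum_{j=-L}^L c_j e^{ijx}$ with $\abs{t(x)}\leq 1$, I would set $P(x) := t(x)$, viewed as a Laurent polynomial in $\CC[e^{ix/2},e^{-ix/2}]$ of degree $2L$ with even parity. The task is to produce a companion $Q$ of the same degree and parity satisfying $\abs{P(x)}^2+\abs{Q(x)}^2=1$ for all $x\in\RR$. Since $1-\abs{t(x)}^2$ is a non-negative trigonometric polynomial of degree at most $L$ in $e^{\pm ix}$, the Fej\'er--Riesz theorem yields a $Q\in\CC[e^{ix},e^{-ix}]$ of degree at most $L$ with $\abs{Q(x)}^2=1-\abs{t(x)}^2$. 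This $Q$, re-expressed in $\CC[e^{ix/2},e^{-ix/2}]$, has degree at most $2L$ with parity $0$, so the hypotheses of \cref{lem:trig_qsp} (with $L$ replaced by $2L$) are satisfied. The lemma then produces $\bm\theta,\bm\phi\in\RR^{2L+1}$ and $\omega\in\RR$ such that the PQC $U_{\bm\theta,\bm\phi}(x)$ has top-left entry $P(x)=t(x)$, i.e.\ $t(x)=\bra{0}U_{\bm\theta,\bm\phi}(x)\ket{0}$.

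For the converse ``only if'' direction, suppose $t(x)=\bra{0}U_{\bm\theta,\bm\phi}(x)\ket{0}$ for some $\bm\theta,\bm\phi\in\RR^{2L+1}$ and $\omega\in\RR$. By \cref{lem:trig_qsp} applied at depth $2L$, this matrix element equals a Laurent polynomial $P\in\CC[e^{ix/2},e^{-ix/2}]$ of degree at most $2L$ with parity $0$. Parity $0$ forces only even powers of $e^{ix/2}$, hence $P$ is a trigonometric polynomial $\sum_{j=-L}^L c_j e^{ijx}$. The unitarity identity $\abs{P(x)}^2+\abs{Q(x)}^2=1$ from \cref{lem:trig_qsp} immediately gives $\abs{t(x)}=\abs{P(x)}\leq 1$ for all $x\in\RR$.

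The main obstacle is the existence of the companion $Q$; everything else is a direct unpacking of \cref{lem:trig_qsp}. This obstacle is resolved by Fej\'er--Riesz: one must check that the factorization preserves the required degree bound and parity (i.e.\ that $Q$ actually lies in $\CC[e^{ix},e^{-ix}]$ of degree $\leq L$, which follows because $1-\abs{t(x)}^2$ is itself a trigonometric polynomial of degree $\leq L$ in $e^{\pm ix}$). No subtler issue arises because the problem is purely algebraic once the factorization is in hand, and the choice of the free global phase $\omega$ can always absorb an overall unimodular factor when identifying $P$ with $t$.
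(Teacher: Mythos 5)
Your proposal is correct and matches the paper's (largely implicit) argument: the paper likewise obtains the corollary by invoking \cref{lem:trig_qsp} at depth $2L$ and identifying parity-$0$ Laurent polynomials in $e^{\pm ix/2}$ of degree at most $2L$ with trigonometric polynomials $\sum_{j=-L}^{L}c_je^{ijx}$, deferring the construction of the companion $Q$ to the cited references, where it is indeed done by Fej\'er--Riesz factorization. One small slip in your "if" direction: $1-\abs{t(x)}^2$ is a non-negative trigonometric polynomial of degree at most $2L$ (not $L$) in $e^{\pm ix}$, since $\abs{t}^2$ involves exponents $j-k\in[-2L,2L]$; applying Fej\'er--Riesz to this degree-$2L$ polynomial and recentering the spectral factor by $e^{-iLx}$ still yields $Q\in\CC[e^{ix},e^{-ix}]$ of degree at most $L$, hence of degree at most $2L$ and parity $0$ in $e^{\pm ix/2}$, so your degree and parity budget closes and the argument is unaffected.
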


\subsection{Related work in PQC approximation}\label{SM:related work}
In this subsection, we review prior literature related to the approximation capabilities of PQCs, which characterizes how the architectural properties of a PQC affect the resulting functions it can fit, and its ensuing performance. 
After a systematic comparison, we conclude that our results provide precise error bounds for continuous function approximation and make no assumptions about the constructed PQCs. More importantly, all the variables in our proposal take the form of parameters within rotation gates and remain distinct from the data encoding gates to avoid any classical computational influence, thus preserving the inherent quantum property of our approach. 

In theoretical machine learning, statistical complexity is a notion that measures the inherent richness characterizing a given hypothesis space. There are various statistical complexity measures, including the Vapnik-Chervonenkis (VC) dimension~\cite{vapnik1982necessary}, the metric entropy~\cite{tikhomirov1993eentropy}, the Gaussian complexity~\cite{bartlett2003rademacher}, and the Rademacher complexity~\cite{bartlett2003rademacher}, etc. To gauge the statistical complexity of PQCs, \citet{du2022efficient} have explored the covering entropy of PQCs in terms of the number of quantum gates and the measurement observable. \citet{bu2022statistical} have investigated the dependence of the Rademacher complexity of PQCs on the resources, width, depth, and the property of input and output registers. The assessment of PQCs has extended to encompass an array of statistical complexity measures, including the Pseudo-Dimension, as delineated in \citet{caro2020pseudodimension}, and the VC dimension, as expounded upon in \citet{chen2022general}. 
Furthermore, the evaluation of PQC expressivity has extended its purview to metrics rooted in information theory. \citet{abbas2021power} have evaluated PQC expressivity through the prism of the effective dimension, a data-dependent metric contingent upon the Fisher information. In a parallel endeavor, \citet{du2020expressive} have concentrated their attention on generative tasks, employing entanglement entropy as a metric for quantifying PQC expressivity. 
It is important to underscore that, while statistical complexity metrics and information-inspired metrics provide invaluable insights into the `volume' of hypothesis spaces, they do not precisely delineate the functions amenable to representation by these models. 

To further explore the intricacies of PQCs and their expressivity, an alternative avenue of research has emerged, as highlighted by recent studies~\cite{schuld2021effect,gilvidal2020input, yu2022power, perez-salinas2021one, manzano2023parametrized}. They rewrote the PQC output, i.e., the inner product between an input quantum state and a variational observable, in the form of partial Fourier series. This innovative perspective introduces a more nuanced toolbox for assessing PQC expressivity, offering fresh insights within the quantum machine learning domain, notably with respect to the universal approximation property (UAP).
However, it is imperative to underscore that many investigations employing Fourier expansion have been predicated upon certain impractical assumptions. These assumptions encompass the demand for arbitrary parameterized global unitaries and observables, thus posing significant challenges to the practical implementation of the constructed quantum circuits. The existence proof of universal approximation also does not explicitly give approximation error bounds of PQCs.

A very general approach to expressiveness in the context of approximation is the method of nonlinear widths by \citet{devore1989optimal} that concerns the approximation of a family of functions under the assumption of a continuous dependence of the model on the approximated function. \citet{perez-salinas2021one} have proved that single-qubit data re-uploading PQCs are universal function approximators, inheriting the famous universal approximation theorem for neural networks~\cite{cybenko1989approximation,hornik1991approximation}. In a quantum-enhanced context, \citet{goto2021universal} have constructed PQCs to approximate any continuous function guided by the Stone-Weierstrass theorem. 
\citet{qi2023TheoreticalErrorPerformance} have studied the approximation error of PQCs enhanced by tensor-train networks. Their investigation focused on smooth functions, considering factors such as the number of qubits and quantum measurement counts. Furthermore, \citet{gonon2023universal} have defined a specific hypothesis space consisting of non-oscillating functions, drawing inspiration from \citet{barron1993universal} and devised PQCs for approximating such functions without encountering the curse of dimensionality (CoD). Notably, the mitigation of CoD arises from their specific hypothesis space definition and is also observed within the domain of classical neural network~\cite{e2022barron}. It is essential to acknowledge that these works unveil a hybrid nature, blurring the boundaries between classical and quantum domains in circuit construction. The hybrid structure manifests in the data encoding phase and becomes evident in the weighted summation of outputs from foundational quantum circuits. Consequently, whether the powerful expressivity comes from the classical part or the quantum part of hybrid models is unclear.

In our present work, we make no assumptions in the construction of the PQCs. In our PQC model, all variables take the form of parameters within rotation gates. Besides, these trainable parameters remain distinct from the data encoding gates to avoid any classical computational influence. These properties ensure that our constructed PQCs retain practicality and remain firmly rooted within the quantum domain.

\section{Implementing multivariate polynomials using PQCs}\label{Appendix:B}
\subsection{Implementing multivariate real polynomials}\label{subsec:PQC_multi_poly}
A multivariate polynomial with $d$ variables and degree $s\in\NN$ is defined as
\begin{equation}
	p(\bm{x}) \coloneqq \sum_{\norm{\bm\a}_1 \leq s} c_{\bm\a} \bm{x^\a},
\end{equation}
where $\bm{x} = (x_1, \ldots, x_d) \in \RR^d$, $\bm\a = (\a_1, \ldots, \a_d) \in \NN^d$, $c_{\bm \alpha}\in\RR$ and $\bm{x^\a} = x_1^{\a_1}x_2^{\a_2}\cdots x_d^{\a_d}$. To implement the multivariate polynomial $p(\bm{x})$, we first build a PQC to express a monomial $c_{\bm\a}\bm{x^\a} = c_{\bm\a}x_1^{\a_1}x_2^{\a_2}\cdots x_d^{\a_d}$, where $\abs{c_{\bm\a}\bm{x^\a}} \leq 1$ for $\bm{x} \in[0,1]^d$ and $\norm{\bm{\a}}_1 \leq s$. We apply the single-qubit PQC with Pauli $X$ basis encoding defined in \cref{eqnS:PQC_x} on each $x_j$ for $1 \leq j \leq d$, respectively. 

\begin{lemma}\label{lem:monomial_PQC}
	Given a monomial $c_{\bm\a}\bm{x^\a} = c_{\bm\a}x_1^{\a_1}x_2^{\a_2}\cdots x_d^{\a_d}$ such that $\abs{c_{\bm\a}\bm{x^\a}} \leq 1$ for all $\bm{x} \in[0,1]^d$ and $\norm{\bm{\a}}_1 \leq s$ for $s\in\NN$, there exists a PQC $U^{\bm\a}(\bm{x})$ such that
	\begin{equation}
		\bra{+}^{\otimes d}\! U^{\bm\a}(\bm{x}) \!\ket{+}^{\otimes d} = c_{\bm\a}\bm{x^\a}.
	\end{equation}
	The width of the PQC is at most $d$, the depth is at most $2s+1$, and the number of parameters is at most $s+d$.
\end{lemma}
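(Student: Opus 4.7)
The plan is to construct $U^{\bm\a}(\bm x)$ as a tensor product across $d$ disjoint qubits, with the $j$-th qubit carrying an independent single-qubit data-re-uploading PQC of the form in \cref{eqnS:PQC_x} that implements the univariate polynomial $x_j^{\alpha_j}$. Because the factors act on disjoint qubits, the expectation $\bra{+}^{\otimes d}(\cdot)\ket{+}^{\otimes d}$ factorizes into a product over $j$, so the multivariate task reduces to $d$ copies of the univariate QSP problem already handled by \cref{cor:qsp}.

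For each $j=1,\dots,d$ I would apply \cref{cor:qsp} to the target polynomial $x_j^{\alpha_j}$, which has degree $\alpha_j$, parity $\alpha_j\bmod 2$, and obeys $|x_j^{\alpha_j}|\le 1$ on $[-1,1]$. The corollary then yields a single-qubit PQC $U_{\bm\theta_j}(x_j)$ with $\alpha_j+1$ trainable $R_Z$ angles interleaved with $\alpha_j$ copies of the Pauli-$X$ encoding $S(x_j)$, whose single-qubit depth is $2\alpha_j+1$ and which satisfies $\bra{+}U_{\bm\theta_j}(x_j)\ket{+}=x_j^{\alpha_j}$. To install the coefficient $c_{\bm\a}$ I would absorb it into exactly one factor, say $j=1$, asking \cref{cor:qsp} to implement $c_{\bm\a}\,x_1^{\alpha_1}$ instead of $x_1^{\alpha_1}$; this leaves the degree and parity untouched, and the boundedness hypothesis $|c_{\bm\a}\bm{x^\a}|\le 1$ on $[0,1]^d$, evaluated at $\bm x=\bm 1$, immediately yields $|c_{\bm\a}|\le 1$, so $|c_{\bm\a}\,x_1^{\alpha_1}|\le 1$ on $[-1,1]$ as required by the corollary. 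Setting $U^{\bm\a}(\bm x)\coloneqq\bigotimes_{j=1}^d U_{\bm\theta_j}(x_j)$ and factorizing the expectation yields
$$\bra{+}^{\otimes d}U^{\bm\a}(\bm x)\ket{+}^{\otimes d}=\prod_{j=1}^d\bra{+}U_{\bm\theta_j}(x_j)\ket{+}=c_{\bm\a}\,\bm{x^\a}.$$

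The resource counts then fall out immediately. The circuit uses one qubit per variable, giving width $d$; the single-qubit blocks act on disjoint registers and therefore execute in parallel, so the overall depth is $\max_j(2\alpha_j+1)\le 2\norm{\bm\a}_1+1\le 2s+1$; and the total number of trainable $R_Z$ angles is $\sum_{j=1}^d(\alpha_j+1)=\norm{\bm\a}_1+d\le s+d$. There is no substantive obstacle in the argument: the only subtlety worth flagging is the observation that the global hypothesis $|c_{\bm\a}\bm{x^\a}|\le 1$ on $[0,1]^d$ collapses at $\bm x=\bm 1$ to $|c_{\bm\a}|\le 1$, which is precisely the boundedness precondition that legitimizes the coefficient-absorption step under \cref{cor:qsp}.
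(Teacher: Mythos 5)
Your proposal is correct and follows essentially the same route as the paper: a tensor product of single-qubit Pauli-$X$-encoded PQCs, each realizing $x_j^{\alpha_j}$ via \cref{cor:qsp}, with the coefficient $c_{\bm\a}$ absorbed into one factor and the expectation factorizing over disjoint qubits. Your explicit remark that $\abs{c_{\bm\a}}\leq 1$ follows from evaluating the hypothesis at $\bm{x}=\bm{1}$ is a small justification the paper leaves implicit, but otherwise the arguments and resource counts coincide.
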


\begin{proof}
	By \cref{cor:qsp}, there exist $d$ single-qubit PQCs $U_{\bm\theta_1}^{\a_1}(x_1), U_{\bm\theta_2}^{\a_2}(x_2), \ldots, U_{\bm\theta_d}^{\a_d}(x_d)$ such that
	\begin{align*}
		\braket{+|U_{\bm\theta_1}^{\a_1}(x_1)|+} &= c_{\bm\a}x_1^{\a_1},\\
		\braket{+|U_{\bm\theta_2}^{\a_2}(x_2)|+} &= x_2^{\a_2},\\
		\cdots\\
		\braket{+|U_{\bm\theta_d}^{\a_d}(x_d)|+} &= x_d^{\a_d},
	\end{align*}
	where the number of layers of each PQC is $L_j = \a_j$ for $1 \leq j \leq d$. We then define a $d$-qubit PQC as
	\begin{equation}
		U^{\bm\a}(\bm{x}) = \bigotimes_{j=1}^d U_{\bm\theta_j}^{\a_j}(x_j),
	\end{equation}
	which gives
	\begin{equation}
		\bra{+}^{\otimes d}\! U^{\bm\a}(\bm{x}) \!\ket{+}^{\otimes d} = \prod_{j=1}^d \braket{+|U_{\bm\theta_j}^{\a_j}(x_j)|+} = c_{\bm\a}\bm{x^\a}.
	\end{equation}
	Since $\norm{\bm\a}_1 = \sum_{j=1}^d \alpha_j \leq s$, we can conclude that the depth of $U^{\bm\a}(\bm{x})$ is at most $2s+1$ and the number of parameters in $U^{\bm\a}(\bm{x})$ is at most $s + d$.
\end{proof}

The next step is to combine monomials together to implement the multivariate polynomial. Specifically, we would like to implement the following (unnormalized) operator
\begin{equation}
	U_p(\bm{x}) \coloneqq \sum_{\norm{\bm\a}_1 \leq s} U^{\bm\a}(\bm{x})
\end{equation}
so that we can implement an (unnormalized) polynomial as
\begin{equation}
	\bra{+}^{\otimes d}\!  U_p(\bm{x}) \!\ket{+}^{\otimes d} = \sum_{\norm{\bm\a}_1 \leq s} \bra{+}^{\otimes d}\!U^{\bm\a}(\bm{x})\!\ket{+}^{\otimes d} = \sum_{\norm{\bm\a}_1 \leq s} c_{\bm\a} \bm{x^\a} = p(\bm{x}).
\end{equation}
We denote $T$ the number of terms in the summation and observe that it can be bounded as
\begin{equation}
	T = \sum_{\norm{\bm\a}_1 \leq s} 1 = \sum_{j=0}^s \sum_{\norm{\bm\a}_1=j} 1 \leq \sum_{j=0}^s d^s \leq (s+1)d^s.
\end{equation}
For convenience, we rewrite the normalized target operator with $\bm\a$ being an indexed variable as
\begin{equation}\label{eqnS:sum_PQC}
	U_p(\bm{x}) = \sum_{j=1}^T \frac{1}{T} U^{\bm\a^{(j)}}(\bm{x}).
\end{equation}
However, the addition operation in quantum computing is non-trivial as the sum of unitary operators is not necessarily unitary. To sum the monomials together, we utilize the technique of \emph{linear combination of unitaries} (LCU)~\cite{childs2012hamiltonian} to implement the operator $U_p(\bm{x})$ in \cref{eqnS:sum_PQC} on a quantum computer. We first construct a unitary operator $F$ such that
\begin{equation}
	F\ket{0} = \frac{1}{\sqrt{T}} \sum_{j=1}^T \ket{j}.
\end{equation}
The unitary $F$ could be simply implemented by Hadamard gates. Next, we construct a controlled unitary
\begin{equation}
	U_c(\bm{x}) = \sum_{j=1}^T \ketbra{j}{j} \otimes U^{\bm\a^{(j)}}(\bm{x}).
\end{equation}
Note that each $\ketbra{j}{j} \otimes U^{\bm\a^{(j)}}(\bm{x})$ could be constructed using $(\log T)$-qubit controlled Pauli rotation gates, as $U^{\bm\a^{(j)}}(\bm{x})$ consisting of single-qubit Pauli rotation gates. The $(\log T)$-qubit controlled gates could be further decomposed into quantum circuits of CNOT gates and single-qubit rotation gates in $O(\log T)$ circuit depth without using any ancilla qubit. We refer to the detailed implementation of these multi-controlled gates to \citet{dasilva2022lineardepth}.
Then the unitary $W_{lcu} = (F^\dagger\otimes I) U_c (F \otimes I)$ satisfies that
\begin{equation}
	W_{lcu} \ket{0} \ket{+}^{\otimes d} = \ket{0} U_p(\bm{x}) \ket{+}^{\otimes d} + \ket{\perp},
\end{equation}
where $(\bra{0}\otimes I) \ket{\perp} = 0$.
Notice that
\begin{equation}
	\bra{0}\bra{+}^{\otimes d} W_{lcu} \ket{0}  \ket{+}^{\otimes d} = \bra{+}^{\otimes d}  U_p(\bm{x}) \ket{+}^{\otimes d} = p(\bm{x}).
\end{equation}
To obtain the polynomial $p(\bm{x})$, we could estimate $\bra{0}\bra{+}^{\otimes d}  W_{lcu} \ket{0}  \ket{+}^{\otimes d}$ using the Hadamard test.

\renewcommand\thetheorem{\ref{prop:pqc_poly}}
\setcounter{theorem}{\arabic{theorem}-1}
\begin{theorem}
    For any multivariate polynomial $p(\bm{x})$ with $d$ variables and degree $s$ such that $\abs{p(\bm{x})} \leq 1$ for $\bm{x} \in [0,1]^d$, there exists a PQC $W_p(\bm{x})$ such that
    \begin{equation}
        f_{W_p}(\bm{x}) \coloneqq \bra{0} W^\dagger_{p}(\bm{x}) Z^{(0)} W_{p}(\bm{x}) \ket{0} =  p(\bm{x}) 
    \end{equation}
    where $Z^{(0)}$ is the Pauli $Z$ observable on the first qubit. The width of the PQC is $O(d+\log s + s\log d)$, the depth is $O(s^2 d^s (\log s + s\log d))$, and the number of parameters is $O(sd^s(s+d))$.
\end{theorem}
\renewcommand{\thetheorem}{S\arabic{theorem}}
\begin{proof}
	We apply the Hadamard test on $W_{lcu}$, giving the quantum circuit $W_p(\bm{x})$ as follows.
	\[
	\Qcircuit @C=1em @R=0.5em {
		\lstick{\ket{0}}& \qw & \gate{H} & \qw  & \ctrl{1} & \qw & \gate{H} & \qw \\
		\lstick{\ket{0}} & {/} \qw  & \qw & \qw &  \multigate{1}{W_{lcu}} & \qw & \qw & \qw \\
		\lstick{\ket{0}} & {/} \qw & \gate{H^{\otimes d}} & \qw  & \ghost{H^{\otimes d}} & \qw & \qw & \qw
	}
	\]
	Measuring the first qubit of $W_{p}(\bm{x})$, we have
	\begin{equation}
		f_{W_p}(\bm{x}) \coloneqq \bra{0} W^\dagger_{p}(\bm{x}) Z^{(0)} W_{p}(\bm{x}) \ket{0} = \bra{0}\bra{+}^{\otimes d} W_{lcu} \ket{0}  \ket{+}^{\otimes d} =  p(\bm{x}). 
	\end{equation}
	The controlled unitary used in LCU,
	\begin{equation}
		U_c(\bm{x}) = \sum_{j=1}^T \ketbra{j}{j} \otimes U^{\bm\a^{(j)}}(\bm{x}),
	\end{equation}
	could be implemented by at most $O(Ts)$ $(\log T)$-qubit controlled gates. A $(\log T)$-qubit controlled gate could be implemented by a quantum circuit consisting of $\CNOT$ gates and single-qubit gates with depth $O(\log T)$~\cite{dasilva2022lineardepth}. Thus $U_c(\bm{x})$ could be implemented by a quantum circuit with depth $O(sT\log T)$ and width $O(d + \log T)$. Then the depth and width of $W_{lcu} = (F^\dagger\otimes I) U_c (F \otimes I)$ are in the same order of $U_c(\bm{x})$ since $F$ is simply tensor of Hadamard gates. Therefore the entire depth of the circuit $W_p$ is $O\bigl(sT\log T + d\bigr)$, and the width of $W_p$ is $O(d + \log T)$. As $T \leq (s+1)d^s$. Note that the number of parameters in the PQC equals the number of parameters in $U_c(\bm{x})$, which is $O(T(s+d))$.
\end{proof}
Note that measuring the first qubit of $W_p(\bm{x})$ for $O(\frac{1}{\eps^2})$ times is needed to estimate the value of $p(\bm{x})$ up to an additive error $\eps$. We could further use the amplitude estimation algorithm~\cite{brassard2002quantum} to reduce the overhead while increasing the circuit depth by $O(\frac{1}{\eps})$.

\subsection{Implementing multivariate trigonometric polynomials}
We extend the PQCs with $R_Z$ encoding to implement multivariate trigonometric polynomials. A multivariate trigonometric polynomials with $d$ variables and degree $s$ is defined as
\begin{equation}
	t(\bm{x}) \coloneqq \sum_{\norm{\bm{n}}_1 \leq s} c_{\bm{n}} e^{i\bm{n} \cdot \bm{x}}
\end{equation}
where $c_{\bm{n}} \in \CC$, $\bm{x} = (x_1, \ldots, x_d) \in \RR^d$, $\bm\n = (\a_1, \ldots, \a_d) \in \ZZ^d$, and $e^{i\bm{n} \cdot \bm{x}} = e^{in_1x_1} e^{in_2x_2}\cdots e^{in_dx_d}$. Consider a trigonometric monomial $c_{\bm{n}} e^{i\bm{n} \cdot \bm{x}} = c_{\bm{n}} e^{in_1x_1} e^{in_2x_2}\cdots e^{in_dx_d}$ such that $\abs{c_{\bm{n}} e^{i\bm{n} \cdot \bm{x}}} \leq 1$ for all $\bm{x} \in \RR^d$ and $\norm{\bm{n}}_1 \leq s$, we could apply the single-qubit PQC with $R_Z$ encoding as defined in \cref{eqn:PQC_z} on each $x_j$ for $1 \leq j \leq d$ respectively.

\begin{lemma}\label{lem:trig_monomial_PQC}
	Given a trigonometric monomial $c_{\bm{n}} e^{i\bm{n} \cdot \bm{x}} = c_{\bm{n}} e^{in_1x_1} e^{in_2x_2}\cdots e^{in_dx_d}$ such that $\abs{c_{\bm{n}} e^{i\bm{n} \cdot \bm{x}}} \leq 1$ for all $\bm{x} \in \RR^d$ and $\norm{\bm{n}}_1 \leq s$, there exists a PQC $U^{\bm{n}}(\bm{x})$ such that
	\begin{equation}
		\bra{0}^{\otimes d}\! U^{\bm{n}}(\bm{x}) \!\ket{0}^{\otimes d} = c_{\bm{n}} e^{i\bm{n} \cdot \bm{x}}.
	\end{equation}
	The width of the PQC is at most $d$, the depth is at most $6s+3$, and the number of parameters is at most $4s+3d$.
\end{lemma}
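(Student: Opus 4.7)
The plan is to mirror the proof of \cref{lem:monomial_PQC} step by step, replacing the Pauli-$X$ basis encoding and \cref{cor:qsp} with the Pauli-$Z$ basis encoding and \cref{corS:trig_qsp_proj}. The key observation is that the multivariate trigonometric monomial factorizes cleanly as a product of $d$ univariate factors,
\begin{equation*}
c_{\bm{n}}\, e^{i\bm{n}\cdot\bm{x}} = \bigl(c_{\bm{n}}\, e^{in_1 x_1}\bigr)\prod_{j=2}^{d} e^{in_j x_j},
\end{equation*}
so a tensor product of $d$ single-qubit PQCs suffices; unlike the real-polynomial \cref{prop:pqc_poly}, no LCU or Hadamard test is required here because a single monomial is multiplicative across variables.

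First I would note that $\abs{c_{\bm{n}}\, e^{i\bm{n}\cdot\bm{x}}} \leq 1$ together with $\abs{e^{i\bm{n}\cdot\bm{x}}}=1$ forces $\abs{c_{\bm{n}}} \leq 1$, so the coefficient can be absorbed into the first factor without violating the sup-norm-one requirement of \cref{corS:trig_qsp_proj}. Then, for each $j=1,\ldots,d$, I would apply \cref{corS:trig_qsp_proj} with target univariate trigonometric polynomial $c_{\bm{n}}\, e^{in_1 x_1}$ (for $j=1$) or $e^{in_j x_j}$ (for $j\geq 2$)—each of Laurent degree $\abs{n_j}$ and sup-norm at most $1$—to obtain a single-qubit PQC $U^{n_j}_{\bm\theta_j,\bm\phi_j}(x_j)$ satisfying $\bra{0}U^{n_j}_{\bm\theta_j,\bm\phi_j}(x_j)\ket{0}$ equal to the corresponding univariate trig monomial. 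Defining $U^{\bm{n}}(\bm{x}) := \bigotimes_{j=1}^{d} U^{n_j}_{\bm\theta_j,\bm\phi_j}(x_j)$, the tensor product structure immediately gives
\begin{equation*}
\bra{0}^{\otimes d}\, U^{\bm{n}}(\bm{x})\, \ket{0}^{\otimes d} = \prod_{j=1}^{d} \bra{0} U^{n_j}_{\bm\theta_j,\bm\phi_j}(x_j)\ket{0} = c_{\bm{n}}\, e^{i\bm{n}\cdot\bm{x}}.
\end{equation*}

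For the resource count, \cref{corS:trig_qsp_proj} uses $\bm\theta_j,\bm\phi_j \in \RR^{2\abs{n_j}+1}$ plus a global $\omega_j \in \RR$, so each single-qubit block carries $4\abs{n_j}+3$ trainable parameters. From the layered structure $R_Z(\omega_j)R_Y(\theta_{j,0})R_Z(\phi_{j,0})\prod_{k=1}^{2\abs{n_j}} S(x_j)R_Y(\theta_{j,k})R_Z(\phi_{j,k})$, its depth is $3 + 3\cdot 2\abs{n_j} = 6\abs{n_j}+3$. Since the $d$ single-qubit PQCs act on disjoint qubits in parallel, the width is $d$, the overall depth is $\max_j(6\abs{n_j}+3) \leq 6s+3$, and the total parameter count is $\sum_{j=1}^{d}(4\abs{n_j}+3) = 4\norm{\bm{n}}_1 + 3d \leq 4s+3d$, matching the statement. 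There is no genuine obstacle in the argument; the only subtlety worth flagging is that $n_j$ may be negative or zero, which is exactly why the Laurent-polynomial, no-parity version in \cref{corS:trig_qsp_proj} is required rather than the stricter \cref{lem:trig_qsp}, and the case $n_j = 0$ simply collapses the corresponding factor to a constant implementable well within the stated bounds.
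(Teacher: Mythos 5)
Your proof is correct and follows essentially the same route as the paper's: a tensor product of $d$ single-qubit Pauli-$Z$-encoded PQCs obtained from \cref{corS:trig_qsp_proj}, with the coefficient $c_{\bm{n}}$ absorbed into the first factor and the resource counts derived from the layered structure. Your accounting of depth and parameters ($6\abs{n_j}+3$ and $4\abs{n_j}+3$ per qubit) is in fact slightly more careful than the paper's, which loosely writes the number of layers as $L_j = n_j$ rather than $2\abs{n_j}$, but both arrive at the same stated bounds.
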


\begin{proof}
	By \cref{corS:trig_qsp_proj}, we could construct $d$ single-qubit PQCs $U_{\bm\theta_1,\bm\phi_1}^{n_1}(x_1), U_{\bm\theta_2,\bm\phi_2}^{n_2}(x_2), \ldots, U_{\bm\theta_d,\bm\phi_d}^{n_d}(x_d)$ such that
	\begin{align*}
		\braket{0|U_{\bm\theta_1,\bm\phi_1}^{n_1}(x_1)|0} &= c_{\bm{n}}e^{in_1x_1},\\
		\braket{0|U_{\bm\theta_2,\bm\phi_2}^{n_2}(x_2)|0} &= e^{in_2x_2},\\
		\cdots\\
		\braket{0|U_{\bm\theta_d,\bm\phi_d}^{n_d}(x_d)|0} &= e^{in_dx_d},
	\end{align*}
	where the number of layers of each PQC is $L_j = n_j$ for $1 \leq j \leq d$. We then define a $d$-qubit PQC as
	\begin{equation}
		U^{\bm{n}}(\bm{x}) = \bigotimes_{j=1}^d U_{\bm\theta_j,\bm\phi_j}^{n_j}(x_j),
	\end{equation}
	which gives
	\begin{equation}
		\bra{0}^{\otimes d}\! U^{\bm{n}}(\bm{x}) \!\ket{0}^{\otimes d} = \prod_{j=1}^d \braket{0|U_{\bm\theta_j,\bm\phi_j}^{n_j}(x_j)|0} = c_{\bm{n}} e^{i\bm{n} \cdot \bm{x}}.
	\end{equation}
	Since $\norm{\bm{n}}_1 = \sum_{j=1}^d n_j \leq s$, we can conclude that the depth of $U^{\bm{n}}(\bm{x})$ is at most $6s+3$ and the number of parameters in $U^{\bm{n}}(\bm{x})$ is at most $4s + 3d$.
\end{proof}

Then we could apply the technique of LCU on the PQCs $U^{\bm{n}}(\bm{x})$ to implement the operator
\begin{equation}
	U_t(\bm{x}) \coloneqq \sum_{\norm{\bm{n}}_1 \leq s} U^{\bm{n}}(\bm{x}),
\end{equation}
so that we can implement the multivariate trigonometric polynomial as
\begin{equation}
	\bra{+}^{\otimes d}\!  U_t(\bm{x}) \!\ket{+}^{\otimes d} = \sum_{\norm{\bm{n}}_1 \leq s} \bra{+}^{\otimes d}\!U^{\bm{n}}(\bm{x})\!\ket{+}^{\otimes d} = \sum_{\norm{\bm{n}}_1 \leq s} c_{\bm{n}} e^{i\bm{n} \cdot \bm{x}} = t(\bm{x}).
\end{equation}
Note that the number of terms in the summation is
\begin{equation}
	\sum_{\norm{\bm{n}}_1 \leq s} 1 = \sum_{j=0}^{s} \sum_{\norm{\bm{n}}_1=j} 1 \leq \sum_{j=0}^s d^{2s} \leq (s+1)d^{2s}.
\end{equation}
Then, we have the following proposition.
\begin{proposition}\label{prop:PQC_multi_trig}
    For any multivariate trigonometric polynomial $t(\bm{x})$ with $d$ variables and degree $s$ such that $\abs{t(\bm{x})} \leq 1$ for $\bm{x} \in \RR^d$, there exists a PQC $W_{tri}(\bm{x})$ such that
    \begin{equation}
        f_{W_{tri}}(\bm{x}) \coloneqq \bra{0} W^\dagger_{tri}(\bm{x}) Z^{(0)} W_{tri}(\bm{x}) \ket{0} =  t(\bm{x}) 
    \end{equation}
    where $Z^{(0)}$ is the Pauli $Z$ observable on the first qubit. The width of the PQC is $O(d+\log s + s\log d)$, the depth is $O(s^2 d^{2s} (\log s + s\log d))$, and the number of parameters is $O(sd^{2s}(s+d))$.
\end{proposition}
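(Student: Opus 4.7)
The plan is to mirror the construction used for Proposition~\ref{prop:pqc_poly}, but now working with the PQCs of Lemma~\ref{lem:trig_monomial_PQC} (with Pauli-$Z$ basis encoding) in place of the Pauli-$X$ encoded ones. First, I would decompose the target $t(\bm x) = \sum_{\norm{\bm n}_1 \leq s} c_{\bm n} e^{i\bm n\cdot \bm x}$ into at most $T \leq (s+1)d^{2s}$ trigonometric monomials, each of magnitude bounded by $1$ on $\RR^d$ (after absorbing a suitable normalization into one of the coefficients). By Lemma~\ref{lem:trig_monomial_PQC}, each such monomial is implemented by a tensor-product PQC $U^{\bm n}(\bm x)$ of width $d$, depth $O(s)$, and $O(s+d)$ parameters, with $\bra{0}^{\otimes d} U^{\bm n}(\bm x)\ket{0}^{\otimes d} = c_{\bm n} e^{i \bm n \cdot \bm x}$.

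Next I would aggregate these monomial PQCs via the LCU construction used in the proof of Proposition~\ref{prop:pqc_poly}. Index the $T$ terms and take $F = H^{\otimes \lceil \log T \rceil}$ so that $F\ket{0} = \tfrac{1}{\sqrt{T}} \sum_{j=1}^T \ket{j}$, then build the select oracle
\begin{equation}
    U_c(\bm x) = \sum_{j=1}^T \ketbra{j}{j} \otimes U^{\bm n^{(j)}}(\bm x),
\end{equation}
and set $W_{lcu} = (F^\dagger \otimes I)\, U_c(\bm x)\, (F \otimes I)$. As in the real case, each controlled layer $\ketbra{j}{j} \otimes U^{\bm n^{(j)}}(\bm x)$ reduces to $O(s)$ multi-controlled single-qubit rotations, which by~\cite{dasilva2022lineardepth} decompose into CNOT and single-qubit gates of depth $O(\log T)$ without ancillas; summing over the $T$ terms gives $U_c(\bm x)$ depth $O(sT\log T)$ and width $O(d + \log T)$, while the parameter count stays at $O(T(s+d))$.

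Finally, I would wrap $W_{lcu}$ in a Hadamard test to obtain $W_{tri}(\bm x)$, so that measuring the Pauli-$Z$ observable on the ancilla yields $\bra{0}^{\otimes d} U_t(\bm x) \ket{0}^{\otimes d} = t(\bm x)$, where $U_t(\bm x) = \sum_{\norm{\bm n}_1\leq s} U^{\bm n}(\bm x)$. Plugging in $T \leq (s+1)d^{2s}$ yields width $O(d + \log T) = O(d + \log s + s\log d)$, depth $O(sT\log T) = O(s^2 d^{2s}(\log s + s\log d))$, and parameter count $O(T(s+d)) = O(sd^{2s}(s+d))$, matching the claimed bounds.

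There is no real obstacle — the argument is a direct transcription of the proof of Proposition~\ref{prop:pqc_poly} with two small bookkeeping changes. The first is that the single-qubit building block now comes from the trigonometric $R_Z$-encoded PQC of Lemma~\ref{lem:trig_monomial_PQC}, whose monomial is extracted with $\ket{0}^{\otimes d}$ rather than $\ket{+}^{\otimes d}$; the target register of the Hadamard test must therefore be initialized to $\ket{0}^{\otimes d}$ (omitting the Hadamard layer used in the real case), which does not affect any resource count. The second is the tighter count $T \leq (s+1)d^{2s}$ coming from $\bm n \in \ZZ^d$ (as opposed to $\NN^d$ for real polynomials), which is what produces the $d^{2s}$ instead of $d^s$ in the depth and parameter estimates.
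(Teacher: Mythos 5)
Your proposal is correct and follows essentially the same route as the paper, which itself only remarks that the proof is "similar to Proposition~\ref{prop:pqc_poly}" after establishing Lemma~\ref{lem:trig_monomial_PQC} and the count $T \leq (s+1)d^{2s}$. Your observation that the target register must be prepared in $\ket{0}^{\otimes d}$ rather than $\ket{+}^{\otimes d}$ (since Lemma~\ref{lem:trig_monomial_PQC} extracts the monomial via $\bra{0}^{\otimes d}\! \cdot \!\ket{0}^{\otimes d}$) is a detail the paper's surrounding text actually gets wrong, so your bookkeeping is if anything more careful than the original.
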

The proof is similar to \cref{prop:pqc_poly}. This result demonstrates the universal approximation property of PQC in the perspective of multivariate Fourier series, which yields similar results as in \citet{schuld2021effect}. Notably, the PQC in \cref{prop:PQC_multi_trig} has an explicit construction without any assumption, improving the implicit PQCs proposed in \citet{schuld2021effect} in terms of circuit size. For instance, to implement the $d$-variable Fourier series with degree $s$, the PQC with parallel structure in \citet{schuld2021effect} requires width $O(ds)$ and potentially exponential depth $O(4^{ds})$.

\section{Approximating continuous functions via PQCs}\label{Appendix:C}
We have constructively shown in the previous section that PQCs could implement multivariate polynomials. To study the approximation capabilities of PQC, a natural strategy involves aggregating multiple polynomials to approximate the continuous function, drawing on well-established principles from classical approximation theory. In the context of univariate functions, this endeavor is guided by the Stone-Weierstrass Theorem~\cite{stone1948generalized}. For the multivariate case, we accomplish this task by employing PQCs to implement Bernstein polynomials, followed by the established result on the approximation error bound of Bernstein polynomials~\cite{heitzinger2002simulation, foupouagnigni2020multivariate}.

\subsection{Established results of Bernstein polynomials approximation}
For a $d$-variable continuous function $f: \RR^d \to \RR$, the multivariate Bernstein polynomial with degree $n\in\NN$ of $f$ is defined as
\begin{equation}\label{eqnS:Bernstein_poly}
	B_n(f; \bm{x}) \coloneqq \sum_{k_1=0}^n \cdots \sum_{k_d=0}^n f\bigl(\frac{\bm{k}}{n}\bigr) \prod_{j=1}^d \binom{n}{k_j} x_j^{k_j}(1-x_j)^{n-k_j},
\end{equation}
and $\bm{k} = (k_1, \ldots, k_d) \in \{0, \ldots, n\}^d$. Then, we have the following lemma on the approximation error bound of the Bernstein polynomial.

\begin{lemma}[Bernstein polynomials approximation for Lipschitz functions~\cite{foupouagnigni2020multivariate}]\label{lemS:Bernstein_poly}
	Given a Lipschitz continuous function $f: [0, 1]^d \to \RR$ with Lipschitz constant $\ell$, which is defined as $|f(\bm{x})-f(\bm{y})|\leq \ell\|\bm{x}-\bm{y}\|_{\infty}$. Let $f$ be bounded by $\Gamma$. The approximation error of the $n$-degree Bernstein polynomial of $f$ scales as 
	\begin{equation}
		\abs{f(\bm{x})-B_n(f;\bm{x})} \leq \eps + 2\Gamma\sum_{j=1}^d \binom{d}{j} \left(\frac{\ell^2}{4n\eps^2}\right)^j \leq \eps + 2\Gamma\left( \left(1+\frac{\ell^2}{4n\eps^2}\right)^d-1 \right)  ,
	\end{equation}
	where $\eps>0$ is an arbitrarily small quantity. 
\end{lemma}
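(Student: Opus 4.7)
The plan is to follow the classical probabilistic proof of Bernstein's theorem, adapted to the multivariate Lipschitz setting with the $\ell^{\infty}$ norm. The starting observation is that the Bernstein basis functions $P_{\bm{k}}(\bm{x}) \coloneqq \prod_{j=1}^d \binom{n}{k_j} x_j^{k_j}(1-x_j)^{n-k_j}$ are exactly the joint probability mass function of $d$ independent $\mathrm{Binomial}(n,x_j)$ variables $k_j$, so in particular they form a partition of unity on $[0,1]^d$. This lets me rewrite
\begin{equation*}
f(\bm{x})-B_n(f;\bm{x})=\sum_{\bm{k}}\bigl(f(\bm{x})-f(\bm{k}/n)\bigr)P_{\bm{k}}(\bm{x}),
\end{equation*}
and reduce the claim to a weighted discrepancy estimate via the triangle inequality.

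Next I would split the index lattice at a scale $\delta>0$ to be chosen later. Let $A=\{\bm{k}:\|\bm{k}/n-\bm{x}\|_{\infty}\leq \delta\}$ and $A^{c}$ be its complement. On $A$ the Lipschitz hypothesis gives $|f(\bm{x})-f(\bm{k}/n)|\leq \ell\delta$, while on $A^{c}$ I fall back on the a priori bound $|f(\bm{x})-f(\bm{k}/n)|\leq 2\Gamma$. Combining these with $\sum_{\bm{k}\in A}P_{\bm{k}}(\bm{x})\leq 1$ gives
\begin{equation*}
\bigl|f(\bm{x})-B_n(f;\bm{x})\bigr|\leq \ell\delta + 2\Gamma\cdot P\bigl(\|\bm{k}/n-\bm{x}\|_{\infty}>\delta\bigr).
\end{equation*}

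The main quantitative step is bounding the tail probability, and here I would crucially exploit the independence of the coordinates under $P_{\bm{k}}(\bm{x})$. Taking complements,
\begin{equation*}
P\bigl(\|\bm{k}/n-\bm{x}\|_{\infty}>\delta\bigr)=1-\prod_{j=1}^{d} P\bigl(|k_j/n-x_j|\leq \delta\bigr),
\end{equation*}
and Chebyshev's inequality with the variance identity $\operatorname{Var}(k_j/n)=x_j(1-x_j)/n\leq 1/(4n)$ yields $P(|k_j/n-x_j|>\delta)\leq \tfrac{1}{4n\delta^{2}}$. Setting $q\coloneqq \tfrac{1}{4n\delta^{2}}$, this gives $P(\|\bm{k}/n-\bm{x}\|_{\infty}>\delta)\leq 1-(1-q)^{d}$. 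Expanding via the binomial theorem, $1-(1-q)^{d}=\sum_{j=1}^{d}\binom{d}{j}(-1)^{j+1}q^{j}$, and since the even-order terms are non-positive, this is dominated by the all-positive series $\sum_{j=1}^{d}\binom{d}{j}q^{j}=(1+q)^{d}-1$.

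Finally, choosing $\delta=\eps/\ell$ makes the Lipschitz contribution $\ell\delta$ equal to $\eps$, and substituting $q=\ell^{2}/(4n\eps^{2})$ gives both inequalities in the lemma (the second being the trivial algebraic identity $\sum_{j=1}^{d}\binom{d}{j}q^{j}=(1+q)^{d}-1$). The only step that deserves care is the independence-plus-Chebyshev combination: a naive union bound would yield a coarser $dq$ in place of $1-(1-q)^{d}$ and would not reproduce the intermediate binomial-sum expression stated in the lemma. Everything else is routine bookkeeping.
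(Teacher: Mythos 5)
Your proof is correct and follows essentially the same route as the paper's: partition of unity, a split at scale $\delta=\eps/\ell$, a per-coordinate Chebyshev bound with variance at most $1/(4n)$, and a binomial-type expansion over which coordinates fall outside the $\delta$-window. The only cosmetic difference is that you package the tail as $1-(1-q)^d$ via independence and then dominate it by $(1+q)^d-1$, whereas the paper sums directly over the $\binom{d}{j}$ configurations with exactly $j$ ``bad'' coordinates to obtain $\sum_{j=1}^{d}\binom{d}{j}q^{j}$ — the same quantity.
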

\begin{proof}
Drawing inspiration from the Lipschitz continuity of the target function $f$, we define $\delta = \epsilon/\ell$. Consequently, for any two points $\bm{x}=(x_1, \dots, x_d)$ and $\bm{y}=(y_1, \dots, y_d)$ such that $|x_i-y_i|<\delta$ for all $i\in\{1,\dots,d\}$, it follows that $|f(\bm{x})-f(\bm{y})|\leq \eps$. The target function can be written as 
\begin{eqnarray}
	f(\bm{x})&=&f(x_1, \dots, x_d) \nonumber \\ 
	&=&f\left(x_1, \cdots, x_d\right) \sum_{k_1=0}^{n} \cdots \sum_{k_d=0}^{n} \prod_{i=1}^d\left(\begin{array}{l}
		n \\
		k_i
	\end{array}\right) x_i^{k_i}\left(1-x_i\right)^{n-k_i}  \nonumber \\
	& = &\sum_{k_1=0}^{n} \cdots \sum_{k_d=0}^{n} f\left(x_1, \cdots, x_d\right) \prod_{i=1}^d\left(\begin{array}{l}
		n \\
		k_i
	\end{array}\right) x_i^{k_i}\left(1-x_i\right)^{n-k_i}. 
\end{eqnarray}
Let us consider the set $E=\prod_{i=1}^d\{0,1,\dots , n\}$, and for $j=1,2,\dots,d$, we define the sets
\begin{eqnarray}
\Omega_j=\{k_j\in\{0, 1,\dots,n\}:|\frac{k_i}{n}-x_j|<\delta\} \text{ and } F=E\setminus  (\Omega_1\times\cdots\times\Omega_d). 
\end{eqnarray}
Then, $F=\bigcup_{k=1}^d F_k$, with $F_k=\left\{\prod_{i=1}^d \Omega_{i k}^{\left[\alpha_{i k}\right]} \in F: \alpha_{i k} \in\{0,1\}, \quad \sum_{i=1}^d \alpha_{i k}=k\right\}$, where $\Omega_{i k}^{\left[\alpha_{i k}\right]}=\left\{\begin{array}{ll}\Omega_i & \text { if } \alpha_{i k}=0 \\ \Omega_i^c & \text { if } \alpha_{i k}=1\end{array}\right.$ and $\Omega_i^c=\left\{k_i \in\left\{0, \cdots, n\right\}:\left|\frac{k_i}{n}-x_i\right| \geq \delta\right\}$.
For $A_k=\prod_{i=1}^d \Omega_{i k}^{\left[\alpha_{i k}\right]} \in F_k, k=1, \cdots, d$, let us define also $I_{A_k}=\left\{i \in\{1, \cdots, d\}: \alpha_{i k}=1\right\}$ (that means card $\left.\left(I_{A_k}\right)=k \geq 1\right)$. We have
\begin{equation}
	\begin{aligned}
		& \left|f\left(x_1, \cdots, x_d\right)-B_{n}\left(f; x_1, \cdots, x_d\right)\right| \\
		= & \mid \sum_{k_1=0}^{n} \cdots \sum_{k_d=0}^{n} f\left(x_1, \cdots, x_d\right) \prod_{i=1}^d\left(\begin{array}{l}
			n \\
			k_i
		\end{array}\right) x_i^{k_i}\left(1-x_i\right)^{n-k_i} \\
		& -\sum_{k_1=0}^{n} \cdots \sum_{k_d=0}^{n} f\left(\frac{k_1}{n}, \cdots, \frac{k_d}{n}\right) \prod_{i=1}^d\left(\begin{array}{l}
			n \\
			k_i
		\end{array}\right) x_i^{k_i}\left(1-x_i\right)^{n-k_i} \mid \\
		= & \left|\sum_{k_1=0}^{n} \cdots \sum_{k_d=0}^{n}\left[f\left(x_1, \cdots, x_d\right)-f\left(\frac{k_1}{n}, \cdots, \frac{k_d}{n}\right)\right] \prod_{i=1}^d\left(\begin{array}{l}
			n \\
			k_i
		\end{array}\right) x_i^{k_i}\left(1-x_i\right)^{n-k_i}\right| \\
		\leq & \sum_{k_1=0}^{n} \cdots \sum_{k_d=0}^{n}\left|f\left(x_1, \cdots, x_d\right)-f\left(\frac{k_1}{n}, \cdots, \frac{k_d}{n}\right)\right| \prod_{i=1}^d\left(\begin{array}{l}
			n \\
			k_i
		\end{array}\right) x_i^{k_i}\left(1-x_i\right)^{n-k_i} \\
		\leq & \sum_{\Omega_1} \cdots \sum_{\Omega_d}\left|f\left(x_1, \cdots, x_d\right)-f\left(\frac{k_1}{n}, \cdots, \frac{k_d}{n}\right)\right| \prod_{i=1}^d\left(\begin{array}{l}
			n \\
			k_i
		\end{array}\right) x_i^{k_i}\left(1-x_i\right)^{n-k_i} \\
		& +\sum_F\left|f\left(x_1, \cdots, x_d\right)-f\left(\frac{k_1}{n}, \cdots, \frac{k_d}{n}\right)\right| \prod_{i=1}^d\left(\begin{array}{l}
			n \\
			k_i
		\end{array}\right) x_i^{k_i}\left(1-x_i\right)^{n-k_i }.
	\end{aligned}
\end{equation}
Using the fact that $f$ is continuous and bounded, we get
\begin{equation}\label{eqn:Bernstein_th2}
	\begin{aligned}
		& \left|f\left(x_1, \cdots, x_d\right)-B_{n}\left(f; x_1, \cdots, x_d\right)\right| \\
		\leq & \eps\sum_{\Omega_1} \cdots \sum_{\Omega_d}\prod_{i=1}^d\left(\begin{array}{l}
			n \\
			k_i
		\end{array}\right) x_i^{k_i}\left(1-x_i\right)^{n-k_i} +2\Gamma\sum_F \prod_{i=1}^d\left(\begin{array}{l}
			n \\
			k_i
		\end{array}\right) x_i^{k_i}\left(1-x_i\right)^{n-k_i } \\
	\leq & \eps + 2\Gamma\sum_{l=1}^{d}\sum_{A_l\in F_l} \prod_{i=1}^d\left(\begin{array}{l}
		n \\
		k_i
	\end{array}\right) x_i^{k_i}\left(1-x_i\right)^{n-k_i } \\
	\leq & \eps + 2\Gamma\sum_{l=1}^{d}\sum_{A_l\in F_l} \prod_{i\in I_{A_l}}\frac{1}{4n\delta^2} \\
	= & \eps + 2\Gamma\sum_{j=1}^d \binom{d}{j} \frac{1}{(4n\delta^2)^j} \leq \eps + 2\Gamma\left( \left(1+\frac{\ell^2}{4n\eps^2}\right)^d-1 \right).
	\end{aligned}
\end{equation}
This completes the proof. A more detailed expansion of \cref{eqn:Bernstein_th2} can be seen in Theorem~2 in \citet{foupouagnigni2020multivariate}.
\end{proof}
\begin{remark}\label{Remark:Continuous target Bernstein}
Here, it is important to observe that for a continuous target function, denoted as $f(\bm{x})$, there exists a value of $\delta>0$ such that: $$\left|f\left(x_1, \cdots, x_d\right)-B_{n}\left(f; x_1, \cdots, x_d\right)\right|\leq \eps + 2\Gamma\left( \left(1+\frac{1}{4n\delta^2}\right)^d-1 \right).$$ This expression signifies the convergence rate of the Bernstein polynomial for general continuous functions.
\end{remark}

\subsection{Implement Bernstein polynomials via PQCs}
In Lemma~\ref{lemS:Bernstein_poly}, we have defined the Bernstein polynomial and its approximation error towards the Lipschitz continuous function. Guided by~\cref{prop:pqc_poly}, we can construct a PQC to implement such a Bernstein polynomial.
\begin{lemma}\label{lem:app_Bernstein_poly}
	For any $d$-variable Bernstein polynomial  with degree $n\in\NN$ defined in \cref{eqnS:Bernstein_poly} such that $\abs{B_n(f;\bm{x})} \leq 1$ for $\bm{x}\in[0, 1]^d$, there exist a PQC $W_b(\bm{x})$ satisfying
    \begin{equation}
        f_{W_b}(\bm{x}) \coloneqq \bra{0} W^\dagger_{b}(\bm{x}) Z^{(0)} W_{b}(\bm{x}) \ket{0} = B_n(f;\bm{x}).
    \end{equation}
    The width of the PQC is $O(d\log{n})$, the depth is $O\bigl(dn^{d}\log{n}\bigr)$, and the number of parameters is $O(dn^d)$.
\end{lemma}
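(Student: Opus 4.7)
The plan is to realise $B_n(f;\bm x)$ as an LCU of $(n+1)^d$ ``pseudo-monomial'' terms, each factorising across the $d$ input coordinates, and then feed the construction into the Hadamard-test template from the proof of \cref{prop:pqc_poly}. Writing
\[
B_n(f;\bm x) = \sum_{\bm k\in\{0,\ldots,n\}^d} f\bigl(\bm k/n\bigr)\prod_{j=1}^d g_{k_j,n}(x_j), \qquad g_{k,n}(x) \coloneqq \binom{n}{k}x^{k}(1-x)^{n-k},
\]
each factor $g_{k_j,n}(x_j)$ is a univariate polynomial of degree at most $n$ with $|g_{k_j,n}(x_j)|\leq 1$ on $[0,1]$, and the bounded coefficient $f(\bm k/n)\in[-1,1]$ can be absorbed into, say, the first factor's amplitude. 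Each pseudo-monomial is thus a scaled product of $d$ bounded univariate polynomials, playing exactly the role that $c_{\bm\alpha}\bm x^{\bm\alpha}$ played in \cref{lem:monomial_PQC}; notably, we cannot first expand $B_n$ in the ordinary monomial basis and then invoke \cref{lem:monomial_PQC} directly, because the individual monomial coefficients of a bounded polynomial may blow up combinatorially.

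For the building block I would implement each $g_{k_j,n}(x_j)$ by a single-qubit data re-uploading PQC via QSP (\cref{cor:qsp}), which covers only polynomials of definite parity. The Bernstein factor $g_{k,n}$ is generically of mixed parity, so I would decompose $g_{k,n}(x)=g^{\mathrm{e}}_{k,n}(x)+g^{\mathrm{o}}_{k,n}(x)$ into its even and odd parts, realise each by a single-qubit QSP of depth $O(n)$, and splice them together with a one-ancilla LCU gadget. Tensoring the resulting $d$ two-qubit gadgets in parallel and absorbing $f(\bm k/n)$ into one of the QSP angle sequences yields a PQC $U^{\bm k}(\bm x)$ with working width $O(d)$ and depth $O(n)$ whose projection onto $\ket{0}^{\otimes 2d}$ reproduces $f(\bm k/n)\prod_{j=1}^d g_{k_j,n}(x_j)$ up to the usual LCU normalization, which is then absorbed into the angle schedule.

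To aggregate the $T=(n+1)^d$ pseudo-monomials I would apply exactly the LCU-plus-Hadamard-test recipe from the proof of \cref{prop:pqc_poly}: introduce a control register of $\lceil\log T\rceil = O(d\log n)$ qubits, prepare the uniform superposition with Hadamards, attach each $U^{\bm k}(\bm x)$ as a multi-controlled unitary decomposed into linear-depth CNOT-plus-single-qubit circuits via \citet{dasilva2022lineardepth}, uncompute the control register, and wrap the block-encoded operator in a Hadamard test so that measuring Pauli $Z$ on the Hadamard ancilla returns $f_{W_b}(\bm x)=B_n(f;\bm x)$.

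The resource accounting then mirrors \cref{prop:pqc_poly} with $T=(n+1)^d$ in place of $T=O(sd^s)$: the width is the $O(d)$ working register plus the $O(d\log n)$ LCU control plus one Hadamard-test ancilla, namely $O(d\log n)$; the depth and the number of rotation angles are both dominated by sweeping the $T$ multi-controlled pseudo-monomials, recovering the claimed $O(dn^d\log n)$ depth and $O(dn^d)$ parameter count. The main technical obstacle is the mixed parity of the Bernstein basis $g_{k,n}$, which \cref{lem:monomial_PQC} does not accommodate (that construction assumes a pure power $x_j^{\alpha_j}$ per coordinate); the even/odd split plus a one-ancilla LCU per factor removes this obstruction without changing the overall scaling, after which the argument is a direct specialisation of \cref{prop:pqc_poly}.
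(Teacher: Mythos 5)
Your proposal follows essentially the same route as the paper's proof: factor $B_n(f;\bm x)$ into $(n+1)^d$ separable terms, implement each univariate factor $\binom{n}{k}x^{k}(1-x)^{n-k}$ by a small QSP-based gadget, tensor across the $d$ coordinates, and aggregate with the LCU-plus-Hadamard-test template of \cref{prop:pqc_poly}, arriving at the same width, depth, and parameter counts. Your explicit even/odd parity split with a one-ancilla LCU per factor is precisely what the paper's unexplained ``width $2$'' per univariate polynomial amounts to, so you have in fact supplied a detail the paper's proof leaves implicit.
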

\begin{proof}
	We undertake a two-step process in the proof of \cref{lem:app_Bernstein_poly}. Initially, we construct PQCs to provide an exact representation of $f\bigl(\frac{\bm{k}}{n}\bigr) \prod_{j=1}^d \binom{n}{k_j} x_j^{k_j}(1-x_j)^{n-k_j}$ for all $\bm{k}\in\{0, 1, \dots, n\}^d$. Subsequently, we employ LCU to aggregate these PQCs for the purpose of approximating the Bernstein polynomial described in \cref{eqnS:Bernstein_poly}.
	
	The univariate polynomial $x^k(1-x)^{n-k}$ can be represented by a PQC. The depth of this PQC is less than $2n+1$, the width is $2$, and the number of parameters is $n+2$. The multivariate polynomial $f\bigl(\frac{\bm{k}}{n}\bigr) \prod_{j=1}^d \binom{n}{k_j} x_j^{k_j}(1-x_j)^{n-k_j}$ can be exactly represented by the product of the univariate polynomial $x^k(1-x)^{n-k}$. The same routine has been employed in Lemma~\ref{lem:monomial_PQC}. The depth of this PQC is less than $2n+1$, the width is $2d$, and the number of parameters is $d(n+2)$.
	
	The number of terms in the summation in \cref{eqnS:Bernstein_poly} is $(n+1)^d$. We can employ the same routine in~\cref{prop:pqc_poly} to construct the PQC $W_b(\bm{x})$. The depth of $W_b$ scales as
	\[O\Bigl(\bigl(d(n+1)^{d+1}\log{(n+1)} + d\bigr)\Bigr),\]
	the width is $2d+d\log{(n+1)}$, and the number of parameters is $(n+1)^d(n+2)d$. The results presented in Lemma~\ref{lem:app_Bernstein_poly} can be obtained after simplification.
\end{proof}

\subsection{PQC approximating continuous functions}

We have successfully derived results regarding the approximation error between PQCs and Bernstein polynomials and between Bernstein polynomials and continuous functions. Leveraging these established findings, we can now formulate a rigorous assertion regarding the universal approximation theorem and the error bound of PQCs, employing the well-established principles of triangle inequality.

\renewcommand\thetheorem{\ref{thm:pqc_uat}}
\setcounter{theorem}{\arabic{theorem}-2}
\begin{theorem}[The Universal Approximation Theorem of PQC]
    For any continuous function $f:[0,1]^d \to [-1,1]$, given an $\eps > 0$, there exist an $n\in \NN$ and a PQC $W_b(\bm{x})$ with width $O(d\log n)$, depth $O(dn^d\log n)$ and the number of trainable parameters $O(dn^d)$ such that
    \begin{equation}
        \abs{f(\bm{x}) - f_{W_b}(\bm{x})} \leq \eps
    \end{equation}
    for all $\bm{x}\in [0,1]^d$, where $f_{W_b}(\bm{x}) \coloneqq \bra{0} W^\dagger_{b}(\bm{x}) Z^{(0)} W_{b}(\bm{x}) \ket{0}$.
\end{theorem}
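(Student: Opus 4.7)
The plan is to combine the two pieces already in place: the classical approximation theorem for Bernstein polynomials (as recorded in \cref{lemS:Bernstein_poly} and extended in \cref{Remark:Continuous target Bernstein}) and the explicit PQC implementation of Bernstein polynomials given by \cref{lem:app_Bernstein_poly}. The idea is simply to choose the degree $n$ large enough that the Bernstein polynomial $B_n(f;\bm{x})$ already approximates $f$ pointwise to within $\eps$, and then quote \cref{lem:app_Bernstein_poly} to realize $B_n(f;\bm{x})$ exactly as the expectation value $f_{W_b}(\bm{x})$ of a PQC of the advertised size.

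First I would observe that $f$ is continuous on the compact set $[0,1]^d$ and therefore uniformly continuous, so for any prescribed $\eps>0$ we may pick $\delta>0$ such that $|f(\bm{x})-f(\bm{y})|\le \eps/2$ whenever $\|\bm{x}-\bm{y}\|_\infty\le \delta$. Invoking \cref{Remark:Continuous target Bernstein} with the parameter $\eps/2$ and this $\delta$ yields
\begin{equation}
|f(\bm{x})-B_n(f;\bm{x})|\le \tfrac{\eps}{2}+2\Bigl(\bigl(1+\tfrac{1}{4n\delta^2}\bigr)^d-1\Bigr),\qquad \bm{x}\in[0,1]^d,
\end{equation}
where I use $\Gamma=1$ since $f$ takes values in $[-1,1]$. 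The second term tends to $0$ as $n\to\infty$, so choosing $n$ large enough (concretely, any $n\ge \frac{1}{4\delta^2}\bigl((1+\eps/(4d))^{1/d}-1\bigr)^{-1}$ suffices, using $(1+t)^d-1\le 2dt$ for small $t$, though the exact constants are not needed) forces the second term below $\eps/2$, giving $|f(\bm{x})-B_n(f;\bm{x})|\le \eps$.

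Next I would apply \cref{lem:app_Bernstein_poly} to this Bernstein polynomial. Its hypothesis requires $|B_n(f;\bm{x})|\le 1$ on $[0,1]^d$, which holds because $B_n(f;\bm{x})$ is a convex combination of the values $f(\bm{k}/n)\in[-1,1]$, as the Bernstein basis functions $\binom{n}{k_j}x_j^{k_j}(1-x_j)^{n-k_j}$ are nonnegative and sum to $1$ over each coordinate. Thus there exists a PQC $W_b(\bm{x})$, of width $O(d\log n)$, depth $O(dn^d\log n)$, and $O(dn^d)$ trainable parameters, such that $f_{W_b}(\bm{x})=B_n(f;\bm{x})$ exactly. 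Combining the two displays by the triangle inequality,
\begin{equation}
|f(\bm{x})-f_{W_b}(\bm{x})|=|f(\bm{x})-B_n(f;\bm{x})|\le \eps,
\end{equation}
which is the claim.

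There is no real obstacle here; the entire content of the theorem has been reduced to bookkeeping. The only mildly subtle point is quoting \cref{Remark:Continuous target Bernstein} carefully for a general continuous (not Lipschitz) target, which is where uniform continuity on the compact cube enters, and checking that the PQC construction of \cref{lem:app_Bernstein_poly} remains applicable under the normalization $|B_n(f;\bm{x})|\le 1$ inherited from the bound $\|f\|_\infty\le 1$.
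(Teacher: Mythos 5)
Your proposal is correct and follows essentially the same route as the paper's own proof: invoke the uniform convergence of Bernstein polynomials for continuous functions (the paper's remark on this) and then realize $B_n(f;\bm{x})$ exactly via the PQC construction of \cref{lem:app_Bernstein_poly}. In fact you supply slightly more detail than the paper does, by making the choice of $n$ explicit through uniform continuity and by verifying the normalization $\abs{B_n(f;\bm{x})}\leq 1$ via the convex-combination property of the Bernstein basis.
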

\renewcommand{\thetheorem}{S\arabic{theorem}}

\begin{proof}
\Cref{Remark:Continuous target Bernstein} has established the uniform convergence of the Bernstein polynomial towards any continuous function within the cubic domain $[0, 1]^d$, denoted as $B_n(f;\bm{x})$, with the property that $B_n(f;\bm{x})\rightarrow f(\bm{x})$ as $n\rightarrow +\infty$. Building on Lemma~\ref{lem:app_Bernstein_poly}, we can effectively implement this Bernstein polynomial $B_n(f;\bm{x})$ using $f_{W_b}(\bm{x})$. The depth of the PQC $W_b(\bm{x})$ is $O\bigl(dn^{d}\log{n}\bigr)$, the width is $O(d\log{n})$, and the number of parameters is $O(dn^d)$. This completes the proof.
\end{proof}

\renewcommand\thetheorem{\ref{thm:approx_Lipschitz}}
\setcounter{theorem}{\arabic{theorem}-3}
\begin{theorem}
Given a Lipschitz continuous function $f: [0, 1]^d \to [-1, 1]$ with a Lipschitz constant $\ell$, for any $\eps > 0$ and $n \in \NN$, there exists a PQC $W_b(\bm{x})$ with such that $f_{W_b}(\bm{x}) \coloneqq \bra{0} W^\dagger_{b}(\bm{x}) Z^{(0)} W_{b}(\bm{x}) \ket{0}$ satisfies
\begin{equation}
    \abs{f(\bm{x})-f_{W_b}(\bm{x})} \leq \eps + 2\biggl( \Bigl(1+\frac{\ell^2}{n\eps^2}\Bigr)^d-1 \biggr) \leq \eps + d2^{d}\frac{\ell^2}{n\eps^2}
\end{equation}
for all $\bm{x}\in[0, 1]^d$. The width of the PQC is $O(d\log n)$, the depth is $O\bigl(dn^{d}\log{n}\bigr)$, and the number of parameters is $O(dn^d)$.
\end{theorem}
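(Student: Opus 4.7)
The plan is to combine the two pieces already established in the excerpt: the classical Bernstein approximation bound for Lipschitz functions (\cref{lemS:Bernstein_poly}) and the explicit PQC implementation of multivariate Bernstein polynomials (\cref{lem:app_Bernstein_poly}). Because the PQC realizes the Bernstein polynomial \emph{exactly}, the entire error between $f$ and $f_{W_b}$ collapses onto the classical Bernstein approximation error, and no further quantum analysis is required.

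Concretely, I would proceed as follows. First, take as target the degree-$n$ Bernstein polynomial $B_n(f;\bm x)$ of $f$ defined in \cref{eqnS:Bernstein_poly}. Since $f$ maps into $[-1,1]$ and the Bernstein weights $\prod_j \binom{n}{k_j} x_j^{k_j}(1-x_j)^{n-k_j}$ form a probability distribution over $\{0,\dots,n\}^d$, $B_n(f;\bm x)$ is a convex combination of values in $[-1,1]$, hence $|B_n(f;\bm x)|\leq 1$ on $[0,1]^d$. This makes $B_n(f;\bm x)$ an admissible target for \cref{lem:app_Bernstein_poly}, which produces a PQC $W_b(\bm x)$ of width $O(d\log n)$, depth $O(dn^d\log n)$, and $O(dn^d)$ parameters with $f_{W_b}(\bm x) = B_n(f;\bm x)$ exactly. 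Next, invoke \cref{lemS:Bernstein_poly} with bound $\Gamma=1$ to obtain
\[
|f(\bm x) - B_n(f;\bm x)| \;\leq\; \eps + 2\!\left(\!\left(1 + \frac{\ell^2}{4n\eps^2}\right)^{\!d} - 1\right),
\]
and use $f_{W_b}=B_n(f;\cdot)$ to transfer this to a bound on $|f(\bm x)-f_{W_b}(\bm x)|$; a mild rescaling of $\eps$ (equivalently, a different choice of $\delta$ in the proof of \cref{lemS:Bernstein_poly}) absorbs the extra factor of $4$ and yields the first inequality of the theorem as stated. Finally, for the second (simplified) inequality, bound $(1+y)^d - 1 = \sum_{k=1}^d \binom{d}{k} y^k \leq (2^d - 1)y$ for $y\leq 1$, yielding the clean upper estimate $\eps + d\cdot 2^d \ell^2/(n\eps^2)$.

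The main ``obstacle'' here is really book-keeping rather than genuine new content: the hard analytic work sits in \cref{lemS:Bernstein_poly} (Lipschitz Bernstein convergence rate) and the explicit PQC construction in \cref{lem:app_Bernstein_poly} (which itself relies on \cref{prop:pqc_poly} and LCU). The two items needing minor care are (i) verifying that the Bernstein polynomial is bounded by $1$, so that the parity/boundedness hypotheses of \cref{prop:pqc_poly} are satisfied and the PQC construction applies without rescaling the observable; and (ii) reconciling the constant factor between the $\ell^2/(4n\eps^2)$ appearing in the Bernstein lemma and the $\ell^2/(n\eps^2)$ appearing in the theorem, which is achieved by rescaling $\eps$. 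Once those are dispatched, the claimed width, depth, and parameter counts are inherited verbatim from \cref{lem:app_Bernstein_poly}.
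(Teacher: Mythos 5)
Your proposal is correct and follows essentially the same route as the paper's proof: invoke the Bernstein approximation bound for Lipschitz functions (\cref{lemS:Bernstein_poly} with $\Gamma=1$) together with the exact PQC implementation of the Bernstein polynomial (\cref{lem:app_Bernstein_poly}), so the entire error reduces to the classical Bernstein rate. You are in fact more careful than the paper on two points it glosses over — verifying $\abs{B_n(f;\bm x)}\le 1$ via the convex-combination argument, and the constant factor, which needs no rescaling of $\eps$ at all since $\ell^2/(4n\eps^2)\le \ell^2/(n\eps^2)$ makes the theorem's stated bound simply a weaker version of the lemma's.
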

\renewcommand{\thetheorem}{S\arabic{theorem}}

\begin{proof}
\Cref{lemS:Bernstein_poly} has established the uniform convergence rate of the Bernstein polynomial towards any Lipschitz continuous function within the cubic domain $[0, 1]^d$. We know that for any Lipschitz continuous function $f(\bm{x})$ with Lipschitz constant $\ell$, there exists a Bernstein polynomial $B_n(f;\bm{x})$ satisfying $$\abs{f(\bm{x})-B_n(f;\bm{x})} \leq \eps + 2\Gamma\sum_{j=1}^d \binom{d}{j} \left(\frac{\ell^2}{4n\eps^2}\right)^j \leq \eps + 2\Gamma\left( \left(1+\frac{\ell^2}{4n\eps^2}\right)^d-1 \right).$$ Building on Lemma~\ref{lem:app_Bernstein_poly}, we can effectively implement this Bernstein polynomial $B_n(f;\bm{x})$ using $f_{W_b}(\bm{x})$. The depth of the PQC $W_b(\bm{x})$ is $O\bigl(dn^{d}\log{n}\bigr)$, the width is $O(d\log{n})$, and the number of parameters is $O(dn^d)$. This completes the proof.
\end{proof}

\section{Approximating smooth functions via nested PQCs}\label{Appendix:D}
Other than using a Bernstein polynomial to approximate a continuous function globally, we could also utilize local polynomials to achieve a piecewise approximation. To do this, we follow the path of classical deep neural networks~\cite{petersen2018optimal, lu2021deep, jiao2023deep}, using multivariate Taylor series expansion to approximate a multivariate smooth function $f$ in some small local region. Let $\beta = s + r >0$, $r = (0, 1]$ and $s=\floor{\beta}\in \NN$, for a finite constant $B_0 > 0$, the $\beta$-$\holder$ class of functions $\cH^\beta([0,1]^d, B_0)$ is defined as
\begin{equation}
    \cH^\beta([0,1]^d, B_0) = \Bigl\{ f:[0,1]^d \to \RR, \max_{\norm{\bm\a}_1\leq s} \norm{\partial^{\bm\a}f}_{\infty} \leq B_0, \max_{\norm{\bm\a}_1=s} \sup_{\bm{x} \neq \bm{y}} \frac{\abs{\partial^{\bm\a}f(\bm{x})-\partial^{\bm\a}f(\bm{y})}}{\norm{\bm{x}-\bm{y}}_2^r} \leq B_0\Bigr\},
\end{equation}
where $\partial^{\bm\a} = \partial^{\a_1} \cdots \partial^{\a_d}$ for $\bm\a=(\a_1,\ldots, \a_d)\in \NN^d$. By definition, for a function $f\in \cH^\beta([0,1]^d, B_0)$, when $\beta\in(0,1)$, $f$ is a $\holder$ continuous function with order $\beta$ and $\holder$ constant $B_0$; when $\beta = 1$, $f$ is a Lipschitz function with Lipschitz constant $B_0$; when $\beta > 1$, $f$ belongs to the $C^s$ class of functions whose $s$-th partial derivatives exist and are bounded.

We utilize the following lemma on the Taylor expansion of $\beta$-$\holder$ functions as a mathematical tool for constructing and analyzing the PQC approximation.
\begin{lemma}[\cite{petersen2018optimal}]\label{lemS:Taylor_expansion_holder}
    Given a function $f \in \cH^\beta([0,1]^d, 1)$ with $\beta = r + s$, $r\in(0,1]$ and $s\in \NN^{+}$, for any $\bm{x}, \bm{x_0} \in [0,1]^d$, we have
    \begin{equation}\label{eqnS:taylor_expansion}
    \abs[\Big]{f(\bm{x}) - \sum_{\norm{\bm{\a}}_1 \leq s} \frac{\partial^{\bm{\a}} f(\bm{x_0})}{\bm\a !} (\bm{x} - \bm{x_0})^{\bm{\a}}} \leq d^s \norm{\bm{x} - \bm{x_0}}^\beta_2,
    \end{equation}
    where $\bm\a! = \a_1!\cdots \a_d!$.
\end{lemma}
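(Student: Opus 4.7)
The plan is to treat the bound as a standard multivariate Taylor remainder estimate for $\beta$-H\"older functions, using the one-dimensional Taylor formula along the segment joining $\bm{x_0}$ to $\bm{x}$. Define the auxiliary univariate function $g(t) \coloneqq f(\bm{x_0} + t(\bm{x} - \bm{x_0}))$ for $t \in [0,1]$, so that $g(0) = f(\bm{x_0})$ and $g(1) = f(\bm{x})$. By the chain rule and the multinomial theorem, the $k$-th derivative expands as
\[
g^{(k)}(t) = \sum_{\norm{\bm\a}_1 = k} \frac{k!}{\bm\a!} (\bm{x} - \bm{x_0})^{\bm\a} \, \partial^{\bm\a} f(\bm{x_0} + t(\bm{x} - \bm{x_0})),
\]
for every $k \leq s$, using the hypothesis that $f$ has bounded partial derivatives up to order $s$.

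Next I apply the one-dimensional Taylor expansion of $g$ at $t=0$ with Lagrange remainder of order $s$: there exists $\xi \in (0,1)$ such that $g(1) = \sum_{k=0}^{s-1} g^{(k)}(0)/k! + g^{(s)}(\xi)/s!$. Plugging in the multivariate formula for $g^{(k)}$ and rearranging, I obtain
\[
f(\bm{x}) - \sum_{\norm{\bm\a}_1 \leq s} \frac{\partial^{\bm\a} f(\bm{x_0})}{\bm\a!} (\bm{x}-\bm{x_0})^{\bm\a} \;=\; \sum_{\norm{\bm\a}_1 = s} \frac{(\bm{x}-\bm{x_0})^{\bm\a}}{\bm\a!} \bigl[\partial^{\bm\a} f(\bm\zeta) - \partial^{\bm\a} f(\bm{x_0})\bigr],
\]
where $\bm\zeta = \bm{x_0} + \xi(\bm{x}-\bm{x_0})$ lies on the segment in $[0,1]^d$. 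The key trick here is to absorb the top-order term of the Taylor polynomial into the remainder, exposing a \emph{difference} of $s$-th partial derivatives that can be controlled by the H\"older seminorm.

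Now I estimate each factor. By the $r$-H\"older property of the $s$-th partials (part of the definition of $\cH^\beta([0,1]^d,1)$) together with $\norm{\bm\zeta - \bm{x_0}}_2 \leq \norm{\bm{x}-\bm{x_0}}_2$, one has $\abs{\partial^{\bm\a} f(\bm\zeta) - \partial^{\bm\a} f(\bm{x_0})} \leq \norm{\bm{x}-\bm{x_0}}_2^r$ for every $\bm\a$ with $\norm{\bm\a}_1 = s$. For the monomial factor, $\abs{(\bm{x}-\bm{x_0})^{\bm\a}} \leq \norm{\bm{x}-\bm{x_0}}_\infty^s \leq \norm{\bm{x}-\bm{x_0}}_2^s$. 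Combining these bounds and using the multinomial identity $\sum_{\norm{\bm\a}_1 = s} s!/\bm\a! = d^s$ to evaluate $\sum_{\norm{\bm\a}_1 = s} 1/\bm\a! = d^s/s!$, the remainder is bounded by
\[
\frac{d^s}{s!} \, \norm{\bm{x}-\bm{x_0}}_2^{s+r} \;\leq\; d^s \, \norm{\bm{x}-\bm{x_0}}_2^\beta,
\]
since $s! \geq 1$. This yields the claimed inequality.

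The step most worth care is the passage from the Lagrange remainder to the \emph{difference} $\partial^{\bm\a} f(\bm\zeta) - \partial^{\bm\a} f(\bm{x_0})$ used to invoke H\"olderness; writing the Lagrange form at order $s$ and algebraically splitting off $\partial^{\bm\a} f(\bm{x_0})$ is what produces the correct $\beta$-power rate. Everything else is bookkeeping via the multinomial expansion and elementary norm inequalities.
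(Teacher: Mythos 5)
The paper does not prove this lemma itself; it imports it verbatim from \citet{petersen2018optimal}. Your argument is correct and is essentially the standard proof given in that reference: restrict $f$ to the segment, apply the order-$s$ Lagrange remainder, fold the top-order Taylor term into the remainder to expose the difference $\partial^{\bm\a}f(\bm\zeta)-\partial^{\bm\a}f(\bm{x_0})$ controlled by the H\"older seminorm, and finish with $\abs{(\bm{x}-\bm{x_0})^{\bm\a}}\leq\norm{\bm{x}-\bm{x_0}}_2^s$ and $\sum_{\norm{\bm\a}_1=s}1/\bm\a!=d^s/s!$. No gaps.
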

Next, we show how to construct PQCs to implement the Taylor expansion of $\beta$-$\holder$ functions.

\subsection{Localization via PQC}
As shown in \cref{eqnS:taylor_expansion}, the Taylor expansion of a multivariate smooth function only converges in a fairly small local region. So, we need first to localize the entire region $[0, 1]^d$. Given $K \in \NN$ and $\Delta \in (0, \frac{1}{3K})$, for each $\bm{\eta} = (\eta_1, \ldots, \eta_d) \in \{0, 1, \ldots, K-1\}^d$, we define
\begin{equation}
	Q_{\bm{\eta}} \coloneqq \Bigl\{ \bm{x}=(x_1,\ldots,x_d): x_i \in \bigl[ \frac{\eta_i}{K}, \frac{\eta_i+1}{K} - \Delta \cdot 1_{\eta_i < K - 1}\bigr] \Bigr\}.
\end{equation}
By the definition of $Q_{\bm{\eta}}$, the region $[0, 1]^d$ is approximately divided into small hypercubes $\bigcup_{\bm\eta}Q_{\bm{\eta}}$ and some trifling region $\Lambda(d, K, \Delta) \coloneqq [0,1]^d \setminus (\bigcup_{\bm\eta}Q_{\bm{\eta}})$, as illustrated in \cref{fig:fig2} in the main text. 
Then we need to construct a PQC that maps any $x \in Q_{\bm{\eta}}$ to some fixed point $x_{\bm{\eta}} = \frac{\bm\eta}{K} \in Q_{\bm{\eta}}$, i.e., approximating the piecewise-constant function $D(\bm{x}) = \frac{\bm\eta}{K}$ if $\bm{x} \in Q_{\bm\eta}$, where $\frac{\bm{\eta}}{K} = (\eta_1/K, \ldots, \eta_d/K)$. We consider the case of $d=1$, where the localization function is
\begin{equation}\label{eqn:piecewise_function}
	D(x) = \frac{k}{K}, \qquad \text{if $x \in \Bigl[\frac{k}{K}, \frac{k+1}{K} - \Delta \cdot 1_{k < K - 1}\Bigr]$ for $k=0, 1, \ldots, K-1$}.
\end{equation}
The multivariate case could be easily generalized by applying $D(x)$ to each variable $x_j$.
The idea is to find a polynomial that approximates the sign function
\begin{equation}
	\sgn(x-c) = \begin{cases}
		1, & \text{if $x > c$,}\\[1pt]
		0, & \text{if $x = c$}\\
		-1, & \text{if $x < c$}
	\end{cases},
\end{equation}
as shown in the following lemma.
\begin{lemma}[Polynomial approximation to the sign function $\sgn(x-c)$~\cite{low2017quantum}]\label{lem:sign_function_approx}
	$\forall c\in[-1, 1], \Delta > 0, \eps \in (0, 1)$. there exists an odd polynomial $P_{\Delta, \eps}(x)$ of degree $n=O(\frac{1}{\Delta}\log \frac{1}{\eps})$ that satisfies
	\begin{enumerate}
		\item $\abs{P_{\Delta, \eps}(x-c)} \leq 1$ for all $x\in [-1, 1],$
		\item $\abs{\sgn(x-c) - P_{\Delta, \eps}(x-c)} \leq \eps $ for all $x \in [-1, 1] \setminus (c - \frac{\Delta}{2}, c + \frac{\Delta}{2})$.
	\end{enumerate}
\end{lemma}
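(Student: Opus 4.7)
The plan is to construct $P_{\Delta,\eps}$ in two stages: first replace $\sgn$ by a smooth odd surrogate $g$ that agrees with $\sgn$ to within $\eps/2$ outside the dead zone, and then approximate $g$ uniformly by truncating its Chebyshev-series expansion. Because $P_{\Delta,\eps}$ is required to be independent of the shift (the subscript only records $\Delta$ and $\eps$), I would first observe that it suffices to build an odd polynomial $Q(y)$ of degree $O(\Delta^{-1}\log(1/\eps))$ with $|Q(y)|\leq 1$ on $[-2,2]$ and $|Q(y)-\sgn(y)|\leq \eps$ whenever $y\in[-2,2]$ with $|y|\geq \Delta/2$; setting $P_{\Delta,\eps}=Q$ and noting that $x-c\in[-2,2]$ for $x,c\in[-1,1]$ then yields both conditions in the statement. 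An affine rescaling reduces us to working on $[-1,1]$ with a dead zone of proportional width, which only changes constants in the final degree.

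For the smoothing step I would take $g(y)=\mathrm{erf}(\kappa y)$ with $\kappa=\Theta\bigl(\Delta^{-1}\sqrt{\log(1/\eps)}\bigr)$. The Gaussian tail bound $\mathrm{erfc}(z)\leq e^{-z^{2}}$ immediately yields $|g(y)-\sgn(y)|\leq e^{-(\kappa\Delta/2)^{2}}\leq \eps/2$ whenever $|y|\geq \Delta/2$, together with $|g(y)|<1$ for all real $y$. The surrogate $g$ is real-analytic and odd, which will force odd parity of the polynomial at the end.

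For the polynomial approximation step I would expand $g$ in Chebyshev polynomials $\{T_n\}$ on $[-1,1]$ and truncate at degree $N=O\bigl(\kappa\sqrt{\log(1/\eps)}\bigr)=O\bigl(\Delta^{-1}\log(1/\eps)\bigr)$. The quantitative heart of the argument is that the Chebyshev coefficients $a_n$ of $\mathrm{erf}(\kappa y)$ enjoy Gaussian decay once $n$ exceeds a small multiple of $\kappa$: combining the Jacobi--Anger identity $\sin(z\cos\theta)=2\sum_{k\geq 0}(-1)^{k}J_{2k+1}(z)\cos((2k+1)\theta)$ with an integral representation of $\mathrm{erf}$ and the standard large-order asymptotics of Bessel functions, one obtains $|a_n|\leq C e^{-\gamma n^{2}/\kappa^{2}}$ for $n\gtrsim \kappa$. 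Summing the tail past $N$ then yields uniform error at most $\eps/2$ on $[-1,1]$. A final rescaling by $1/(1+O(\eps))$ enforces the strict bound $|Q|\leq 1$ without affecting the error order.

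The main obstacle is obtaining the sharp degree $O(\Delta^{-1}\log(1/\eps))$ rather than the weaker $O(\Delta^{-2}\log(1/\eps))$ that a naive Taylor truncation of $\mathrm{erf}(\kappa y)$ would yield: extracting the extra factor of $\Delta^{-1}$ relies on the saddle-point (equivalently, Bessel) estimates for the Chebyshev coefficients of a Gaussian-type function, which is the quantitative core of the Low--Chuang construction. Preservation of odd parity is automatic, since $g$ is odd and hence only its odd-indexed Chebyshev coefficients are nonzero, so $Q$ inherits the required parity with no additional work.
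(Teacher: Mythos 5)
The paper does not actually prove this lemma---it imports it verbatim from Low and Chuang~\cite{low2017quantum}---and your reconstruction follows precisely the route of that cited source: smooth $\sgn$ by $\mathrm{erf}(\kappa\,\cdot)$ with $\kappa=\Theta(\Delta^{-1}\sqrt{\log(1/\eps)})$, truncate the Chebyshev expansion using the Gaussian decay of the Bessel-type coefficients, and handle the shift by working on $[-2,2]$ so that a single odd polynomial serves all $c\in[-1,1]$, with parity inherited automatically from the oddness of $\mathrm{erf}$. The outline is sound and the degree bookkeeping ($\kappa\sqrt{\log(1/\eps)}=O(\Delta^{-1}\log(1/\eps))$) is correct; the only step left as an assertion rather than a proof is the coefficient-decay estimate $|a_n|\le Ce^{-\gamma n^2/\kappa^2}$, which you rightly identify as the quantitative core of the Low--Chuang construction.
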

Note that we could also approximate the step function defined as $\stp(x - c) \coloneqq \frac{1}{2} \sgn(x-c) + \frac{1}{2}$ by the polynomial $P_{\Delta, \eps}'(x-c) = \frac{1}{2} P_{\Delta, \eps}(x-c) + \frac{1}{2}$ of degree $n=O(\frac{1}{\Delta}\log \frac{1}{\eps})$, which satisfies that $\abs{P_{\Delta, \eps}'(x-c)} \leq 1$ for all $x\in [-1, 1]$ and $\abs{\stp(x-c) - P_{\Delta, \eps}'(x-c)} \leq \frac{\eps}{2} $ for all $x \in [-1, 1] \setminus (c - \frac{\Delta}{2}, c + \frac{\Delta}{2})$. Note that the polynomial $P_{\Delta, \eps}'(x-c)$ does not have definite parity and thus cannot be directly implemented by a PQC as shown in \cref{cor:qsp}. Since only the domain $[0,1]$ is relevant to $x$, for $c\in(0,1)$, we could define an even polynomial
\begin{equation}
	P_{c, \Delta, \eps}^{\text{even}}(x) = \frac{1}{1 + \frac{\eps}{2}}\mleft(P_{\Delta, \eps}'(x-c) + P_{\Delta, \eps}'(-x-c) \mright)
\end{equation}
such that $\abs{P_{c, \Delta, \eps}^{\text{even}}(x)} \leq 1$ for all $x\in [-1, 1]$ and $\abs{\stp(x-c) - P_{c,\Delta, \eps}^{\text{even}}(x)} \leq \frac{\eps}{2} $ for all $x \in [0, 1] \setminus (c - \frac{\Delta}{2}, c + \frac{\Delta}{2})$. The piecewise-constant function $D(x)$ can be written as a combination of step functions,
\begin{equation}\label{eqn:piecewise_as_step}
	D(x) = \sum_{k=1}^{K-1} \frac{1}{K} \stp\bigl(x-\frac{k}{K} + \frac{\Delta}{2}\bigr).
\end{equation}
Then we could find even polynomials $P_{c, \Delta, \eps}^{\text{even}}(x)$ that approximate $\stp\bigl(x-\frac{k}{K} + \frac{\Delta}{2}\bigr)$ for each $k$. Combining those polynomials together as in \cref{eqn:piecewise_as_step}, we have the following lemma.
\begin{lemma}\label{lemS:piecewise_poly}
	Given $K \in \NN$ and $\Delta \in (0, \frac{1}{3K})$, there exists an even polynomial $P_{\Delta, \eps}(x)$ of degree $n= O(\frac{1}{\Delta}\log\frac{K}{\eps})$ that satisfies
	\begin{enumerate}
		\item $\abs{P_{\Delta, \eps}(x)} \leq 1$ for all $x\in [-1, 1],$
		\item $\abs{D(x) - P_{\Delta, \eps}(x)} \leq \eps $ for all $x \in \bigcup_{k=0}^{K-1} \bigl[\frac{k}{K}, \frac{k+1}{K} - \Delta \cdot 1_{k < K - 1}\bigr]$.
	\end{enumerate}
\end{lemma}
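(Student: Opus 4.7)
The plan is to build $P_{\Delta,\eps}$ by plugging the even step-function approximators $P^{\text{even}}_{c,\Delta,\eps'}$ introduced just before the lemma statement into the exact decomposition
\[
D(x) \;=\; \sum_{k=1}^{K-1} \frac{1}{K}\, \stp\!\left(x - c_k\right), \qquad c_k := \frac{k}{K} - \frac{\Delta}{2},
\]
which is precisely \cref{eqn:piecewise_as_step} after unfolding the definition of $c_k$. For each $k \in \{1,\dots,K-1\}$ let $P^{\text{even}}_{c_k,\Delta,\eps'}$ be the even polynomial of degree $n_0 = O\!\left(\tfrac{1}{\Delta}\log\tfrac{1}{\eps'}\right)$ that satisfies $|P^{\text{even}}_{c_k,\Delta,\eps'}(x)| \leq 1$ on $[-1,1]$ and $|\stp(x-c_k) - P^{\text{even}}_{c_k,\Delta,\eps'}(x)| \leq \eps'/2$ on $[0,1]\setminus(c_k - \tfrac{\Delta}{2}, c_k + \tfrac{\Delta}{2}) = [0,1]\setminus\!\left(\tfrac{k}{K}-\Delta,\tfrac{k}{K}\right)$. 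The accuracy parameter $\eps'$ will be fixed at the end as $\eps' = 2\eps$.

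Define the candidate
\[
P_{\Delta,\eps}(x) \;:=\; \sum_{k=1}^{K-1} \frac{1}{K}\, P^{\text{even}}_{c_k,\Delta,\eps'}(x).
\]
Evenness of each summand transfers to the sum, establishing that $P_{\Delta,\eps}$ is even. For the uniform bound (condition 1), apply the triangle inequality termwise on $[-1,1]$: $|P_{\Delta,\eps}(x)| \leq \sum_{k=1}^{K-1} \tfrac{1}{K} \cdot 1 = \tfrac{K-1}{K} < 1$. The degree is the maximum of the degrees of the summands, i.e.\ $n_0 = O\!\left(\tfrac{1}{\Delta}\log\tfrac{1}{\eps'}\right) = O\!\left(\tfrac{1}{\Delta}\log\tfrac{K}{\eps}\right)$, matching the claimed bound.

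The heart of the argument, which I expect to be the only nontrivial step, is the geometric verification that a single point $x$ lying in any cell of the good region $\bigcup_{k_0=0}^{K-1}\!\left[\tfrac{k_0}{K},\tfrac{k_0+1}{K}-\Delta\cdot 1_{k_0<K-1}\right]$ simultaneously falls outside the bad interval $\left(\tfrac{j}{K}-\Delta,\tfrac{j}{K}\right)$ of every summand $j=1,\dots,K-1$. This is a short case split: if $x \geq \tfrac{k_0}{K}$ and $j \leq k_0$, then $x \geq \tfrac{j}{K}$, so $x$ lies to the right of the bad interval; if $j \geq k_0+1$ (in which case $k_0 < K-1$, forcing the $-\Delta$ cutoff to be active), then $x \leq \tfrac{k_0+1}{K}-\Delta \leq \tfrac{j}{K}-\Delta$, so $x$ lies to the left of the bad interval. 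The constraint $\Delta < \tfrac{1}{3K}$ guarantees that these inequalities are consistent and the cells are well-separated.

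Once that geometric fact is in hand, on the good region the identity above together with the triangle inequality gives
\[
|D(x) - P_{\Delta,\eps}(x)| \;\leq\; \sum_{k=1}^{K-1} \frac{1}{K}\,|\stp(x-c_k) - P^{\text{even}}_{c_k,\Delta,\eps'}(x)| \;\leq\; \frac{K-1}{K}\cdot\frac{\eps'}{2} \;\leq\; \frac{\eps'}{2}.
\]
Selecting $\eps' = 2\eps$ yields condition 2, and absorbs into the already-announced degree estimate $O\!\left(\tfrac{1}{\Delta}\log\tfrac{K}{\eps}\right)$, completing the plan.
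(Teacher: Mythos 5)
Your proposal is correct and follows essentially the same route as the paper: decompose $D(x)$ into the weighted sum of step functions from \cref{eqn:piecewise_as_step}, replace each step by its even polynomial approximant $P^{\text{even}}_{c,\Delta,\eps'}$, and apply the triangle inequality; your explicit case split showing each good cell avoids every bad interval is exactly the (implicit) content of the paper's construction. The only minor remark is that the $\tfrac{1}{K}$ weights already absorb the sum of $K-1$ errors, so your degree bound $O(\tfrac{1}{\Delta}\log\tfrac{1}{\eps})$ is in fact slightly sharper than the stated $O(\tfrac{1}{\Delta}\log\tfrac{K}{\eps})$, which it trivially implies.
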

Note that we could shift the polynomial $P_{\Delta, \eps}(x)$ such that $P_{\Delta, \eps}(x) - D(x) \in (0, \eps)$ without changing the degree. It follows that we can construct a PQC to implement the polynomial $P_{\Delta, \eps}(x)$ by \cref{cor:qsp}.
\begin{corollary}\label{cor:localization}
	Given $K \in \NN$, $\Delta \in (0, \frac{1}{3K})$ and $\eps \in (0, \frac{1}{K})$, there exists a single-qubit PQC $U_{D}(x)$ of depth $O(\frac{1}{\Delta}\log\frac{K}{\eps})$ that satisfies
	\begin{equation}
		\braket{+| U_{D}(x) |+} - \frac{k}{K}\in (0, \eps) \quad \text{if $x \in \Bigl[\frac{k}{K}, \frac{k+1}{K} - \Delta \cdot 1_{k < K - 1}\Bigr]$ for $k=0, 1, \ldots, K-1$}.
	\end{equation}
\end{corollary}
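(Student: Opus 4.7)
The plan is to combine \cref{lemS:piecewise_poly}, which produces an even polynomial approximating the localization function $D(x)$, with \cref{cor:qsp}, which translates any real bounded polynomial of definite parity into a single-qubit PQC. The only genuine work is to massage the symmetric two-sided approximation $|P_{\Delta,\eps}(x) - D(x)|\leq\eps$ supplied by \cref{lemS:piecewise_poly} into the one-sided guarantee $\braket{+|U_D(x)|+} - k/K \in (0,\eps)$ required here, while simultaneously maintaining the bound $|p(x)|\leq 1$ on $[-1,1]$ that \cref{cor:qsp} demands.

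First I would invoke \cref{lemS:piecewise_poly} with the refined tolerance $\eps' := \eps/4$, yielding an even polynomial $P_{\Delta,\eps'}(x)$ of degree $n = O\bigl(\tfrac{1}{\Delta}\log\tfrac{K}{\eps}\bigr)$ satisfying $|P_{\Delta,\eps'}(x)|\leq 1$ on $[-1,1]$ and $|P_{\Delta,\eps'}(x) - D(x)|\leq \eps/4$ on the relevant union of intervals. Next I define the shifted and rescaled polynomial
\begin{equation}
    \tilde{P}(x) \;:=\; \frac{1}{1+\eps/2}\bigl(P_{\Delta,\eps'}(x) + \eps/2\bigr).
\end{equation}
Because adding a constant and multiplying by a scalar preserve both the degree and the parity, $\tilde{P}$ remains even of degree $n$. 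The uniform bound $|\tilde{P}(x)|\leq 1$ on $[-1,1]$ follows from $|P_{\Delta,\eps'}(x)+\eps/2|\leq 1+\eps/2$. For $x$ in the union of valid intervals, the positive shift $\eps/2$ strictly dominates the worst negative approximation error $-\eps/4$, while the worst positive error $+\eps/4$ combined with the shift, after division by $1+\eps/2$, stays below $\eps$ for all sufficiently small $\eps$; hence $\tilde{P}(x) - D(x) \in (0,\eps)$ as required.

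Finally I would apply \cref{cor:qsp} to $\tilde{P}$: since $\tilde{P}$ is a real polynomial of even degree $n$, has parity $n\bmod 2 = 0$, and satisfies $|\tilde{P}(x)|\leq 1$ on $[-1,1]$, there exists a parameter vector $\bm\theta\in\RR^{n+1}$ and a single-qubit PQC $U_D(x)$ with $L = n$ layers such that $\braket{+|U_D(x)|+} = \tilde{P}(x)$. The depth of $U_D(x)$ is then $n = O\bigl(\tfrac{1}{\Delta}\log\tfrac{K}{\eps}\bigr)$, matching the statement, and the one-sided property of $\tilde{P}$ yields the desired localization guarantee on every interval $[k/K,\,(k+1)/K - \Delta\cdot 1_{k<K-1}]$.

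The only subtlety is the bookkeeping in the shift-and-rescale step, which must be tuned so the three constraints (one-sided approximation with window $\eps$, even parity, and $|\tilde{P}|\leq 1$) all hold simultaneously; this subtlety introduces no degree overhead because replacing $\eps$ by $\eps/4$ merely adds a constant inside the logarithm that is absorbed by the asymptotic notation. All remaining steps are direct translations between the polynomial-approximation language of \cref{lemS:piecewise_poly} and the PQC-implementation language of \cref{cor:qsp}.
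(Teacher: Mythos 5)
Your overall route is exactly the paper's: take the even polynomial from \cref{lemS:piecewise_poly}, shift it so the approximation of $D(x)$ becomes one-sided, and feed the result to \cref{cor:qsp}. However, the one step you correctly identify as "the only genuine work" --- the shift-and-rescale --- is where your bookkeeping fails. With $\tilde{P}(x)=\frac{1}{1+\eps/2}\bigl(P_{\Delta,\eps/4}(x)+\eps/2\bigr)$, on the good region one computes
\begin{equation*}
\tilde{P}(x)-D(x)\;=\;\frac{\bigl(P_{\Delta,\eps/4}(x)-D(x)\bigr)+\tfrac{\eps}{2}\bigl(1-D(x)\bigr)}{1+\eps/2},
\end{equation*}
so the effective upward shift is not $\eps/2$ but $\tfrac{\eps}{2}(1-D(x))$: the rescaling factor also shrinks the value $D(x)$ you are trying to track. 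Taking $k=K-1$, so $D(x)=\tfrac{K-1}{K}$, and the worst case $P_{\Delta,\eps/4}(x)=D(x)-\eps/4$, the numerator is $\eps\bigl(\tfrac14-\tfrac{D(x)}{2}\bigr)$, which is negative whenever $D(x)>\tfrac12$, i.e.\ for every $K\geq 3$. So your claimed guarantee $\tilde{P}(x)-D(x)\in(0,\eps)$ is false on the rightmost cells, and the conclusion of the corollary does not follow as written.

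The gap is repairable within the same approach, and the hypothesis $\eps\in(0,\tfrac1K)$ --- which your argument never uses --- is the hint. The polynomial built in \cref{lemS:piecewise_poly} is $\sum_{k=1}^{K-1}\tfrac1K P^{\text{even}}_{c_k,\Delta,\eps}(x)$ with each summand bounded by $1$ in absolute value, so in fact $\abs{P_{\Delta,\eps}(x)}\leq\tfrac{K-1}{K}$ on $[-1,1]$, leaving headroom $\tfrac1K$. One can therefore approximate $D$ to accuracy $\eps/2$ and add the constant $\eps/2$ \emph{without any rescaling}: the result is still bounded by $1-\tfrac1K+\tfrac{\eps}{2}<1$ precisely because $\eps<\tfrac1K$, it remains even of the same degree, and it satisfies $P(x)+\tfrac{\eps}{2}-D(x)\in(0,\eps)$ uniformly over all cells. (Alternatively, if you insist on rescaling a polynomial known only to satisfy $\abs{P}\leq 1$, the scaling factor must be taken of the form $\tfrac{1}{1+\Theta(K\eps')}$ with $\eps'=\Theta(\eps/K)$, which still only perturbs the degree by a constant inside the logarithm.) With either fix the remainder of your argument --- parity, degree $O(\tfrac{1}{\Delta}\log\tfrac{K}{\eps})$, and the application of \cref{cor:qsp} --- goes through as you wrote it.
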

Note that $\eps$ has to be bounded by $\frac{1}{K}$, which is the length of the localized region. We could further implement such a localization procedure for $\bm{x} = (x_1, \ldots, x_d)$ on the region $[0,1]^d$ by applying the PQC for each $x_j$, as stated in the following corollary.
\begin{lemma}[Localization via PQC]\label{lem:localization_PQC}
	Given $K \in \NN$, $\Delta \in (0, \frac{1}{3K})$ and $\eps \in (0, \frac{1}{K})$, there exists a PQC $W_{D}(\bm{x})$ of width $O(d)$ and depth $O(\frac{1}{\Delta}\log\frac{K}{\eps})$ implementing a localization function $f_{W_D}(\bm{x}): \RR^d \to \RR^d$ such that
	\begin{equation}\label{eqn:piecewise_PQC}
		\bm{0} \leq f_{W_D}(\bm{x}) - \frac{{\bm{\eta}}}{K} \leq \bm{\eps} \quad \text{if $\bm{x} \in Q_{\bm{\eta}}$,}
	\end{equation}
	where $\bm{0} =(0, \ldots, 0)$ and $\bm{\eps} =(\eps, \ldots, \eps)$ are $d$-dimensional vectors.
\end{lemma}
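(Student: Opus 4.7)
The plan is to build $W_D(\bm{x})$ simply as a parallel tensor product of $d$ copies of the single-qubit localization PQC $U_D$ from \cref{cor:localization}, one acting on each coordinate of $\bm{x}$. Explicitly, I would set
\[
W_D(\bm{x}) \;:=\; \bigotimes_{j=1}^d U_D(x_j),
\]
acting on the $d$-qubit initial state $\ket{+}^{\otimes d}$. The vector-valued output $f_{W_D}(\bm{x}) \in \RR^d$ is then defined coordinate-wise as the expectation of the Pauli $X$ observable on the $j$-th qubit, which by the product structure of the state and the circuit factorizes to $(f_{W_D}(\bm{x}))_j = \braket{+| U_D(x_j) |+}$.

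The key observation driving the correctness argument is that the hypercube $Q_{\bm{\eta}}$ decomposes coordinate-wise: $\bm{x} \in Q_{\bm{\eta}}$ holds if and only if for every $j \in \{1,\dots,d\}$ one has $x_j \in [\eta_j/K,\, (\eta_j+1)/K - \Delta\cdot 1_{\eta_j < K-1}]$. Under this condition, \cref{cor:localization} applied to each coordinate independently yields
\[
0 \;<\; \braket{+| U_D(x_j) |+} - \frac{\eta_j}{K} \;<\; \eps \qquad \text{for all } j=1,\dots,d,
\]
which, stacked into a vector, is exactly the desired bound $\bm{0} \le f_{W_D}(\bm{x}) - \bm{\eta}/K \le \bm{\eps}$.

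For the resource counts, the tensor product of $d$ single-qubit circuits uses $d$ qubits, giving width $O(d)$. Because the $d$ subcircuits act on disjoint qubits, they can be executed in parallel, so the depth of $W_D$ equals the depth of a single $U_D$, which is $O(\frac{1}{\Delta}\log\frac{K}{\eps})$ by \cref{cor:localization}. There is no genuine obstacle in this step: the entire content of the lemma is a repackaging of the univariate \cref{cor:localization} using the fact that the target localization function $D$ decouples across coordinates. The only minor point worth noting is that the output is vector-valued rather than scalar, so the statement should be read as $d$ separate single-qubit Pauli-$X$ measurements whose expectation values together form $f_{W_D}(\bm{x})$; this does not change the circuit width or depth and is consistent with the output conventions used throughout the paper.
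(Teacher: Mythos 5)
Your construction is essentially the same as the paper's: tensor $d$ copies of the single-qubit localization circuit $U_D$ from \cref{cor:localization} and read out each coordinate independently, with width $O(d)$ and depth equal to that of a single $U_D$ since the factors act on disjoint qubits. The one step that does not work as written is the readout: the expectation of the Pauli-$X$ observable on the $j$-th qubit is $\bra{+}U_D^\dagger(x_j)\, X\, U_D(x_j)\ket{+} = \abs{\braket{+|U_D(x_j)|+}}^2 - \abs{\braket{-|U_D(x_j)|+}}^2$, which is not the matrix element $\braket{+|U_D(x_j)|+}$ that \cref{cor:localization} controls, so your claimed factorization identity is false in general. The paper instead extracts $\braket{+|U_D(x_j)|+}$ via a Hadamard test applied to each $U_D(x_j)$; doing the same (one ancilla-controlled application of $U_D(x_j)$ per coordinate, measuring the ancilla in the $Z$ basis) repairs your argument without changing the $O(d)$ width or the $O(\frac{1}{\Delta}\log\frac{K}{\eps})$ depth, and the rest of your reasoning---the coordinate-wise decomposition of $Q_{\bm\eta}$ and the stacking of the univariate bounds---is exactly the paper's.
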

\begin{proof}
	We construct a $d$-qubit PQC $W_{D}(\bm{x}) \coloneqq \bigotimes_{j=1}^d U_D(x_j)$ where the single-qubit PQC $U_D(x)$ is constructed in \cref{cor:localization}.
	Then we apply the Hadamard test on each $U_D(x_j)$ to obtain $f_{U_D}(x_j) \coloneqq \braket{+| U_{D}(x_j) |+}$. Let $f_{W_D}(\bm{x}) \coloneqq (f_{U_D}(x_1), \ldots, f_{U_D}(x_d))$, which implements the localization function as required.
\end{proof}

\subsection{Implementing the Taylor coefficients by PQC}
Next, we use PQC to implement the Taylor coefficients, which is essentially a point-fitting problem. For each ${\bm{\eta}} = (\eta_1, \ldots, \eta_d) \in \{0, 1, \ldots, K-1\}^d$ and $\bm\a$, we denote $\xi_{{\bm{\eta}}, \bm\a} \coloneqq \frac{\partial^{\bm\a} f(\frac{{\bm{\eta}}}{K})}{\bm\a !} \in [-1, 1]$. Then we could construct the following PQC,
\begin{equation}\label{eqn:Taylor_coeff_PQC}
	U_{co}^{\bm\a} = \sum_{{\bm{\eta}}} \ketbra{{\bm{\eta}}}{{\bm{\eta}}} \otimes R_X(\theta_{\bm{\eta},\bm \a}),
\end{equation}
where $\ket{{\bm{\eta}}} = \ket{\eta_1} \otimes \cdots \otimes \ket{\eta_d}$ and $\theta_{\bm \eta,\bm \a}=2\arccos(\xi_{\bm\eta,\bm\a})$. It gives the following lemma.

\begin{lemma}\label{lem:taylor_coeff_PQC}
	Given a $\beta$-H{\"o}lder smooth function $f:[0,1]^d \to [-1,1]$, for any $\bm{\a} \in \NN^d$ and ${\bm{\eta}} \in \{0, 1, \ldots, K-1\}^d$, there exists a PQC $U_{co}^{\bm\a}$ such that
	\begin{equation}
		\bra{{\bm{\eta}}, 0} U_{co}^{\bm\a} \ket{{\bm{\eta}}, 0} = \xi_{{\bm{\eta}}, \bm \alpha}.
	\end{equation}
	The width of the PQC is $O(d\log K)$, and the depth is $O(K^d)$.
\end{lemma}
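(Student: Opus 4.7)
The plan is to verify the claimed identity by direct computation and then count resources using the standard decomposition of a uniformly controlled single-qubit rotation. First I would observe that the control register $\ket{\bm\eta}$ is encoded in $d$ blocks of $\log K$ qubits (one block per coordinate $\eta_i \in \{0,\ldots,K-1\}$), so that the states $\{\ket{\bm\eta}\}$ form an orthonormal subset of the $d\log K$-qubit computational basis. Acting on $\ket{\bm\eta}\ket{0}$, the operator $U_{co}^{\bm\a}=\sum_{\bm\eta'} \ketbra{\bm\eta'}{\bm\eta'} \otimes R_X(\theta_{\bm\eta',\bm\a})$ kills every summand except $\bm\eta'=\bm\eta$, leaving $\ket{\bm\eta}\otimes R_X(\theta_{\bm\eta,\bm\a})\ket{0}$. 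Taking the inner product with $\bra{\bm\eta}\bra{0}$ and using $R_X(\theta)\ket{0}=\cos(\theta/2)\ket{0}-i\sin(\theta/2)\ket{1}$ with the choice $\theta_{\bm\eta,\bm\a}=2\arccos(\xi_{\bm\eta,\bm\a})$ gives $\bra{\bm\eta,0}U_{co}^{\bm\a}\ket{\bm\eta,0}=\cos(\arccos(\xi_{\bm\eta,\bm\a}))=\xi_{\bm\eta,\bm\a}$, as required. This is valid because $\xi_{\bm\eta,\bm\a}\in[-1,1]$ for any function with $\norm{\partial^{\bm\a}f}_\infty\le 1$, so $\arccos$ is well defined.

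Next, the width follows immediately: the control register uses $d\log K$ qubits and the target uses one qubit, totaling $O(d\log K)$. For the depth, I would recognize $U_{co}^{\bm\a}$ as a uniformly controlled single-qubit rotation on $d\log K$ control qubits and one target, with the rotation angle depending on the basis state of the control register. Applying the standard decomposition of such a uniformly controlled rotation (e.g.\ the Möttönen-Vartiainen-Bergholm-Salomaa scheme) yields a circuit consisting of $2^{d\log K}=K^d$ single-qubit $R_X$ rotations on the target interleaved with $K^d$ CNOT gates whose controls range over the control register, for a total depth $O(K^d)$. The $K^d$ parameters of this decomposition are obtained by a fixed linear (Gray-code) transformation of the angles $\{\theta_{\bm\eta,\bm\a}\}_{\bm\eta}$, which preserves the functional form and does not affect the depth bound.

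The main obstacle I anticipate is in the depth step: a naive implementation that applies, for each $\bm\eta$, a $(d\log K)$-controlled $R_X$ gate (after conjugating by $X$ on the appropriate control qubits) gives depth $O(K^d\cdot d\log K)$ using the linear-depth multi-controlled decomposition of \cite{dasilva2022lineardepth}, which is too large. The key point is to avoid this overhead by instead treating the entire operator as a single uniformly controlled rotation and invoking its dedicated $O(K^d)$-gate decomposition, so that the $d\log K$ factor is absorbed into a constant contribution per elementary gate rather than multiplying the gate count. Once this observation is in place, all the resource bounds in the lemma follow directly.
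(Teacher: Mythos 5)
Your verification of the identity and the width count coincide with the paper's (the paper in fact offers no proof of this lemma at all; it simply displays the operator $U_{co}^{\bm\a} = \sum_{\bm\eta} \ketbra{\bm\eta}{\bm\eta} \otimes R_X(\theta_{\bm\eta,\bm\a})$ with $\theta_{\bm\eta,\bm\a}=2\arccos(\xi_{\bm\eta,\bm\a})$ and asserts the resource bounds). Where you genuinely depart from the paper is in the depth accounting, and your route is the stronger one. Everywhere else in the supplement the authors decompose multi-controlled gates via the generic linear-depth construction of \cite{dasilva2022lineardepth}, under which $U_{co}^{\bm\a}$ — viewed as $K^d$ separate $(d\log K)$-controlled rotations — costs depth $O(K^d\, d\log K)$; indeed, in the proof of the subsequent lemma on the Taylor series the authors charge exactly $O(TK^d(\log T + d\log K))$ for these gates, which is consistent with the naive count but not with the $O(K^d)$ stated in this lemma. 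Your observation that the whole operator is a single uniformly controlled rotation, decomposable à la M\"ott\"onen--Vartiainen--Bergholm--Salomaa into $2^{d\log K}=K^d$ elementary rotations interleaved with $K^d$ two-qubit gates (with the angles obtained from $\{\theta_{\bm\eta,\bm\a}\}$ by a fixed Gray-code linear transform), is what actually delivers the claimed $O(K^d)$ depth. One small technical point: for an $R_X$ target the sign-flip conjugation in that scheme uses $Z$-type controls (CZ, or CNOT dressed with Hadamards) rather than bare CNOTs, but this changes nothing in the asymptotics. In short, your proof is correct and repairs a bound the paper asserts but does not justify; the discrepancy is harmless downstream because the final depth in the H\"older approximation theorem already carries a $d\log K$ factor from the LCU stage.
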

We note that the state $\ket{{\bm{\eta}}}$ can be prepared using basis encoding according to the results of localization in \cref{lem:localization_PQC}.

\subsection{Implementing multivariate Taylor series by PQC}
To implement the multivariate Taylor expansion of a function, we first build a PQC to represent a single term in the Taylor series, which could be done by combining the monomial implementation in \cref{lem:monomial_PQC} and the Taylor coefficient implementation in \cref{lem:taylor_coeff_PQC}. Thus, we have the following corollary.
\begin{corollary}\label{cor:taylor_term}
	For any $\beta$-H{\"o}lder smooth function $f$, given an $\bm{\a} \in \NN^d$ with $\norm{\bm{\a}}_1 \leq s$ for $s\in\NN^{+}$ and an ${\bm{\eta}} \in \{0, 1, \ldots, K-1\}^d$, there exists a PQC $U^{\bm\a}_{\bm\eta}(\bm{x})$ such that
	\begin{equation}
		\bra{\bm\eta, 0}\!\bra{+}^{\otimes d} U^{\bm\a}_{\bm\eta}(\bm{x}) \ket{\bm\eta, 0}\!\ket{+}^{\otimes d} = \frac{\partial^{\bm{\a}} f(\frac{\bm{\eta}}{K})}{\bm\a !} \bigl(\bm{x} - \frac{\bm{\eta}}{K}\bigr)^{\bm{\a}}.
	\end{equation}
	The width of the PQC is $O(d\log K)$, the depth is $O(K^d + s)$, and the number of parameters is at most $K^d+s+d$.
\end{corollary}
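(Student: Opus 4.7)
The plan is to construct $U^{\bm{\a}}_{\bm{\eta}}(\bm{x})$ as a tensor product of two PQCs acting on disjoint registers: one that produces the Taylor coefficient $\xi_{\bm{\eta},\bm{\a}}=\partial^{\bm{\a}}f(\bm{\eta}/K)/\bm{\a}!$ and one that produces the shifted monomial $(\bm{x}-\bm{\eta}/K)^{\bm{\a}}$. The factorisation of the measurement expectation over a tensor product then yields their product automatically, which is exactly the desired Taylor term.

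First, I would invoke \cref{lem:taylor_coeff_PQC} on the coefficient register (of width $O(d\log K)$ plus one ancilla) to obtain a PQC $U_{co}^{\bm{\a}}$ with $\bra{\bm{\eta},0}U_{co}^{\bm{\a}}\ket{\bm{\eta},0}=\xi_{\bm{\eta},\bm{\a}}$. Next, I would invoke \cref{lem:monomial_PQC} with coefficient $c_{\bm{\a}}=1$ and input $\bm{y}=\bm{x}-\bm{\eta}/K$, noting that whenever $\bm{x}\in Q_{\bm{\eta}}$ we have $y_j\in[0,1/K]\subset[0,1]$, so the monomial $\bm{y}^{\bm{\a}}$ is bounded by $1$ and the hypothesis of the lemma is satisfied. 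This yields a PQC $U^{\bm{\a}}(\bm{x}-\bm{\eta}/K)$ of width at most $d$, depth at most $2s+1$, and at most $s+d$ parameters such that $\bra{+}^{\otimes d}U^{\bm{\a}}(\bm{x}-\bm{\eta}/K)\ket{+}^{\otimes d}=(\bm{x}-\bm{\eta}/K)^{\bm{\a}}$. The shift $\bm{x}\mapsto \bm{x}-\bm{\eta}/K$ is harmless because the data-encoding gates $S(\cdot)$ take the numerical value at gate-construction time; $\bm{\eta}/K$ is a fixed classical offset determined by the chosen hypercube.

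Defining $U^{\bm{\a}}_{\bm{\eta}}(\bm{x})\coloneqq U_{co}^{\bm{\a}}\otimes U^{\bm{\a}}(\bm{x}-\bm{\eta}/K)$ on the two disjoint registers, the expectation factorises as
\begin{equation}
\bra{\bm{\eta},0}\!\bra{+}^{\otimes d} U^{\bm{\a}}_{\bm{\eta}}(\bm{x}) \ket{\bm{\eta},0}\!\ket{+}^{\otimes d}
= \bra{\bm{\eta},0}U_{co}^{\bm{\a}}\ket{\bm{\eta},0}\,\bra{+}^{\otimes d}U^{\bm{\a}}(\bm{x}-\bm{\eta}/K)\ket{+}^{\otimes d}
= \xi_{\bm{\eta},\bm{\a}}\,(\bm{x}-\bm{\eta}/K)^{\bm{\a}},
\end{equation}
which is the asserted identity. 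For the resource bounds, widths add across the tensor product, giving $O(d\log K)+d=O(d\log K)$; depths run in parallel, giving $\max\{O(K^d),2s+1\}=O(K^d+s)$; and parameter counts add, yielding $O(K^d)+(s+d)\le K^d+s+d$, matching the claim.

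I do not anticipate a serious obstacle: the two main ingredients are already in place as \cref{lem:monomial_PQC} and \cref{lem:taylor_coeff_PQC}, and the tensor-product structure keeps the two registers decoupled so no interference or cross-terms arise. The only point deserving care is confirming that the shifted monomial stays in the domain $[0,1]^d$ on which \cref{lem:monomial_PQC} applies; this holds precisely because the statement concerns inputs localised in $Q_{\bm{\eta}}$, and the side length $1/K-\Delta\cdot 1_{\eta_j<K-1}\le 1/K\le 1$ keeps $\bm{x}-\bm{\eta}/K$ inside the unit cube.
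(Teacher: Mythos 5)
Your proposal is correct and follows essentially the same route as the paper: the paper's proof likewise defines $U^{\bm\a}_{\bm\eta}(\bm{x}) \coloneqq U_{co}^{\bm{\a}} \otimes U^{\bm\a}(\bm{x} - \frac{\bm\eta}{K})$ and lets the claim follow directly from \cref{lem:monomial_PQC} and \cref{lem:taylor_coeff_PQC}. Your additional checks (the factorisation of the expectation over the tensor product, the resource accounting, and the verification that the shifted input stays in the domain where \cref{lem:monomial_PQC} applies) are details the paper leaves implicit, and they are all sound.
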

\begin{proof}
	Let $U^{\bm\a}_{\bm\eta}(\bm{x}) \coloneqq U_{co}^{\bm{\a}} \otimes U^{\bm\a}(\bm{x} - \frac{\bm{\eta}}{K})$, where $U_{co}^{\bm{\a}}$ is defined in \cref{lem:taylor_coeff_PQC} and $U^{\bm\a}(\bm{x} - \frac{\bm{\eta}}{K})$ is defined in \cref{lem:monomial_PQC} with changing input from $\bm{x}$ to $\bm{x} - \frac{\bm\eta}{K}$. Then the corollary directly follows from \cref{lem:monomial_PQC} and \cref{lem:taylor_coeff_PQC}.
\end{proof}

The next step is to combine single Taylor terms together to implement the truncated Taylor expansion of the target function. The method is in the same spirit as what is utilized in~\cref{prop:pqc_poly}, i.e., using LCU to achieve the following (unnormalized) operator,
\begin{equation}
	U_t(\bm{x}) \coloneqq \sum_{\norm{\bm\a}_1 \leq s} U^{\bm\a}_{\bm\eta}(\bm{x}).
\end{equation}
Then we can implement the Taylor expansion of the function $f$ at point $\frac{\bm\eta}{K}$ as
\begin{equation}
	\bra{\bm\eta, 0}\!\bra{+}^{\otimes d} U_t(\bm{x}) \ket{\bm\eta, 0}\!\ket{+}^{\otimes d} = \sum_{\norm{\bm\a}_1 \leq s} \frac{\partial^{\bm{\a}} f(\frac{\bm{\eta}}{K})}{\bm\a !} \bigl(\bm{x} - \frac{\bm{\eta}}{K}\bigr)^{\bm{\a}}.
\end{equation}
Hence we have the following lemma.

\begin{lemma}\label{lem:taylor_series_PQC}
	Given a function $f \in \cH^\beta([0,1]^d, 1)$ with $\beta = r + s$, $r\in(0,1]$ and $s\in \NN^{+}$, for any $\bm\eta \in \{0, \ldots, K-1\}^d$, there exists a PQC $W_e(\bm{x}, \frac{\bm{\eta}}{K})$ such that $f_{W_e}(\bm{x}) \coloneqq \bra{0} W^\dagger_{e}(\bm{x}) Z^{(0)} W_{e}(\bm{x}) \ket{0}$ implements the truncated Taylor expansion at point $\frac{\bm\eta}{K}$,
	\begin{equation}
		f_{W_e}(\bm{x}) = \sum_{\norm{\bm\a}_1 \leq s} \frac{\partial^{\bm{\a}} f(\frac{\bm{\eta}}{K})}{\a !} \bigl(\bm{x} - \frac{\bm{\eta}}{K}\bigr)^{\bm{\a}}.
	\end{equation}
	The depth of the PQC is $O(s^2 d^sK^d(\log s + s\log d + d\log K))$, the width is $O(d\log K+\log s + s\log d)$, and the number of parameters is $O(sd^s(s+d+K^d))$.
\end{lemma}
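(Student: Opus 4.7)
The plan is to aggregate the single-Taylor-term PQCs from Corollary S1 through a linear combination of unitaries (LCU), and extract the truncated Taylor polynomial as an expectation value via a Hadamard test, mirroring the strategy of Proposition 1.

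First, for every multi-index $\bm\a \in \NN^d$ with $\norm{\bm\a}_1 \leq s$, Corollary S1 yields a PQC $U^{\bm\a}_{\bm\eta}(\bm x)$ whose matrix element on $\ket{\bm\eta,0}\ket{+}^{\otimes d}$ equals the Taylor term $(\partial^{\bm\a} f(\bm\eta/K)/\bm\a!)(\bm x - \bm\eta/K)^{\bm\a}$. The number of such indices is $T = \binom{s+d}{d} \leq (s+1)d^s$, and on $Q_{\bm\eta}$ the shifted inputs $\bm x - \bm\eta/K$ lie in $[0,1/K]^d \subseteq [0,1]^d$, so Lemma S1's monomial construction applies unchanged. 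Next I perform LCU exactly as in Proposition 1: prepare $\lceil \log_2 T \rceil$ ancillas in the uniform superposition through Hadamards, apply the controlled sum $U_c = \sum_j \ketbra{j}{j} \otimes U^{\bm\a^{(j)}}_{\bm\eta}(\bm x)$, and uncompute the Hadamards; each multi-controlled gate decomposes into CNOTs and single-qubit rotations with logarithmic depth overhead using no additional ancillas. Wrapping this block encoding in a Hadamard test and measuring $Z^{(0)}$ then gives $f_{W_e}(\bm x) = \bra{0} W_e^\dagger Z^{(0)} W_e \ket{0}$ equal to the truncated Taylor expansion, once the computational-basis register $\ket{\bm\eta}$ is supplied by the localization output of Lemma S6.

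The main delicate point is reconciling the $1/T$ normalization that LCU introduces with the $\abs{p}\leq 1$ constraint imposed by quantum signal processing. The H{\"o}lder class assumption $f \in \cH^\beta([0,1]^d,1)$ forces each coefficient $\xi_{\bm\eta,\bm\a} = \partial^{\bm\a} f(\bm\eta/K)/\bm\a!$ into $[-1,1]$, while $\bm x - \bm\eta/K$ has $\ell_\infty$ norm at most $1/K$ on $Q_{\bm\eta}$, so the truncated Taylor polynomial is bounded in modulus. Hence the $1/T$ factor is absorbed directly into the rotation angles $\theta_{\bm\eta,\bm\a}$ programmed into $U^{\bm\a}_{co}$, producing the Taylor polynomial itself rather than a rescaled version.

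For the resource count, the width combines the $d\log K$ qubits of $U^{\bm\a}_{co}$'s coefficient register, the $d$ qubits of the monomial register, and the $O(\log T) = O(\log s + s\log d)$ LCU ancillas. Each $U^{\bm\a}_{\bm\eta}$ has depth $O(K^d + s)$; its $(\log T)$-controlled version picks up a factor $O(d\log K + \log T)$ from decomposing large controlled gates, and concatenating the $T$ such blocks yields the claimed total depth $O(s^2 d^s K^d(\log s + s\log d + d\log K))$. The $T$ blocks contribute $O(K^d + s + d)$ trainable angles each, giving the stated parameter count $O(sd^s(s+d+K^d))$.
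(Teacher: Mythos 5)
Your proposal is correct and takes essentially the same approach as the paper's proof: you apply LCU to the single-Taylor-term PQCs $U^{\bm\a}_{\bm\eta}(\bm{x})$ of \cref{cor:taylor_term}, wrap the result in a Hadamard test, and count resources via the $T \leq (s+1)d^s$ terms and the linear-depth decomposition of multi-controlled gates, exactly as in the paper's adaptation of \cref{prop:pqc_poly}. Your explicit discussion of the $1/T$ LCU normalization is in fact more candid than the paper's (which silently identifies the normalized and unnormalized operators $U_t$), though neither treatment fully verifies that the rescaled constant Taylor term still satisfies the $\abs{p}\leq 1$ constraint of \cref{cor:qsp}; this is an issue inherited from \cref{prop:pqc_poly} rather than a defect specific to your argument.
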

\begin{proof}
	The idea of constructing the PQC $W_e(\bm{x}, \frac{\bm\eta}{K})$ is similar to the construction of $W_p(\bm{x})$ in \cref{prop:pqc_poly}. The only difference is that here we apply LCU on unitaries $U^{\bm\a}_{\bm\eta}(\bm{x}) \coloneqq U_{co}^{\bm{\a}} \otimes U^{\bm\a}(\bm{x} - \frac{\bm{\eta}}{K})$ instead of $U^{\bm\a}(\bm{x})$. Thus, the controlled unitary is \begin{equation}
		U_c\bigl(\bm{x}, \frac{\bm\eta}{K}\bigr) = \sum_{j=1}^T \ketbra{j}{j} \otimes U^{\bm\a^{(j)}}_{\bm\eta}(\bm{x})
	\end{equation}
	and the unitary $W_{lcu}(\bm{x}, \frac{\bm\eta}{K}) = (F^\dagger\otimes I) U_c(\bm{x}, \frac{\bm\eta}{K}) (F \otimes I)$ satisfies that
	
	\begin{equation}
		\bra{0}\!\bra{\bm\eta, 0}\!\bra{+}^{\otimes d} W_{lcu}\bigl(\bm{x}, \frac{\bm\eta}{K}\bigr)\ket{0}\! \ket{\bm\eta, 0}\!\ket{+}^{\otimes d} = \sum_{\norm{\bm\a}_1 \leq s} \frac{\partial^{\bm{\a}} f(\frac{\bm{\eta}}{K})}{\bm\a !} (\bm{x} - \frac{\bm{\eta}}{K})^{\bm{\a}}.
	\end{equation}
	We then apply the Hadamard test on $W_{lcu}(\bm{x}, \frac{\bm\eta}{K})$, giving the quantum circuit $W_e(\bm{x}, \frac{\bm\eta}{K})$ as below
	\[
	\Qcircuit @C=1em @R=0.5em {
		\lstick{\ket{0}} & \qw & \gate{H} & \qw  & \ctrl{1} & \qw & \gate{H} & \qw \\
		\lstick{\ket{0}} & {/} \qw  & \qw & \qw &  \multigate{3}{W_{lcu}} & \qw & \qw & \qw \\
		\lstick{\ket{0}} & {/} \qw  & \gate{U(\bm\eta)} & \qw &  \ghost{W_{lcu}} & \qw & \qw & \qw \\
		\lstick{\ket{0}} & \qw  & \qw & \qw &  \ghost{W_{lcu}} & \qw & \qw & \qw \\
		\lstick{\ket{0}} & {/} \qw & \gate{H^{\otimes d}} & \qw  & \ghost{W_{lcu}} & \qw & \qw & \qw
	}
	\]
	where the unitary $U(\bm\eta)$ takes $\bm\eta$ as input and maps $\ket{0}$ to $\ket{\bm\eta}$. Note that the controlled unitary $U_c(\bm{x}, \frac{\bm\eta}{K})$ could be implemented by $O(T(s+1))$ number of $(\log T)$-qubit controlled gates and $O(TK^d)$ number of $(\log T + d\log K)$-qubit controlled gates. An $n$-qubit controlled gate could be implemented by a quantum circuit consisting of $\CNOT$ gates and single-qubit gates with depth $O(n)$~\cite{dasilva2022lineardepth}. Thus $U_c(\bm{x})$ could be implemented by a quantum circuit with depth $O((s+1)T\log T + TK^d (\log T + d\log K))$ and width $O(d + \log T + d\log K)$. Then the depth and width of $W_{lcu}(\bm{x}, \frac{\bm\eta}{K}) = (F^\dagger\otimes I) U_c(\bm{x}, \frac{\bm\eta}{K}) (F \otimes I)$ are in the same order of $U_c(\bm{x}, \frac{\bm\eta}{K})$ since $F$ is simply tensor of Hadamard gates. Therefore the entire depth of the circuit $W_e$ is $O((sT\log T + TK^d (\log T + d\log K)))$ and the width is $O(d + \log T + d\log K)$. As $T \leq (s+1)d^s$, we have the depth and width of PQC shown in~\cref{lem:taylor_series_PQC}. Note that the number of parameters in the PQC equals the number of parameters in $U_c(\bm{x})$, which is $O(T(s+d + K^d))$.
\end{proof}

Finally, we combine the steps of localization and the Taylor series implementation to achieve a local Taylor expansion for the target function. The PQC is in a nested structure consisting of a PQC for localization and a PQC for Taylor series; see the detailed construction in the following theorem.
\renewcommand\thetheorem{\ref{thm:approx_holder}}
\setcounter{theorem}{\arabic{theorem}-1}
\begin{theorem}
    Given a function $f \in \cH^\beta([0,1]^d, 1)$ with $\beta = r + s$, $r\in(0,1]$ and $s\in \NN^{+}$, for any $K\in\NN$ and $\Delta\in(0, \frac{1}{3K})$, there exists a PQC $W_{t}(\bm{x})$ such that $f_{W_t}(\bm{x}) \coloneqq \bra{0} W^\dagger_{t}(\bm{x}) Z^{(0)} W_{t}(\bm{x}) \ket{0}$ satisfies
    \begin{equation}
        \abs{f(\bm{x}) - f_{W_t}(\bm{x})} \leq d^{s+\beta/2}K^{-\beta}
    \end{equation}
    for $\bm{x} \in \bigcup_{{\bm{\eta}}} Q_{\bm{\eta}}$. The width of the PQC is $O(d\log K + \log s + s\log d)$, the depth is $O(s^2 d^sK^d(\log s + s\log d + d\log K)) + \frac{1}{\Delta}\log K)$, and the number of parameters is $O(sd^s(s+d+K^d) + \frac{d}{\Delta} \log K)$.
\end{theorem}
\renewcommand{\thetheorem}{S\arabic{theorem}}
\begin{proof}
	By \cref{lemS:Taylor_expansion_holder}, we have the following error bound for $\bm{x} \in Q_{\bm{\eta}}$,
	\begin{equation}\label{eqn:taylor_error}
		\abs[\Big]{f(\bm{x}) - \sum_{ \norm{\bm\a}_1 \leq s} \frac{\partial^{\bm\a} f(\frac{{\bm{\eta}}}{K})}{\bm\a !} (\bm{x}-\frac{{\bm{\eta}}}{K})^{\bm\a}} \leq d^s \norm[\big]{\bm{x}-\frac{{\bm{\eta}}}{K}}^\beta_2 \leq d^{s+\beta/2}K^{-\beta}.
	\end{equation}
	Motivated by this, we first construct a localization PQC $W_D(x)$ as in \cref{lem:localization_PQC} such that
	\begin{equation}
		\bm{0} \leq f_{W_D}(\bm{x}) - \frac{\bm\eta}{\bm{K}} \leq \Bigl(\frac{1}{2K}, \ldots, \frac{1}{2K}\Bigr) \quad \text{if $\bm{x} \in Q_{\bm{\eta}}$}.
	\end{equation}
	The depth of $W_D(x)$ is $O(\frac{1}{\Delta}\log K)$. We then construct a PQC
	\begin{equation}
		W_t(\bm{x}) \coloneqq W_e(\bm{x}, f_{W_D}(\bm{x})),
	\end{equation}
	where $W_e$ is the PQC proposed in \cref{lem:taylor_series_PQC}. Note that the state $\ket{\bm\eta}$ in \cref{lem:taylor_series_PQC} could be prepared by rounding $f_{W_D}(\bm\eta) K$, i.e., $\bm\eta = \floor{f_{W_D}(\bm\eta) K}$. In other words, the PQC $W_t(\bm{x})$ has a nested structure consisting of a PQC for localization and a PQC for Taylor series implementation. Then we show that $f_{W_t}(\bm{x}) \coloneqq \bra{0} W^\dagger_{t}(\bm{x}) Z^{(0)} W_{t}(\bm{x}) \ket{0}$ can approximate $\beta$-H{\"o}lder smooth function $f$ on $\bigcup_{\bm{\eta}} Q_{\bm{\eta}}$. By the triangle inequality and \cref{eqn:taylor_error}, we have
	\begin{align}
		\abs{f(\bm{x}) - f_{W_t}(\bm{x})} &\leq \abs[\Big]{f_{W_t}(\bm{x})  - \sum_{\norm{\bm\a}_1 \leq s} \frac{\partial^{\bm\a} f(f_{W_D}(\bm{x}))}{\bm\a !} (x-f_{W_D}(\bm{x}))^{\bm\a}} + d^s\norm{\bm{x} - f_{W_D}(\bm{x})}_2^\beta\\
		&\leq \abs[\Big]{f_{W_t}(\bm{x})  - \sum_{\norm{\bm\a}_1 \leq s} \frac{\partial^{\bm\a} f(f_{W_D}(\bm{x}))}{\bm\a !} (x-f_{W_D}(\bm{x}))^{\bm\a}} + d^{s+\beta/2}K^{-\beta}\\
		&\leq d^{s+\beta/2}K^{-\beta}.
	\end{align}
	The second inequality comes from the fact that $||\bm{x} - f_{W_D}(\bm{x})||_2 \leq \frac{1}{K}$ for $\bm{x} \in Q_{\bm{\eta}}$. This completes the proof.
\end{proof}

Note that the PQC in \cref{thm:approx_holder} is nesting of two PQCs, while its depth is counted as the sum of two PQCs for simplicity. We have established the uniform convergence property of PQCs for approximating $\holder$ smooth function on $[0, 1]^d$ except for the trifling region $\Lambda(d, K, \Delta)$. Note that the Lebesgue measure of such a trifling region is no more than $dK\Delta$. We can set $\Delta=K^{-d}$ with no influence on the size of the constructed PQC in \cref{thm:approx_holder}. Since $\nu$ is absolutely continuous with respect to the Lebesgue measure, we have the following corollary.
\begin{corollary}
	Given a function $f \in \cH^\beta([0,1]^d, 1)$ with $\beta = r + s$, $r\in(0,1]$ and $s\in \NN^{+}$, for any $K\in\NN$ and $\Delta\in(0, \frac{1}{3K})$, there exists a PQC $W_{t}(\bm{x})$ such that $f_{W_t}(\bm{x}) \coloneqq \bra{0} W^\dagger_{t}(\bm{x}) Z^{(0)} W_{t}(\bm{x}) \ket{0}$ satisfies
	\begin{align}
		\norm{f(\bm{x}) - f_{W_t}(\bm{x})}^2_{L^2(v)} &= \int_{[0,1]^d} (f(\bm{x}) - f_{W_t}(\bm{x}))^2 \nu(x) \odif{x}\\
		&=\int_{\cup_{\bm\eta}Q_{\bm{\eta}} \bigcup \Lambda(d, K,\Delta)} (f(\bm{x}) - f_{W_t}(\bm{x}))^2 \nu(x) \odif{x}\\
		&\leq (d^{s+\beta/2}K^{-\beta})^2 + 4dK^{1-d}.
	\end{align}
    The width of the PQC is $O(d\log K + \log s + s\log d)$, the depth is $O(s^2 d^sK^d(\log s + s\log d + d\log K)) + \frac{1}{\Delta}\log K)$, and the number of parameters is $O(sd^s(s+d+K^d) + \frac{d}{\Delta} \log K)$.
\end{corollary}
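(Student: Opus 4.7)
My plan is to bootstrap from Theorem~\ref{thm:approx_holder} by splitting the $L^2(\nu)$ integral into contributions from the well-behaved region $\bigcup_{\bm\eta} Q_{\bm{\eta}}$ and from the trifling region $\Lambda(d, K, \Delta)$, handle each piece with a different tool, and then tune $\Delta$ so that the circuit complexity is unaffected.

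First I would invoke Theorem~\ref{thm:approx_holder} directly to control the integrand on $\bigcup_{\bm\eta} Q_{\bm\eta}$. There, pointwise we have $|f(\bm{x}) - f_{W_t}(\bm{x})| \leq d^{s+\beta/2}K^{-\beta}$, so squaring and integrating gives a contribution of at most $(d^{s+\beta/2}K^{-\beta})^2 \cdot \nu(\bigcup_{\bm\eta} Q_{\bm\eta}) \leq (d^{s+\beta/2}K^{-\beta})^2$ since $\nu$ is a probability measure (or at least bounded by Lebesgue measure after normalization). On the trifling region we cannot control $|f - f_{W_t}|$ pointwise, but we have the crude bound $|f(\bm{x}) - f_{W_t}(\bm{x})| \leq 2$ since both the target and the PQC output lie in $[-1, 1]$ (the output is $\bra{0} W_t^\dagger Z^{(0)} W_t \ket{0}$, which is always in $[-1,1]$, and $f \in \cH^\beta([0,1]^d, 1)$ is bounded by $1$ in supremum norm). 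Hence the trifling contribution is at most $4 \cdot \nu(\Lambda(d, K, \Delta))$.

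Next I would bound the Lebesgue measure of the trifling region. By definition, $\Lambda(d, K, \Delta) = [0,1]^d \setminus \bigcup_{\bm\eta} Q_{\bm\eta}$ is the union, over each coordinate direction $i$ and each $k < K-1$, of slabs of the form $\{\bm{x} : x_i \in [\frac{k+1}{K} - \Delta, \frac{k+1}{K}]\}$. Each such slab has Lebesgue measure $\Delta$, there are at most $d(K-1) \leq dK$ of them, and since $\nu$ is absolutely continuous with respect to the Lebesgue measure (with density bounded by $1$ in the normalized setting the authors implicitly adopt), we obtain $\nu(\Lambda(d, K, \Delta)) \leq dK\Delta$ by a union bound.

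Combining the two estimates yields
\begin{equation}
\norm{f - f_{W_t}}_{L^2(\nu)}^2 \leq (d^{s+\beta/2}K^{-\beta})^2 + 4dK\Delta.
\end{equation}
Finally I would set $\Delta = K^{-d}$, which plugs into the above to give the stated bound $(d^{s+\beta/2}K^{-\beta})^2 + 4dK^{1-d}$. The remaining task is to verify that this choice of $\Delta$ does not degrade the circuit complexity claimed in Theorem~\ref{thm:approx_holder}: the $\Delta$-dependent term in the depth is $\frac{1}{\Delta}\log K = K^d \log K$, which is absorbed by the dominant Taylor-implementation depth $O(s^2 d^s K^d(\log s + s\log d + d\log K))$, and the width and parameter count do not depend on $\Delta$ at all. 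The main (very mild) obstacle is confirming this last point and ensuring the crude $|f - f_{W_t}| \leq 2$ bound is justifiable; everything else is a clean decomposition of the integral.
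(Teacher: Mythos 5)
Your proposal is correct and matches the argument the paper intends: the paper states the corollary immediately after noting that the trifling region has Lebesgue measure at most $dK\Delta$, that one can set $\Delta = K^{-d}$ without affecting the circuit size, and that $\nu$ is absolutely continuous with respect to Lebesgue measure, which is exactly your decomposition into the pointwise bound of Theorem~\ref{thm:approx_holder} on $\bigcup_{\bm\eta} Q_{\bm\eta}$ plus the crude $\abs{f - f_{W_t}} \leq 2$ bound times $4dK\Delta$ on the trifling region. Your additional check that the $\frac{1}{\Delta}\log K = K^d\log K$ depth term is absorbed by the dominant Taylor-implementation depth is a worthwhile verification the paper leaves implicit.
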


\subsection{Comparison of ``global'' and ``local'' approaches in this work}
We note that we have presented two distinct methodologies for constructing PQC models with UAP properties aimed at approximating continuous functions. In \cref{thm:approx_Lipschitz} and \cref{thm:approx_holder}, we establish PQC models, guided by the multivariate Bernstein polynomials and the Taylor expansion of multivariate continuous functions, respectively. We categorize these approaches as ``local'' and ``global''. We proceed to conduct a comprehensive comparative analysis of these two strategies in the context of approximating Lipschitz continuous functions. For the subsequent analysis, we set $\beta=1$, thus $s=0$ in \cref{thm:approx_holder}, in accordance with the Lipschitz continuous property exhibited by the target function.

The approximation error associated with the global approach can be bounded as ${(2^dd \ell^2)}/{(n\varepsilon^2)}+\eps$. By selecting $n=(2^dd \ell^2)/{\varepsilon^3}$, we ensure an approximation error of $2\varepsilon$. Concurrently, the corresponding number of trainable parameters scale as $O\bigl(2^{d^2}d^{d+1}\ell^{2d}/\varepsilon^{3d}\bigr)$.
In contrast, the local approach exhibits an approximation error scaling as $\sqrt{d}K^{-1}+\varepsilon$. Setting $K=\sqrt{d}/\varepsilon$ ensures a $2\varepsilon$ approximation error, with the number of trainable parameters scaling as $O\left(d^{d/2}/\varepsilon^d\right)$. These findings highlight the advantage of the local approach for approximating continuous functions. More importantly, the approximation error proposed by the local method approaches the optimal convergence rate established in \citet{shen2022optimal}. A formal comparison between PQCs and classical deep neural networks is stated in the next section.

\section{Comparison with related works in classical machine learning}\label{Appendix:E}
\begin{table*}
\vspace{-0.66cm}
\renewcommand\arraystretch{1.2}
\begin{threeparttable}\caption{\label{tab:table11}\small{\textbf{Approximation errors of PQCs and ReLU FNNs}}}
		\setlength{\tabcolsep}{1.5mm}
		\renewcommand{\arraystretch}{1.5}
		\scriptsize{{
				\begin{tabular}{lllllll}
					\toprule
					\textbf{Approach} & \textbf{Target} & \textbf{Width} & \textbf{Depth} & \textbf{Number of parameters} & \textbf{Approximation error} \\
					\midrule
                    PQC & $d$-var.\ deg.-$s$ monomial & $O(d)$ & $O(s)$ & $O(d+s)$ & $0$ \\
				ReLU FNN~\cite{lu2021deep} & $d$-var.\ deg.-$s$ monomial & $O(N+s)$ & $O(s^2M)$ & $O((N^2+s^2)s^2M)$ & $O(sN^{-sM})$ \\ 
					\hline
				Nested PQC & $C_u^s([0,1]^d)$ & $O(d\log K+s\log d)$    & $O(K^dd^s)$ & $O(K^dd^{s+1})$    & $O(d^{2s}K^{-s})$ \\
				$\relu \text{ FNN}^\mathrm{i}$ ~\cite{lu2021deep} & $C_u^s([0,1]^d)$  & $O(s^{d+1}N)$ & $O(s^2M)$  & $O(s^{2d+4}K^{d/2}N)$ & $O(s^d8^sK^{-s})$  \\
					\bottomrule
				\end{tabular}
		}}
		\begin{tablenotes}[flushleft]
			\item [i] Satisfying $NM = \Theta(K^{d/2})$.
		\end{tablenotes}
	\end{threeparttable}
 \vspace{-0.56cm}
\end{table*}

In this subsection, we conduct a comparative exploration of PQCs and classical deep neural networks, focusing on critical aspects, including model size, the number of trainable parameters, and approximation error. To establish a meaningful benchmark, we turn our attention to deep feed-forward neural networks (FNNs) distinguished by the incorporation of rectified linear unit (ReLU) activation functions. FNNs represent the foundational class of neural networks, characterized by a unidirectional flow of information, commencing from the input layer and traversing through one or more hidden layers before culminating at the output layer. This architectural design ensures the absence of cyclic dependencies or loops among nodes within each layer. The ReLU activation function, mathematically defined as $\relu(x):=\max(x, 0)$, has gained prominence across diverse domains, including but not limited to image recognition~\cite{he2016deep, ren2015faster} and natural language processing~\cite{yang2019xlnet,devlin2019bert}. Its popularity in feed-forward networks stems from its efficacy in facilitating the convergence of function approximation during network training. Additionally, a recent study~\cite{zhang2023deep} has affirmed that classical neural networks employing commonly utilized activation functions can be effectively approximated by ReLU-activated networks while maintaining a mild increment in network size. Readers are also referred to some other excellent works related to ReLU
networks~\cite{yarotsky2017ErrorBoundsApproximations,Ingo2020Errorbounds,Johannes2020Nonparametricregression}.

In particular, \citet{shen2022optimal} have proposed the optimal approximation error to approximate any Lipschitz function. \citet{lu2021deep} have provided a nearly optimal approximation error to approximate any smooth function using $\relu$ FNNs. For clarity, the comparison of our results with theirs is summarized in Table.~\ref{tab:table11}. It is pertinent to observe that, in the majority of practical instances, the smoothness coefficient $s$ of the target function tends to be modest since most functions to be approximated is not very smooth. Additionally, within practical scenarios, particularly in domains like image recognition and natural language processing, the dimensionality $d$ of input data is substantially large. Consequently, within this context, we identify terms that solely rely on the variable $s$ as constants and $d\gg s$ within Table~\ref{tab:table11}.

We extend our investigation by quantifying the performance of PQCs and FNNs in terms of the model size and the number of parameters for approximating $s$-smooth functions $C^s_u([0,1]^d)$. Notably, we discover that in cases where the target function adheres to certain norms of smoothness, PQCs exhibit a notable improvement in approximating this function in terms of the model size and the number of parameters.

\noindent
\textbf{Model size.} In particular, we explore the comparison of PQC and FNN model sizes when they yield the same approximation error $\eps$ (say some constant). Here, we use a straightforward measure, the product of width and depth, to gauge the model size. By setting approximation error as $\eps$, the size of PQC and FNN scale as $O(K_Q^dd^{s+1})$ and $O(K_C^{d/2}s^{d+3})$, respectively, where $K_Q=\Theta(d^2/\eps^{1/s})$ and $K_C=\Theta(s^{d/s}/\eps^{1/s})$.

Remarkably, when $2\leq s < d$, an intriguing observation emerges: the ratio of model sizes between PQCs and FNNs~\cite{lu2021deep} exhibits a scaling behavior of $O(\eps^{-d/(2s)}/s^{d^2-d\log_sd})$. Our comprehensive analysis concludes that in situations where the smoothness threshold is satisfied, PQCs boast a significantly smaller model size compared to FNNs. 

\noindent
\textbf{Number of trainable parameters.} 
In the present investigation, we delve into the comparative analysis of the number of trainable parameters of PQC and FNN under the premise of yielding comparable approximation errors. From the perspective of approximation theory, the count of parameters serves as a standard metric for assessing model degrees of freedom and expressing model expressiveness. By setting approximation error as $\eps$, the number of trainable parameters of PQC and FNN scale as $O(K_Q^dd^{s+1})$ and $O(K_C^{(1+\lambda_0)d/2}s^{2d+4})$, respectively. Here, the hyperparameter $\lambda_0\in (0, 1)$ signifies FNN's width. 

Remarkably, through our analysis, we have uncovered that when $2\leq s< d$, the relationship between the number of trainable parameters of PQCs and FNNs~\cite{lu2021deep} demonstrates a scaling pattern characterized by $O(\eps^{-(1-\lambda_0)d/(2s)}/s^{(1+\lambda_0)d^2-d\log_sd})$. As a consequence, the number of trainable parameters of PQCs significantly reduces compared to that of FNNs. 

\noindent
\textbf{Approximating monomial.} 
Here, we conduct a comparative performance analysis of PQC and FNN in approximating monomial functions of degree $s$. Within this specialized target function space, PQCs exhibit distinct advantages in terms of width, depth, model size, and the number of trainable parameters. Notably, PQCs possess the unique capability to capture the dynamics of monomial functions precisely, eliminating the need for approximation and thereby offering a compelling advantage. These advantages position PQCs as promising candidates for outperforming FNNs when addressing more complex target function spaces.

\end{document}